\newtheorem{thm}{Theorem}
\newtheorem{corollary}[thm]{Corollary}
\newtheorem{definition}[thm]{Definition}
\newtheorem{lem}[thm]{Lemma}
\newtheorem{prop}[thm]{Proposition}
\newtheorem{rmk}[thm]{Remark}
\newcommand{\poly}{\text{poly}}
\newcommand{\norm}[2]{\left\|#1\right\|_{#2}}
\newcommand{\inner}[2]{\left\langle #1, #2 \right\rangle}
\newcommand{\abs}[1]{\left|#1\right|}
\newcommand{\E}{\mathbb{E}}
\newcommand{\pr}{\mathbb{P}}
\newcommand{\R}{\mathbb{R}}
\newcommand{\parens}[1]{\left(#1 \right)}
\newcommand{\eps}{\epsilon}
\newcommand{\tildeO}{\widetilde{O}}
\newcommand{\tildeTheta}{\widetilde{\Theta}}
\newcommand{\tildeOmega}{\widetilde{\Omega}}
\newcommand{\pinv}{\dagger}
\newcommand{\ptester}[1]{$\ell_{#1}$\text{-tester}}
\newcommand{\psdcone}[1]{\Delta^{#1}_{+}}
\newcommand{\True}{\text{True}}
\newcommand{\False}{\text{False}}
\renewcommand{\E}{\mathbb{E}}
\DeclareMathOperator{\Tr}{Tr}
\DeclareMathOperator{\diag}{diag}
\DeclareMathOperator{\rank}{rk}
\DeclareMathOperator{\argmin}{argmin}
\DeclareMathOperator{\spn}{span}
\DeclareMathOperator{\var}{Var}
\newcommand{\td}[1]{\todo[inline]{#1}}
\renewcommand{\td}[1]{}
\title{Testing Positive Semidefiniteness Using Linear Measurements 
\thanks{Authors DN and WS were partially supported by NSF DMS $\#$2011140 and NSF DMS $\#$2108479. DW was supported by NSF CCF $\#$1815840 and Office of Naval Research Grant N00014-18-1-2562.}
}
\author{Deanna Needell\\UCLA 
\and
William Swartworth
\\UCLA \and
David P. Woodruff\\CMU
}
\date{}
\begin{document}

\maketitle

\thispagestyle{empty}

\begin{abstract}
\td{Check the matrix-vector bounds for $p=2$}

We study the problem of testing whether a symmetric $d \times d$ input matrix $A$ is symmetric positive semidefinite (PSD), or is $\epsilon$-far from the PSD cone, meaning that $\lambda_{\min}(A) \leq - \epsilon \|A\|_p$, where $\|A\|_p$ is the Schatten-$p$ norm of $A$. In applications one often needs to quickly tell if an input matrix is PSD, and a small distance from the PSD cone may be tolerable. We consider two well-studied query models for measuring efficiency, namely, the matrix-vector and vector-matrix-vector query models. We first consider one-sided testers, which are testers that correctly classify any PSD input, but may fail on a non-PSD input with a tiny failure probability. Up to logarithmic factors, in the matrix-vector query model we show a tight $\tildeTheta(1/\epsilon^{p/(2p+1)})$ bound, while in the vector-matrix-vector query model we show a tight $\tildeTheta(d^{1-1/p}/\epsilon)$ bound, for every $p \geq 1$. We also show a strong separation between one-sided and two-sided testers in the vector-matrix-vector model, where a two-sided tester can fail on both PSD and non-PSD inputs with a tiny failure probability. In particular, for the important case of the Frobenius norm, we show that any one-sided tester requires $\tildeOmega(\sqrt{d}/\epsilon)$ queries.  However we introduce a bilinear sketch for two-sided testing from which we construct a Frobenius norm tester achieving the optimal $\tildeO(1/\epsilon^2)$ queries. We also give a number of additional separations between adaptive and non-adaptive testers. Our techniques have implications beyond testing, providing new methods to approximate the spectrum of a matrix with Frobenius norm error using dimensionality reduction in a way that preserves the signs of eigenvalues. 
\end{abstract}

\clearpage
\setcounter{page}{1}

\section{Introduction}
A real-valued matrix $A \in \mathbb{R}^{n \times n}$ is said to be Positive Semi-Definite (PSD) if it defines
a non-negative quadratic form, namely, if
$x^T A x \geq 0$ for all $x$. If $A$ is symmetric, the setting on which we focus,
this is equivalent to the eigenvalues of $A$ being non-negative. Multiple works \cite{ks03,han2017approximating,bakshi2020testing} have studied the problem of testing whether a real matrix is PSD, or is far from being PSD, and this testing problem has numerous applications, including to faster algorithms for linear systems and linear algebra problems, detecting the existence of community structure, ascertaining local convexity, and differential equations; we refer the reader to \cite{bakshi2020testing} and the references therein. 

We study two fundamental query models.  In the matrix-vector model, one is given implicit access to a matrix $A$ and may query $A$ by choosing a vector $v$ and receiving the vector $Av.$  In the vector-matrix-vector model one chooses a pair of vectors $(v,w)$ and queries the bilinear form associated to $A$. In other words the value of the query is $v^T A w$. In both models, multiple, adaptively-chosen queries can be made, and the goal is to minimize the number of queries to solve a certain task. These models are standard computational models in the numerical linear algebra community, see, e.g., \cite{han2017approximating} where PSD testing was studied in the matrix-vector query model. These models were recently formalized in the theoretical computer science community in \cite{sun2019querying,rashtchian2020vector}, though similar models have been studied in numerous fields, such as the number of measurements in compressed sensing, or the sketching dimension of a streaming algorithm.  The matrix-vector query and vector-matrix-vector query models are particularly relevant when the input matrix $A$ is not given explicitly.

A natural situation occurs when $A$ is presented implicitly as a the Hessian of a function $f: \R^d \rightarrow \R^d$ at a point $x_0$, where $f$ could be the loss function of a neural network for example. One might want to quickly distinguish between a proposed optimum of $f$ truly being a minimum, or being a saddle point with a direction of steep downward curvature.  Our query model is quite natural in this context. A Hessian-vector product is efficient to compute using automatic differentiation techniques.  A vector-matrix-vector product corresponds to a single second derivative computation, $D^2 f(v,w)$. This can be approximated using $4$ function queries by the finite difference approximation
$D^2 f(v,w) \approx \frac{f(x_0 + hv + hw) - f(x_0+hv) - f(x_0 + hw) + f(x_0)}{h^2},$
where $h$ is small.

While there are numerically stable methods for computing the spectrum of a symmetric matrix, and thus determining if it is PSD, these methods can be prohibitively slow for very large matrices, and require a large number of matrix-vector or vector-matrix-vector products. Our goal is to obtain significantly more efficient algorithms in these models, and we approach this problem from a property testing perspective. In particular, we focus on the following version of the PSD-testing problem. In what follows, $\|A\|_p = \left (\sum_{i=1}^n \sigma_i^p \right )^{1/p}$ is the Schatten-$p$ norm of $A$, where the $\sigma_i$ are the  singular values of $A$. 
\begin{definition}
\label{def:tester}
For $p\in [1,\infty]$, an $(\epsilon,\ell_p)$-tester is an algorithm that makes either matrix-vector or vector-matrix-vector queries to a real symmetric matrix $A$, and outputs \True\,with at least $2/3$ probability if $A$ is PSD, and outputs \False\,with $2/3$ probability if $A$ is $\epsilon\norm{A}{p}$-far in spectral distance from the PSD cone, or equivalently, if the minimum eigenvalue $\lambda_{\min}(A) \leq -\eps \norm{A}{p}$. If the tester is guaranteed to output \True\,on all PSD inputs (even if the input is generated by an adversary with access to the random coins of the tester), then the tester has one-sided error.  Otherwise it has two-sided error. When $\epsilon$ is clear from the context we will often drop the $\epsilon$ and simply refer to an  \ptester{p}. 
\end{definition}

Our work fits more broadly into the growing body of work on property testing for linear algebra problems, see, for example
\cite{balcan2019testing, bakshi2020testing,bmr21}. However, a key difference is that we focus on matrix-vector and vector-matrix-vector query models, which might be more appropriate than the model in the above works which charges a cost of $1$ for reading a single entry. Indeed, such models need to make the assumption that the entries of the input are bounded by a constant or slow-growing function of $n$, as otherwise strong impossibility results hold. This can severely limit the applicability of such algorithms to real-life matrices that do not have bounded entries; indeed, even a graph Laplacian matrix with a single degree that is large would not fit into the above models. 
In contrast, we use the matrix-vector and vector-matrix-vector models, which are ideally suited for modern machines such as graphics processing units and when the input matrix cannot fit into RAM, and are standard models in scientific computing, see, e.g., \cite{bai1996some}. 

While we focus on vector-matrix-vector queries, our results shed light on several other natural settings. Many of our results are in fact tight for general linear measurements which vectorize the input matrix and apply adaptively chosen linear forms to it.   
For long enough streams the best known single or multi-pass algorithms {\it for any problem} in the turnstile streaming model form a sketch using general linear measurements, and with some additional restrictions, it can be shown that the optimal multi-pass streaming algorithm just adaptively chooses general linear measurements \cite{ai2016new}. Therefore, it is quite plausible that many of our vector-matrix-vector algorithms give tight single pass streaming bounds, given that vector-matrix-vector queries are a special case of general linear measurements, and that many our lower bounds are tight even for general linear measurements. 

Moreover our vector-matrix-vector algorithms lead to efficient communication protocols for deciding whether a distributed sum of matrices is PSD, provided that exact vector-matrix-vector products may be communicated.  While we expect our methods to be stable under small perturbations (i.e. when the vector-matrix-vector products are slightly inexact), we leave the full communication complexity analysis to future work.

We note that our PSD-testing problem is closely related to that of approximating the largest eigenvalue of a PSD matrix. Indeed by appropriately negating and shifting the input matrix, it is essentially equivalent to estimate the largest eigenvalue of a PSD matrix $A$ up to additive error $\eps \parens{\sum_i\abs{\lambda_{\max}(A) - \lambda_i(A)}^p}^{1/p}.$  However this problem is much less natural as real-world matrices often have many small eigenvalues, but only a few large eigenvalues.

\subsection{Our Contributions}\label{sec:cont}

{
\renewcommand{\arraystretch}{1.5}

\td{check table for typos}
\begin{table}
\label{table_of_results}

\begin{center}
\begin{tabular}{ |p{5.8cm}||p{3.9cm}|p{5cm}| }
 \hline
 
 \rowcolor{lightgray}
 \multicolumn{3}{|c|}{\textbf{Vector-matrix-vector queries}} \\
 \hline
 Adaptive, one-sided $\ell_p$ & $\tildeTheta(\frac1\eps d^{1-1/p})$  & Corollary~\ref{cor:adaptive_vmv_upper_p}, Theorem~\ref{thm:one_sided_p_lower_bound} \\
 Non-adaptive, one-sided $\ell_p$ & $\tildeTheta(\frac{1}{\eps^2} d^{2 - 2/p})$ & Corollary~\ref{cor:andoni_nguyen_corollary_p}, Theorem~\ref{thm:nonadaptive_one_sided_vmv_lower} \\
 Adaptive, two-sided $\ell_2$ & $\tildeTheta(\frac{1}{\eps^2})^*$ & Proposition~\ref{thm:adaptive_l2_upper}, Corollary~\ref{cor:adaptive_two_sided_lower_bound} \\
 Non-adaptive, two-sided $\ell_2$ & $\tildeTheta(\frac{1}{\eps^{4}})^*$ & Theorem~\ref{thm:bilinear_sketch_guarantee}, Theorem~\ref{thm:nonadaptive_two_sided_vmv_lower_bound} \\
 Adaptive, two-sided $\ell_p,$ $2\leq p<\infty$ & $\tildeTheta(\frac{1}{\eps^2} d^{1-2/p})^*$ & Corollary~\ref{cor:adaptive_lp_upper}, Corollary~\ref{cor:adaptive_two_sided_lower_bound}\\
 
 \hline
 \rowcolor{lightgray}
 \multicolumn{3}{|c|}{\textbf{Matrix-vector queries}} \\
 \hline
Adaptive one-sided $\ell_p$ & $\tildeO((1/\eps)^{p/(2p+1)} \log d)$, $\Omega((1/\eps)^{p/(2p+1)})$  & Theorem~\ref{thm:adaptive_mv_upper}, Theorem~\ref{thm:adaptive_mv_lower} \\
Adaptive one-sided $\ell_1$ & $\tildeTheta((1/\eps)^{1/3})$ & Theorem~\ref{thm:adaptive_mv_upper}, Theorem~\ref{thm:adaptive_mv_lower} \\
 Non-adaptive one-sided $\ell_p$ & $\Theta(\frac{1}{\eps}d^{1-1/p})$ & Proposition~\ref{prop:nonadaptive_one_sided_mv_upper}, Corollary~\ref{cor:mv_one_sided_non_adaptive_lower} \\

 \hline
\end{tabular}
\caption{Our upper and lower bounds for the matrix-vector and vector-matrix-vector query models.  ${}^{*}$ indicates that the lower bound holds for general linear measurements.}
\end{center}
\end{table}
}


We study PSD-testing in the matrix-vector and vector-matrix-vector models.  In particular, given a real symmetric matrix $A,$ and $p\in [1,\infty]$, we are interested in deciding between (i) $A$ is PSD and (ii) $A$ has an eigenvalue less than $-\eps\norm{A}{p},$ where $\norm{A}{p}$ is the Schatten $p$-norm of $A.$

\paragraph{Tight Bounds for One-sided Testers.} We make particular note of the distinction between one-sided and two-sided testers. In some settings one is interested in a tester that produces one-sided error.  When such a tester outputs $\False$, it must be able to produce a proof that $A$ is not PSD.  The simplest such proof is a witness vector $v$ such that $v^T A v < 0$, and indeed we observe that in the matrix-vector model, any one-sided tester can produce such a $v$ when it outputs $\False.$  This may be a desirable feature if one wishes to apply these techniques to saddle point detection for example: given a point that is not a local minimum, it would be useful to produce a descent direction so that optimization may continue.  In the vector-matrix-vector model the situation is somewhat more complicated in general, but all of our one-sided testers produce a witness vector whenever they output \False.

We provide {\it optimal bounds} for one-sided testers for both matrix-vector and vector-matrix-vector models. The bounds below are stated for constant probability algorithms. Here $\tildeO(f) = f \cdot \poly(\log f)$.


\begin{enumerate} 

\item In the matrix-vector query model, we show that up to a factor of $\log d$, $\tildeTheta(1/\epsilon^{p/(2p+1)})$ queries are necessary and sufficient for an $\ell_p$-tester for any $p \geq 1$.  In the $p=1$ case, we note that the $\log d$ factor may be removed. 

\item In the vector-matrix-vector query model, we show that $\tildeTheta(d^{1-1/p}/\epsilon)$ queries are necessary and sufficient for an $\ell_p$-tester for any $p \geq 1$. Note that when $p = 1$ we obtain a very efficient $\tildeO(1/\epsilon)$-query algorithm. In particular, our tester for $p = 1$ has query complexity independent of the matrix dimensions, and we show a sharp phase transition for $p > 1$, showing in some sense that $p = 1$ is the largest value of $p$ possible for one-sided queries. 
\end{enumerate} 

The matrix-vector query complexity is very different than the vector-matrix-vector query complexity, as the query complexity is $\poly(1/\epsilon)$ for any $p \geq 1$, which captures the fact that each matrix-vector query response reveals more information than that of a vector-matrix-vector query, though a priori it was not clear that such responses in the matrix-vector model could not be compressed using vector-matrix-vector queries. 

\paragraph{An Optimal Bilinear Sketch for Two-Sided Testing.}

Our main technical contribution for two-sided testers is a bilinear sketch for PSD-testing with respect to the Frobenius norm, i.e. $p=2$.  We consider a Gaussian sketch $G^T A G$, where $G$ has small dimension $\tildeO(\frac{1}{\eps^2}).$  By looking at the smallest eigenvalue of the sketch, we are able to distinguish between $A$ being PSD and being $\eps$-far from PSD.  Notably this tester may reject even when $\lambda_{\min}(G^T A G) > 0$, which results in a two-sided error guarantee. This sketch allows us to obtain tight two-sided bounds in the vector-matrix-vector model for $p\geq 2$, both for adaptive and non-adaptive queries. 


\paragraph{Separation Between One-Sided and Two-Sided Testers.}  Surprisingly, we show a separation between one-sided and two-sided testers in the vector-matrix-vector model. For the important case of the Frobenius norm, i.e., $p=2$, we utilize our bilinear sketch to construct a $\tildeO(1/\eps^2)$ query two-sided tester, whereas by our results above, any adaptive one-sided tester requires at least $\Omega(\sqrt{d}/\epsilon)$ queries.  

We also show that for any $p > 2$, any possibly adaptive two-sided tester requires $d^{\Omega(1)}$ queries for constant $\epsilon$, and thus in some sense, $p = 2$ is the largest value of $p$ possible for two-sided queries.  

\paragraph{On the Importance of Adaptivity.}
We also study the role of adaptivity in both matrix-vector and vector-matrix-vector models.
In both the one-sided and two-sided vector-matrix-vector models we show a quadratic separation between adaptive and non-adaptive testers, which is the largest gap possible for any vector-matrix-vector problem \cite{sun2019querying}.


In the matrix-vector model, each query reveals more information about $A$ than in the vector-matrix-vector model, allowing for even better choices for future queries.  Thus we have an even larger gap between adaptive and non-adaptive testers in this setting. 


\paragraph{Spectrum Estimation.}
While the two-sided tester discussed above yields optimal bounds for PSD testing, it does not immediately give a way to estimate the negative eigenvalue when it exists.  Via a different approach, we show how to give such an approximation with $\eps \norm{A}{F}$ additive error.  In fact, we show how to approximate all of the top $k$ eigenvalues of $A$ using $O(k^2 \poly\frac{1}{\eps})$ non-adaptive vector-matrix-vector queries, which may be of independent interest.

We note that this gives an $O(k^2 \poly\frac{1}{\eps})$ space streaming algorithm for estimating the top $k$ eigenvalues of $A$ to within additive Frobenius error. Prior work yields a similar guarantee for the singular values \cite{andoni2013eigenvalues}, but cannot recover the signs of eigenvalues.

\subsection{Our Techniques}
\paragraph{Matrix-Vector Queries.}
For the case of adaptive matrix-vector queries, we show that Krylov iteration starting with a single random vector yields an optimal \ptester{p} for all $p$. Interestingly, our analysis is able to beat the usual Krylov matrix-vector query bound for approximating the top eigenvalue, as we modify the usual polynomial analyzed for eigenvalue estimation to implicitly implement a {\it deflation} step of all eigenvalues above a certain threshold. We do not need to explicitly know the values of the large eigenvalues in order to deflate them; rather, it suffices that there exists a low degree polynomial in the Krylov space that implements this deflation.

We note that this idea of implicitly deflating the top eigenvalues first appeared in \cite{axelsson1986rate}.  More recently this observation was applied in a setting very similar to ours by \cite{spielman2009note} who deflated eigenvalues that are large relative to the trace.

Further, we show that this technique is tight for all $p \geq 1$ by showing that any smaller number of matrix-vector products would violate a recent lower bound of \cite{braverman2020gradient} for approximating the smallest eigenvalue of a Wishart matrix. This lower bound applies even to two-sided testers.

\paragraph{Vector-Matrix-Vector Queries.} We start by describing our result for $p = 1$. We give one of the first examples of an algorithm in the vector-matrix-vector query model that leverages adaptivity in an interesting way. Most known algorithms in this model work non-adaptively, either by applying a bilinear sketch to the matrix, or by making many independent queries in the case of Hutchinson's trace estimator  \cite{hutchinson1989stochastic}. Indeed, the algorithm of \cite{andoni2013eigenvalues} works by computing $G^TAG$ for a Gaussian matrix $G$ with $1/\epsilon$ columns, and arguing that all eigenvalues that are at least $\epsilon \|A\|_1$ can be estimated from the sketch. The issue with this approach is that it uses $\Omega(1/\epsilon^2)$ queries and this bound is tight for non-adaptive testers! One could improve this by running our earlier matrix-vector algorithm on top of this sketch, without ever explicitly forming the $1/\epsilon \times 1/\epsilon$ matrix $G^TAG$; however, this would only give an $O(1/\epsilon^{4/3})$ query algorithm. 

To achieve our optimal $\tildeO(1/\epsilon)$ complexity, our algorithm instead performs a novel twist to Oja's algorithm \cite{oja1982simplified}, the latter being a stochastic gradient descent (SGD) algorithm applied to optimizing the quadratic form $f(x) = x^T A x$ over the sphere. In typical applications, the randomness of SGD arises via randomly sampling from a set of training data.  In our setting, we instead  artificially introduce randomness at each step, by computing the projection of the gradient onto a randomly chosen direction.  This idea is implemented via the iteration \begin{equation}
    x^{(k+1)} = x^{k} - \eta (g^T A x^{k})g \,\,\, \text{where}\,\, g\sim \mathcal{N}(0,I)
\end{equation}
for a well-chosen step size $\eta.$  If $f$ ever becomes negative before reaching the maximum number of iterations, then the algorithm outputs \False, otherwise it outputs \True.  For $p=1$, we show that this scheme results in an optimal tester (up to logarithmic factors). Our proof uses a second moment analysis to analyze a random walk, that is similar in style to \cite{jain2016streaming}, though our analysis is quite different. Whereas \cite{jain2016streaming} considers an arbitrary i.i.d. stream of unbiased estimators to $A$ (with bounded variance), our estimators are simply $g g^T A,$ which do not seem to have been considered before.  We leverage this special structure to obtain a better variance bound on the iterates throughout the first $\tildeO(1/\eps)$ iterations, where each iteration can be implemented with a single vector-matrix-vector query. Our algorithm and analysis gives a new method for the fundamental problem of approximating eigenvalues. 
    
    Our result for general $p > 1$ follows by relating the Schatten-$p$ norm to the Schatten-$1$ norm and invoking the algorithm above with a different setting of $\epsilon$. 
    We show our method is optimal by proving an $\Omega(d^{2-2/p}/\epsilon^2)$ lower bound for non-adaptive one-sided testers, and then using a theorem in \cite{rashtchian2020vector} which shows that adaptive one-sided testers can give at most a quadratic improvement. We note that one could instead use a recent streaming lower bound of \cite{inw22} to prove this lower bound, though such a lower bound would depend on the bit complexity.

\paragraph{Two-Sided Testers.}

The key technical ingredient behind all of our two-sided testers is a bilinear sketch for PSD-testing. Specifically, we show that a sketch of the form $G^T A G$ with $G\in R^{d\times k}$ is sufficient for obtaining a two-sided tester for $p=2$.  In contrast to the $p=1$ case, we do not simply output \False\,when $\lambda_{\min} := \lambda_{\min}(G^TAG) < 0$ as such an algorithm would automatically be one-sided.  Instead we require a criterion to detect when $\lambda_{\min}$ is suspiciously small.  For this we require two results.

The first is a concentration inequality for $\lambda_{\min}(G^TAG)$ when $A$ is PSD.  We show that $\lambda_{\min} \geq \Tr(A) - \tildeO(\sqrt{k})\norm{A}{F}$ with very good probability.  
This result is equivalent to bounding the smallest singular value of $A^{1/2}G$, which is a Gaussian matrix whose rows have different variances.  Although many similar bounds for constant variances exist in the literature \cite{litvak2005smallest,vershynin2018high}, we were not able to find a bound for general covariances.  In particular, most existing bounds do not seem to give the concentration around $\Tr(A)$ that we require.  


When $A$ has a negative eigenvalue of $-\eps$, we show that $\lambda_{\min} \leq \Tr(A) - \eps O(k)$.  By combining these two results, we are able to take $k=\tildeO(1/\eps^2)$, yielding a tight bound for non-adaptive testers in the vector-matrix-vector model. In fact this bound is even tight for general linear sketches, as we show by applying the results in \cite{li2016tight}.

We also utilize this bilinear sketch to give tight bounds for adaptive vector-matrix-vector queries, and indeed for general linear measurements.  By first (implicitly) applying the sketch, and then shifting by an appropriate multiple of the identity we are able to reduce to the $(\eps^2, \ell_1)$-testing problem, which as described above may solved using $\tildeO(1/\eps^2)$ queries.  



\paragraph{Spectrum Estimation.} A natural approach for approximating the eigenvalues of an $n \times n$ matrix $A$ is to first compute a sketch $G^T A G$ or a sketch $G^T A H$ for Gaussian matrices $G$ and $H$ with a small number of columns. Both of these sketches appear in \cite{andoni2013eigenvalues}. As noted above, $G^TAG$ is a useful non-adaptive sketch for spectrum approximation, but the error in approximating each eigenvalue is proportional to the Schatten-$1$ norm of $A$. One could instead try to make the error depend on the Frobenius norm $\|A\|_2$ of $A$ by instead computing $G^TA H$ for independent Gaussian matrices $G$ and $H$, but now $G^TA H$ is no longer symmetric and it is not clear how to extract the signs of the eigenvalues of $A$ from $G^TAH$. Indeed, \cite{andoni2013eigenvalues} are only able to show that the {\it singular values} of $G^TAH$ are approximately the same as those of $A$, up to additive $\epsilon \|A\|_2$ error. We thus need a new way to {\it preserve sign information of eigenvalues}. 


To do this, we show how to use results for providing the best PSD low rank approximation to an input matrix $A$, where $A$  need not be PSD and need not even be symmetric. In particular, in \cite{clarkson2017lowPSD} it was argued that if $G$ is a Gaussian matrix with $O(k/\epsilon)$ columns, then if one sets up the optimization problem $\min_{\textrm{rank k PSD} \ Y} \|AGYG^TA^T - A\|_F^2$, then the cost will be at most $(1+\epsilon)\|A_{k, +} - A\|_F^2$, where $A_{k, +}$ is the best rank-$k$ PSD approximation to $A$. By further sketching on the left and right with so-called {\it affine embeddings} $S$ and $T$, which have $\poly(k/\epsilon)$ rows and columns respectively, one can reduce this problem to $\min_{\textrm{rank k PSD}\ Y}\|SAGYG^TA^TT-SAT\|_F^2$, and now $SAG$, $G^TA^TT$ and $SAT$ are all $\poly(k/\epsilon) \times \poly(k/\epsilon)$ matrices so can be computed with a $\poly(k/\epsilon)$ number of vector-matrix-vector products. At this point the optimal $Y$ can be found with no additional queries and its cost can be evaluated.  By subtracting this cost from $\norm{A}{F}^2$, we approximate $\norm{A_{+,i}}{F}^2$, and $\norm{A_{-,i}}{F}^2$ for all $i\in [k]$, which in turn allows us to produce (signed) estimates for the eigenvalues of $A$.


When $A$ is PSD, we note that Theorem 1.2 in \cite{andoni2013eigenvalues} is able to reproduce our spectral approximation guarantee using sketching dimension $O(\frac{k^2}{\eps^8})$, compared to our sketch of dimension $O(\frac{k^2}{\eps^{12}})$.  However as mentioned above, our guarantee is stronger in that it allows for the signs of the eigenvalues to be recovered, i.e. our guarantee holds even when $A$ is not PSD.  Additionally, we are able to achieve $O(\frac{k^2}{\eps^8})$ using just a single round of adaptivity.

\paragraph{Lower Bounds for One-sided Testers.} To prove lower bounds for one-sided non-adaptive testers, we first show that a one-sided tester must be able to produce a witness whenever it outputs \False. In the matrix-vector model, the witness is a vector $v$ with $v^T A v < 0$, and in the vector-matrix-vector model, the witness is a PSD matrix $M$ with $\inner{M}{A}<0$.  In both cases we show that even for simplest non-PSD spectrum $(-\lambda,1,\ldots, 1)$, that it takes many queries to produce a witness when $\lambda$ is small.  In the matrix-vector model, our approach is simply to show that the $-\lambda$ eigenvector is typically far from span of all queried vectors, when the number of queries is small. This will imply that $A$ is non-negative on the queried subspace, which precludes the tester from producing a witness.  In the vector-matrix-vector model our approach is similar, however now the queries take the form of inner products against rank one matrices $x_ix_i^T.$  We therefore need to work within the space of symmetric matrices, and this requires a more delicate argument.

\subsection{Additional Related Work}
Numerous other works have considered matrix-vector queries and vector-matrix queries, see, e.g., \cite{meyer2021hutch++,braverman2020gradient,sun2019querying,simchowitz2018tight,musco2015randomized,wimmer2014optimal}. We outline a few core areas here.
%

\paragraph{Oja's Algorithm.}
Several works have considered Oja's algorithm in the context of streaming PCA, \cite{shamir2016convergence,jain2016streaming,allen2017first}. \cite{jain2016streaming} gives a tight convergence rate for iteratively approximating the top eigenvector of a PSD matrix, given an eigengap, and \cite{allen2017first} extends this to a gap free result for $k$-PCA. 

\paragraph{PSD Testing.} As mentioned above, PSD-testing has been investigated in the bounded entry model, where one assumes that the entries of $A$ are bounded by $1$ \cite{bakshi2020testing}, and one is allowed to query the entries of $A$. This is a restriction of the vector-matrix-vector model that we consider where only coordinate vectors may be queried.  
However since we consider a more general query model, we are able to give a better adaptive tester -- for us $\tildeO(1/\eps)$ vector-matrix-vector queries suffice, beating the $\Omega(1/\eps^2)$ lower bound given in \cite{bakshi2020testing} for entry queries.


Another work on PSD-testing is that of \cite{han2017approximating}, who construct a PSD-tester in the matrix-vector model. They first show how to approximate a general trace function $\sum f(\lambda_i)$ for sufficiently smooth $f$, by using a Chebyshev polynomial construction to approximate $f$ in the sup-norm over an interval. This allows them to construct an \ptester{\infty} by taking $f$ to be a smooth approximation of a shifted Heaviside function. Unfortunately this approach is limited to \ptester{\infty}s, and does not achieve the optimal bound; they require $\Omega((\log d)/\eps)$ matrix-vector queries compared to the $\tildeO((\log d)/\sqrt{\epsilon})$ queries achieved by Krylov iteration.

\paragraph{Spectrum Estimation.}
The closely-related problem of spectrum estimation has been considered several times, in the context of sketching the largest $k$ elements of the spectrum \cite{andoni2013eigenvalues} discussed above, and approximating the entire spectrum from entry queries in the bounded entry model \cite{bhattacharjee2021sublinear}.

\subsection{Notation}

A symmetric matrix $A$ is positive semi-definite (PSD) if all eigenvalues are non-negative. We use $\psdcone{d}$ to represent the PSD-cone, which is the subset of $d\times d$ symmetric matrices that are PSD.

For a matrix $A$ we use $\norm{A}{p}$ to denote the Schatten $p$-norm, which is the $\ell_p$ norm of the vector of singular values of $A.$  The Frobenius norm will play a special role in several places, so we sometimes use the notation $\norm{A}{F}$ to emphasize this. Additionally, $\norm{A}{}$ without the subscript indicates operator norm (which is equivalent to $\norm{A}{\infty}$). 

We always use $d$ to indicate the dimension of the matrix being tested, and use $\epsilon < 1$ to indicate the parameter in Definition~\ref{def:tester}.

When applied to vectors, $\inner{\cdot}{\cdot}$ indicates the standard inner product on $\R^n.$ When applied to matrices, it indicates the Frobenius inner product $\inner{X}{Y} := \Tr(X^T Y).$

 $S^{d-1}$ indicates the set of all unit vectors in $\R^d.$

We use the notation $X^{\pinv}$ to indicate the Moore-Penrose pseudoinverse of $X$.

For a symmetric matrix $A\in \R^{d\times d}$ with eigenvalues $\lambda_1 \geq \lambda_2 \geq \ldots \geq \lambda_d$, we let $A_k$ denote the matrix $A$ with all but the top $k$ eigenvalues zeroed out.  Formally, if $U$ is an orthogonal matrix diagonalizing $A$, then $A_k = U^T \diag(\lambda_1, \ldots, \lambda_k, 0, \ldots, 0) U.$  We also let $A_{-k} = A - {A_k}.$

Throughout, we use $c$ to indicate an absolute constant.  The value of $c$ may change between instances.


\section{Vector-matrix-vector queries}

\subsection{An optimal one-sided tester.}

To construct our vector-matrix-vector tester, we analyze the iteration 
\begin{equation}
    \label{oja_update}
    x^{(k+1)} = x^{(k)} - \eta \left((g^{(k)})^T A x^{(k)} \right)g^{(k)} 
    = \parens{I - \eta g^{(k)}(g^{(k)})^T A}x^{(k)},
\end{equation}
where $g^{(k)} \sim \mathcal{N}(0,I_d)$ and $x^{(0)}\sim \mathcal{N}(0,I_d).$

Our algorithm is essentially to run this scheme for a fixed number of iterations with with well-chosen step size $\eta.$ If the value of $(x^{(k)})^T A x^{}(k)$ ever becomes negative, then we output \False, otherwise we output \True. Using this approach we prove the following.  

\begin{thm}
\label{thm:adaptive_vmv_upper}
There exists a one-sided adaptive \ptester{1}, that makes $O(\frac1\eps \log^3 \frac1\eps)$ vector-matrix-vector queries to $A.$
\end{thm}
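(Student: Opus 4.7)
The algorithm makes one vector-matrix-vector query per iteration, so the query bound reduces to showing that $O(\frac{1}{\eps}\log^3\frac{1}{\eps})$ iterations suffice. One-sidedness is immediate: if $A$ is PSD then $(x^{(k)})^T A x^{(k)} \geq 0$ at every step, so the tester never outputs \False. For soundness, I would normalize $\norm{A}{1} = 1$ (by scale invariance of the iteration), diagonalize $A = \sum_i \lambda_i u_i u_i^T$ with $\lambda_d \leq -\eps$, and write $\alpha_i^{(k)} = \inner{x^{(k)}}{u_i}$, $\beta_i^{(k)} = \inner{g^{(k)}}{u_i}$. By rotational invariance of the Gaussian the $\beta_i^{(k)}$ are i.i.d.\ $\mathcal{N}(0,1)$, and the iteration decouples coordinate-wise as
\[
\alpha_i^{(k+1)} = \alpha_i^{(k)} - \eta\, \beta_i^{(k)} \sum_j \lambda_j \beta_j^{(k)} \alpha_j^{(k)}.
\]

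The heart of the proof is a second-moment analysis. Setting $m_i^{(k)} := \E[(\alpha_i^{(k)})^2]$, squaring the coordinate-wise iteration and evaluating the resulting fourth Gaussian moments via Wick's formula yields the clean recurrence
\[
m_i^{(k+1)} = (1 - 2\eta\lambda_i + 2\eta^2 \lambda_i^2)\, m_i^{(k)} + \eta^2 \sum_j \lambda_j^2\, m_j^{(k)},
\]
so $m_d^{(k)}$ grows geometrically at rate at least $1 + 2\eta\eps$, while every coordinate with $\lambda_i > 0$ has leading multiplier $\leq 1$. I would also track the aggregates $F^{(k)} := \sum_i \lambda_i m_i^{(k)} = \E[(x^{(k)})^T A x^{(k)}]$ and $R^{(k)} := \sum_i \lambda_i^2 m_i^{(k)} = \E\norm{A x^{(k)}}{2}^2$; summing the recurrence against $\lambda_i$ produces an SGD-style descent inequality $F^{(k+1)} \leq F^{(k)} - \Omega(\eta)\, R^{(k)}$, while $R^{(k)} \geq \lambda_d^2 m_d^{(k)} \geq \eps^2(1+2\eta\eps)^k$. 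Choosing $\eta$ to be a suitable inverse-polylogarithmic in $1/\eps$ and running for $k = O(\frac{1}{\eps}\log^3\frac{1}{\eps})$ iterations drives $F^{(k)}$ strictly negative of constant magnitude. A parallel fourth-moment computation bounds $\var[(x^{(k)})^T A x^{(k)}]$, and since the algorithm declares \False\ the first time the quadratic form goes negative we need success only at \emph{some} iterate, so Chebyshev yields constant probability; the vector $x^{(k)}$ is then the PSD witness.

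The main obstacle is controlling the shared noise coefficient $\eta^2 R^{(k)}$ that appears in \emph{every} coordinate's update: it inflates the positive-eigenvalue $m_i$'s that would otherwise decay, and if $R^{(k)}$ is only treated as a generic variance term the resulting analysis forces $\eta$ small enough to give only $\tildeO(1/\eps^2)$ queries, matching the black-box i.i.d.\ SGD bound of \cite{jain2016streaming}. The crucial structural observation, and the "special structure" highlighted in the introduction, is that this noise coefficient equals exactly $\E\norm{Ax^{(k)}}{2}^2$, so it can be controlled by a self-referential induction that ties it back to the very quantities $m_i^{(k)}$ being analyzed — rather than being treated as an a priori bounded-variance quantity. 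Carrying out this induction tightly enough to keep the positive-eigenvalue $m_i$'s sub-dominant to $m_d^{(k)} \sim 1/\eps$, while only spending $k = O(\frac{1}{\eps}\log^3\frac{1}{\eps})$ iterations, is the delicate technical step of the proof.
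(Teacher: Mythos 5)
Your sketch has the right ingredients in outline (Oja iteration, eigenbasis decoupling, second-moment recurrence), but it differs from the paper's argument in a way that creates two genuine gaps. First, you track the second moments $m_i^{(k)} = \E[(\alpha_i^{(k)})^2]$ of the coordinates of $x^{(k)} = B_k x^{(0)}$ itself, giving the (correct) recurrence $m_i^{(k+1)} = (1-2\eta\lambda_i+2\eta^2\lambda_i^2)m_i^{(k)} + \eta^2 R^{(k)}$ with $R^{(k)} = \sum_j\lambda_j^2 m_j^{(k)}$, in which the \emph{same} noise term $\eta^2 R^{(k)}$ is injected into every coordinate. The paper instead analyzes the adjoint vector $y^{(k)} = B_k^T e_1$, which obeys $y^{(k+1)} = (I - \eta Agg^T)y^{(k)}$, and whose second-moment recurrence is $u_i^{(k+1)} = (1-2\eta\lambda_i+2\eta^2\lambda_i^2)u_i^{(k)} + \eta^2\lambda_i^2 S^{(k)}$ with $S^{(k)} = \sum_j u_j^{(k)}$. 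That $\lambda_i^2$ prefactor is what makes the induction close: small positive eigenvalues receive essentially no injection, and a geometric-series cancellation gives $\sum_j \alpha_j^+ \leq \eta\norm{A}{1} \leq \eta$, from which $S^{(k)} \leq 2\gamma_1^k$ follows. Your "self-referential induction" on $R^{(k)}$ is exactly the step you describe as delicate, and you do not show it closes; crudely bounding $R^{(k+1)} \leq (\gamma_1 + \eta^2)R^{(k)}$ produces an $e^{\Theta(\eta^2 N)}$ blowup that destroys the $\tildeO(1/\eps)$ bound unless one reproduces the paper's $\alpha_j^+$ cancellation, and it is not clear that cancellation is available once the $\lambda_i^2$ weight sits inside the sum rather than outside.

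Second, your concentration step applies Chebyshev directly to $f(x^{(k)}) = (x^{(k)})^T A x^{(k)}$, which requires bounding $\var[f(x^{(k)})]$, a \emph{fourth} moment of the multiplicative Oja process. You assert a "parallel fourth-moment computation" but supply none; $x^{(k)}$ is a product of random matrices applied to a Gaussian, and there is no reason to expect the relative variance of $f(x^{(k)})$ to stay $O(1)$ without a substantially more involved argument. The paper avoids this entirely via the decoupling $x_1^{(N)} = \inner{x^{(0)}}{y^{(N)}}$: conditional on $y^{(N)}$ this is exactly $\mathcal{N}(0,\norm{y^{(N)}}{}^2)$, so it suffices to show $\norm{y^{(N)}}{}$ is large, which is a second-moment statement about $y_1^{(N)}$. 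The paper then combines the pointwise bound $|x_1^{(N)}| \gtrsim 1/\eps^2$ with a separate monotonicity lemma (Proposition~\ref{prop:rayleigh_quotient_decreasing}): $f(x^{(k+1)}) \leq f(x^{(k)})$ whenever $g^T A g \leq 2/\eta$, which fails only with probability $\exp(-c/\eta)$, so $f(x^{(N)})$ stays bounded by a constant while $\norm{Ax^{(N)}}{}$ is of order $1/\eps$, and one further Oja step drives $f$ negative. Your sketch has no analogue of this monotonicity-plus-large-witness-coordinate endgame, and replacing it with an $\E[f]<0$ plus Chebyshev argument would require the fourth-moment control you have not established.
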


As an immediate corollary we obtain a bound for \ptester{p}s.

\begin{corollary}
\label{cor:adaptive_vmv_upper_p}
There is a one-sided adaptive \ptester{p} that makes $O(\frac1\eps d^{1-1/p} \log^3(\frac{1}{\eps}d^{1-1/p}))$ vector-matrix-vector queries.
\end{corollary}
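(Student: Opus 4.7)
The plan is to reduce the general $\ell_p$ case to the $\ell_1$ case already handled by Theorem~\ref{thm:adaptive_vmv_upper}, by simply rescaling $\eps$ according to the standard inequality relating the Schatten-$1$ and Schatten-$p$ norms.

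First I would invoke the Hölder/power-mean inequality on the vector of singular values of $A \in \R^{d \times d}$. Since $A$ has at most $d$ nonzero singular values,
\[
\norm{A}{1} = \sum_{i=1}^d \sigma_i \le d^{1-1/p} \parens{\sum_{i=1}^d \sigma_i^p}^{1/p} = d^{1-1/p}\norm{A}{p}.
\]
Consequently, any symmetric $A$ that is $\eps$-far from PSD in Schatten-$p$ distance, i.e.\ $\lambda_{\min}(A) \le -\eps \norm{A}{p}$, is also $\eps'$-far from PSD in Schatten-$1$ distance for $\eps' := \eps/d^{1-1/p}$, namely $\lambda_{\min}(A) \le -\eps' \norm{A}{1}$. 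Meanwhile, if $A$ is PSD it is trivially not $\eps'$-far in either norm.

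Next I would simply run the one-sided adaptive \ptester{1} from Theorem~\ref{thm:adaptive_vmv_upper} on $A$ with accuracy parameter $\eps'$. By the reduction above, its output correctly classifies $A$ with respect to the $(\eps,\ell_p)$-testing problem, and the one-sided guarantee is inherited: the resulting algorithm accepts every PSD $A$ with probability $1$ (even against an adversary with access to its coins), since the underlying \ptester{1} does. The number of vector-matrix-vector queries is
\[
O\!\parens{\tfrac{1}{\eps'} \log^3 \tfrac{1}{\eps'}}
\;=\; O\!\parens{\tfrac{d^{1-1/p}}{\eps}\,\log^3\!\tfrac{d^{1-1/p}}{\eps}},
\]
as claimed.

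Since this is a direct corollary, there is no real technical obstacle: the only thing to verify carefully is that the norm inequality goes in the correct direction so that the harder problem ($\ell_p$-testing for $p > 1$, where $\norm{A}{p} \le \norm{A}{1}$) is reduced to the easier problem by tightening $\eps$, rather than the reverse. This is exactly what happens above.
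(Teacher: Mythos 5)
Your proof is correct and is exactly the argument the paper gives: apply $\norm{A}{1} \le d^{1-1/p}\norm{A}{p}$ and run the \ptester{1} from Theorem~\ref{thm:adaptive_vmv_upper} with $\eps' = \eps/d^{1-1/p}$. No discrepancies.
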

\begin{proof}
This follows from the previous result along with the bound $\norm{A}{p} \geq d^{1/p -1} \norm{A}{1}.$
\end{proof}

We now turn to the proof of Theorem~\ref{thm:adaptive_vmv_upper}.  Since our iterative scheme is rotation-invariant, we assume without loss of generality that $A = \diag(\lambda_1, \ldots, \lambda_d).$   For now, we assume that $\norm{A}{1} \leq 1,$ and that the smallest eigenvalue of $A$ is $\lambda_1 = -\epsilon.$  We consider running the algorithm for $N$ iterations. We will show that our iteration finds an $x$ with $x^T A x < 0$ in $ N = \tilde{O}(1/\epsilon)$ iterations. We will use $c$ to denote absolute constants that we don't track, and that may vary between uses.

Our key technical lemma is to show that the first coordinate (which is associated to the $-\epsilon$ eigenvalue) grows fairly quickly with good probability.


\begin{lem}
\label{lem:first_coord_grows}
Suppose $\eta$ and $N$ satisfy the following list of assumptions: 
    (1) $\eta \leq \frac{1}{4}$, 
    (2) $\eta^2 \epsilon N \leq \frac{1}{8}$, 
    (3) $(1 + \eta^2 \epsilon^2)^N \leq \frac{5}{4}$, 
    (4) $(1 + \eta\epsilon)^N \geq \frac{10}{\epsilon^2}$. 
Then $x_1^{(N)}\geq \frac{1}{\epsilon^2}$ with probability at least $0.2$.
\end{lem}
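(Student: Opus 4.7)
The plan is to analyze $x_1^{(N)}$ via its conditional first and second moments, condition on a favourable initialisation, and close with a one-sided Chebyshev (Cantelli) inequality. Since the iteration \eqref{oja_update} is equivariant under orthogonal conjugations of $A$, I may take $A=\diag(\lambda_1,\dots,\lambda_d)$ with $\lambda_1=-\eps$ the smallest eigenvalue. Using $\mathbb{E}[g_i g_j]=\delta_{ij}$ and $\mathbb{E}[g_1^2]=1$, a direct computation gives
\[
\mathbb{E}[x_1^{(k+1)}\mid\mathcal{F}_k]=(1+\eta\eps)\,x_1^{(k)},
\]
hence $\mathbb{E}[x_1^{(N)}\mid x^{(0)}]=(1+\eta\eps)^N x_1^{(0)}$. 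Since $x_1^{(0)}\sim\mathcal{N}(0,1)$, the event $E=\{x_1^{(0)}\geq c_0\}$ has probability at least $0.3$ for a suitable absolute constant $c_0>0$, and on $E$ assumption~(4) gives $\mathbb{E}[x_1^{(N)}\mid x^{(0)}]\geq 10c_0/\eps^2$.

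I would next derive the second-moment recursion. Using Isserlis' theorem for the fourth Gaussian moment one obtains
\[
\mathbb{E}[(x_1^{(k+1)})^2\mid\mathcal{F}_k]=(1+2\eta\eps+3\eta^2\eps^2)(x_1^{(k)})^2+\eta^2\!\sum_{j\geq 2}\lambda_j^2 (x_j^{(k)})^2,
\]
and $\mathbb{E}[(x_j^{(k+1)})^2\mid\mathcal{F}_k]=(1-2\eta\lambda_j+2\eta^2\lambda_j^2)(x_j^{(k)})^2+\eta^2\sum_i\lambda_i^2(x_i^{(k)})^2$ for $j\geq 2$. The homogeneous growth factor of the $x_1$-recursion is at most a constant times $(1+\eta\eps)^2$: indeed $(1+2\eta\eps+3\eta^2\eps^2)/(1+\eta\eps)^2\leq 1+2\eta^2\eps^2$, and by assumption~(3), $(1+2\eta^2\eps^2)^N\leq (1+\eta^2\eps^2)^{2N}\leq (5/4)^2$. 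Thus the signal-driven part of $\mathbb{E}[(x_1^{(N)})^2\mid x^{(0)}]$ is at most a constant multiple of $(1+\eta\eps)^{2N}(x_1^{(0)})^2$.

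The main technical obstacle is bounding the accumulated noise $\eta^2\sum_{k<N}(\text{growth})^{N-1-k}\,\mathbb{E}[\sum_{j\geq 2}\lambda_j^2(x_j^{(k)})^2\mid x^{(0)}]$. Here both structural hypotheses are essential: because $\lambda_1=-\eps$ is the smallest eigenvalue, every negative $\lambda_j$ obeys $|\lambda_j|\leq\eps$ so the corresponding coordinate grows no faster than $(1+\eta\eps)^{2k}$; for $\lambda_j>0$, assumption~(1) ($\eta\leq 1/4$) together with $|\lambda_j|\leq\|A\|_1\leq 1$ forces $1-2\eta\lambda_j+2\eta^2\lambda_j^2\leq 1$, so those coordinates contract and merely accumulate additive noise. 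The base case uses $\sum_{j\geq 2}\lambda_j^2\leq\|A\|_2^2\leq\|A\|_1^2\leq 1$. A mild bootstrap is needed because the noise driving $x_j^{(k)}$ itself contains the term $\eps^2(x_1^{(k)})^2$, but assumption~(2), $\eta^2\eps N\leq 1/8$, controls the geometric accumulation across the $N$ steps and yields
\[
\mathbb{E}[(x_1^{(N)})^2\mid x^{(0)}]\leq C\,(1+\eta\eps)^{2N}\bigl(1+(x_1^{(0)})^2\bigr)
\]
for an absolute constant $C$.

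On the event $E$, the conditional mean of $x_1^{(N)}$ is at least $10c_0/\eps^2$ while the conditional standard deviation is at most $C'(1+\eta\eps)^N$, which by assumption~(4) is a constant multiple of the mean. Cantelli's inequality then yields a constant lower bound on $\Pr[x_1^{(N)}\geq\tfrac12\mathbb{E}[x_1^{(N)}\mid x^{(0)}]\mid E]$, and combining with $\Pr(E)\geq 0.3$ and $\tfrac12\cdot 10c_0/\eps^2\geq 1/\eps^2$ (for an appropriate $c_0$) gives the desired $\Pr[x_1^{(N)}\geq 1/\eps^2]\geq 0.2$. The delicate step is the third paragraph, where the coupled second-moment recursions have to be unrolled simultaneously under the precise bookkeeping permitted by assumptions~(1)--(3).
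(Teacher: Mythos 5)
Your second-moment recursions for $x^{(k)}$ are correct, but your route diverges from the paper's in a way that creates a genuine difficulty you have not addressed. The paper analyzes the \emph{transposed} process $y^{(k)} = B_k^T e_1$ with deterministic initial condition $y^{(0)} = e_1$, where $B_k$ is the product of the update matrices, rather than the forward process $x^{(k)}$ you analyze. This is not a cosmetic change: in the transposed recursion the inhomogeneous noise driving $\E[(y_1^{(k)})^2]$ carries the coefficient $\eta^2 \lambda_1^2 = \eta^2 \eps^2$, which is precisely what lets assumption~(2) tame the accumulated noise, since $N \eta^2 \eps^2 \leq \eps/8$. In your forward recursion, by contrast, the noise driving $\E[(x_1^{(k)})^2 \mid x^{(0)}]$ is $\eta^2 T^{(k)}$ with $T^{(k)} = \sum_i \lambda_i^2\, \E[(x_i^{(k)})^2 \mid x^{(0)}]$: the coefficient is $\eta^2$, not $\eta^2\eps^2$, and assumption~(2) only gives $N\eta^2 \leq 1/(8\eps)$, a factor $1/\eps$ too large for the naive accumulation bound $T^{(k)} \lesssim \gamma_1^k$. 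Your statement that ``assumption~(2) controls the geometric accumulation'' is therefore incorrect as written.

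Whether your route can be repaired is a subtler question. The initial transient of $T^{(k)}$ from the positive-eigenvalue coordinates decays roughly like $1/(\eta k)$, and combining that with the bound $\eta \lesssim 1/\log(1/\eps)$ (implied by~(2) and~(4) together, but not an explicit hypothesis of the Lemma) one can plausibly push the accumulated noise down to $O(1)\gamma_1^N$; but this is materially harder bookkeeping than the paper's, you have not carried it out, and the constants look worse. In particular, $\var(x_1^{(N)} \mid x^{(0)})$ decomposes (since $\mathrm{Cov}(B_N^T e_1)$ is diagonal) as $\bigl(u_1^{(N)} - (1+\eta\eps)^{2N}\bigr)(x_1^{(0)})^2 + \sum_{i\geq 2} u_i^{(N)} (x_i^{(0)})^2$, where $u_i^{(N)} = \E[(y_i^{(N)})^2]$; the second sum is absent when one works with $y_1^{(N)}$ alone and is only controlled in expectation via Markov, which degrades the final probability below the stated $0.2$. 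The paper sidesteps all of this: it proves concentration of $y_1^{(N)}$, whose noise is $\eps^2$-suppressed, and then exploits that $x_1^{(N)} = \inner{x^{(0)}}{B_N^T e_1}$ is Gaussian with variance $\|B_N^T e_1\|^2 \geq (y_1^{(N)})^2$, pulling a fresh constant probability from the Gaussian tail at the very end rather than carrying the randomness of $x^{(0)}$ through the moment bookkeeping as you do.
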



\begin{proof}

Following \cite{jain2016streaming} we define the matrix
$B_k = \prod_{i=1}^k \parens{I - \eta g^{(i)}(g^{(i)})^T A}$,
where the $g^{(i)}$ are independent $\mathcal{N}(0,I)$ gaussians.  Note that $x^{(k)} = B_k x^{(0)}.$  We will show that $B_k^T e_1$ has large norm with good probability (in fact we will show that $\inner{B_k^T e_1}{e_1}$ is large). This will then imply that $\inner{B_k x^{(0)}}{e_1}$ is large with high probability, where $x^{(0)} \sim \mathcal{N}(0,I).$ 

\textbf{Step 1: Deriving a recurrence for the second moments.}

Let $y^{(k)} = B_k^T e_1$ and let $u^{(k)}_i$ be the second moment of the coordinate $y_i^{(k)}$.  Note that $u^{(0)}_i = \delta_{1i}$ (where $\delta$ is the Dirac delta).  To simplify the notation, we drop the superscript on the $g$.  We compute
    $y_i^{(k+1)} = \parens{(I - \eta gg^T A)y^{(k)}}_i
    = y^{(k)}_i - \eta (Ag)_i (g_1 y^{(k)}_1 + \ldots + g_d y^{(k)}_d)
     = y_i^{(k)} - \eta\lambda_i g_i (g_1 y^{(k)}_1 + \ldots + g_d y^{(k)}_d)$.

Next we observe that (after grouping terms) the coefficients of the $y_i^{(k)}$ terms are pairwise uncorrelated.  Using this, along with the fact that the $g_i$'s are independent of the $y_i^{(k)}$'s gives

\begin{eqnarray*}
    u_i^{(k+1)} =  \E(1-\eta\lambda_i g_i^2)^2 u_i^{(k)} + \eta^2 \lambda_i^2 \sum_{j\neq i} u_j^{(k)}
    = (1 - 2\eta\lambda_i + 3\eta^2\lambda_i^2) u_i^{(k)} + \eta^2 \lambda_i^2 \sum_{j\neq i} u_j^{(k)}\\
    = (1 - 2\eta\lambda_i + 2\eta^2\lambda_i^2) u_i^{(k)} + \eta^2 \lambda_i^2 \sum_{j=1}^d u_j^{(k)}.
\end{eqnarray*}

Let $S^{(k)} = u^{(k)}_1 + \ldots + u^{(k)}_d$, and $\gamma_i = 1 - 2\eta\lambda_i + 2\eta^2\lambda_i^2.$  Then we can write the recurrence as
    $u^{(k+1)}_i = \gamma_i u^{(k)}_i + \eta^2\lambda_i^2 S^{(k)}$.
Iterating this recurrence gives
\begin{equation}
    \label{eq:expanded_sec_mom_rec}
    u^{(k)}_i = \delta_{1i}\gamma_i^k + \eta^2 \lambda_i^2 \parens{\gamma_i^{k-1}S^{(0)} + \gamma_i^{k-2}S^{(1)} + \ldots + S^{(k-1)} }.
\end{equation}

\textbf{Step 2: Bounding $S^{(k)}$.}

Summing the above equation over $i$ allows us to write a recurrence for the $S^{(k)}$'s:
    $S^{(k)} = \gamma_1^k + \alpha_{k-1}S^{(0)} + \alpha_{k-2}S^{(1)} + \ldots + \alpha_0 S^{(k-1)}$,
where we define
    $\alpha_j := \sum_{i=1}^d \eta^2 \lambda_i^2 \gamma_i^j$.

We split $\alpha_j$ into two parts, $\alpha_j^{+}$ and $\alpha_j^{-}$ corresponding to terms in the sum where $\lambda_i$ is positive or negative respectively.
We now use the recurrence to bound $S^{(k)}.$ First by Holder's inequality, 
$S^{(k)} \leq \gamma_1^k + \max(S^{(0)}, \ldots, S^{(k-1)})(\alpha_0^{+} + \ldots + \alpha_{k-1}^{+}) 
+ (\alpha_{k-1}^{-}S^{(0)} + \alpha_{k-2}^{-}S^{(1)} + \ldots + \alpha_0^{-} S^{(k-1)}).$

We calculate 


\begin{align*}
\sum_{j=0}^{k-1} \alpha_j^{+} 
&= \sum_{j=0}^{k-1} \sum_{i : \lambda_i > 0}\eta^2 \lambda_i^2 \gamma_i^j 
= \sum_{i : \lambda_i > 0} \eta^2 \lambda_i^2 \sum_{j=0}^{k-1} \gamma_i^j 
= \sum_{i : \lambda_i > 0} \eta^2 \lambda_i^2 \frac{1-\gamma_i^k}{1 - \gamma_i}\\
&= \sum_{i : \lambda_i > 0} \eta^2 \lambda_i^2 \frac{1-\gamma_i^k}{2\eta\lambda_i - 2\eta^2\lambda_i^2}
= \sum_{i : \lambda_i > 0} \eta \lambda_i \frac{1-\gamma_i^k}{2 - 2\eta\lambda_i} \leq \sum_{i : \lambda_i > 0} \eta\lambda_i \leq \eta,
\end{align*}

where we used that $\eta \lambda_i \leq 1/2,$ (which is a consequence of Assumption 1), that $\gamma_i<1$ (which holds since $\lambda_i > 0$) and that $\sum_{i : \lambda_i > 0} \lambda_i \leq 1.$ Since we assume that $-\epsilon$ is the smallest eigenvalue,
\begin{align*}
    \alpha_j^{-} &\leq \eta^2\gamma_1^j \sum_{i:\lambda_i < 0} \lambda_i^2
    \leq \eta^2 \gamma_1^j \eps \sum_{i:\lambda_i < 0} \abs{\lambda_i} \leq \eta^2\gamma_1^j \eps.
\end{align*}  

Let $\tilde{S}^{(k)} = \max(S^{(0)}, \ldots S^{(k)}).$  Then combining our bounds gives
\begin{equation*}
    \tilde{S}^{(k)} \leq \max\parens{\tilde{S}^{(k-1)}, \gamma_1^k + \eta\tilde{S}^{(k-1)} + \eta^2\epsilon (\gamma_1^{k-1}\tilde{S}^{(0)} + \gamma_1^{k-2}\tilde{S}^{(1)} + \ldots + \tilde{S}^{(k-1)})}.
\end{equation*}

The next step is to use this recurrence to bound $\tilde{S}^{(k)}.$  For this, define $c^{(k)}$ such that $\tilde{S}^{(k)} = c^{(k)} \gamma_1^k.$ Plugging in to the above and dividing through by $\gamma_1^k$, we get that $c^{(k)}$ satisfies
\begin{align*}
    c^{(k)} &\leq \max\parens{
        \frac{c^{(k-1)}}{\gamma_1},
         1 + \frac{\eta}{\gamma_1}c^{(k-1)} + \frac{\eta^2\epsilon}{\gamma_1}(c^{(0)} + \ldots + c^{(k-1)})
    }\\
    &\leq \max\parens{
        c^{(k-1)},
        1 + \eta c^{(k-1)} + \eta^2 \epsilon(c^{(0)} + \ldots + c^{(k-1)})
    },
\end{align*}
where we used the fact that $\gamma_1\geq 1.$ Now set $\tilde{c}^{(k)} = \max(c^{(0)}, \ldots c^{(k)}).$ By assumptions 1 and 2, $\eta + \eta^2 \epsilon k \leq 1/2.$  This gives
\begin{align*}
    \tilde{c}^{(k)} \leq \max\parens{
    \tilde{c}^{(k-1)}, 1 + \eta \tilde{c}^{(k-1)}+\eta^2\epsilon k\tilde{c}^{(k-1)}
    }
    \leq \max\parens{
    \tilde{c}^{(k-1)}, 1 + \frac{1}{2}\tilde{c}^{(k-1)}
    }.
\end{align*}

Note that $c^{(0)} = S^{(0)} = 1,$ so a straightforward induction using the above recurrence shows that $\tilde{c}^{(k)} \leq 2$ for all $k$.  It follows that 
    $S^{(k)} \leq 2\gamma_1^k$.

\textbf{Step 3: Bounding the second moment.}
Plugging the bound above in to \eqref{eq:expanded_sec_mom_rec} gives
\begin{align*}
    u_1^{(k)} &\leq \gamma_1^k + 2k\eta^2 \epsilon^2 \gamma_1^{k-1}\leq \parens{1+2k\eta^2\epsilon^2}\gamma_1^k.
\end{align*}

\textbf{Step 4: Applying Chebyshev.}  We focus on the first coordinate, $y_1^{(k)}.$  Note that $I - \eta Agg^T$ has expectation $I -\eta A$, so a straightforward induction shows that 
    $\E y_1^{(k)} = (1+\eta\epsilon)^k$.

Using the bound for the second moment of the first coordinate, we get
    $\frac{u_1^{(k)}}{\parens{\E y_1^{(k)}}^2} \leq \frac{(1 + 2k\eta^2 \epsilon^2)\gamma_1^k}{(1+\eta\epsilon)^{2k}}
    = (1 + 2k\eta^2 \epsilon^2) \parens{\frac{1+2\eta\epsilon + 2\eta^2\epsilon^2}{1+2\eta\epsilon + \eta^2 \epsilon^2}}^k
    = (1 + 2k\eta^2 \epsilon^2) \parens{1 + \frac{\eta^2\epsilon^2}{1+2\eta\epsilon + \eta^2 \epsilon^2}}^k
    \leq (1 + 2k\eta^2 \epsilon^2) (1 + \eta^2\epsilon^2)^k.$

By Assumptions 2 and 4, $N \eta^2 \epsilon^2 \leq 1/8$ and $(1+\eta^2\epsilon^2)^N \leq 5/4,$ so we get that $u_1^{(k)} \leq 25/16 \parens{\E u_1^{(k)}}^2.$

Thus by Chebyshev's inequality,
    $\pr\parens{\abs{y_1^{(k)} -\E(y_1^{(k)}) } \geq 0.9 \E(y_1^{(k)})} \leq \frac{25}{36}$.
So with probability at least $0.3$,
    $y_1^{(N)} \geq \frac{1}{10} \E(y_1^{(N)}) = \frac{1}{10} (1+\eta\epsilon)^N$.

Under assumption 4, $(1+\eta\epsilon)^N\geq \frac{10}{\epsilon^2},$ which means that $y_1^{(N)}\geq \frac{1}{\epsilon^2}$ with at least $0.3$ probability.

\textbf{Step 5: Concluding the argument.}
We showed that $\inner{B_N^T e_1} {e_1} \geq \frac{1}{\epsilon^2}$ with probability at least $0.3$.  In particular this implies that $\norm{B_N^T e_1}{} \geq \frac{1}{\epsilon^2}.$  Now since $x^{(0)}$ is distributed as $\mathcal{N}(0,I),$
    $\inner{B_N x^{(0)}} {e_1} 
    = \inner{x^{(0)}}{B_N^T e_1}
    \sim \mathcal{N}(0, \norm{B_N^T e_1}{}^2)$,
which is at least $\norm{B_N^T e_1}{}$ in magnitude with $0.67$ probability. It follows that $x^{(N)}_1 \geq \frac{1}{\epsilon^2}$ with probability at least $0.2$.
\end{proof}

Let $f(x) = x^T A x.$ We next understand how the value of $f(x^{(k)})$ is updated on each iteration.
\begin{prop}
\label{prop:rayleigh_quotient_update}
For $g \sim \mathcal{N}(0,1)$, we have 
    $f(x^{(k)}) - f(x^{(k+1)}) = \eta (g^T A x^{(k)})^2 (2 - \eta g^T A g)$.
\end{prop}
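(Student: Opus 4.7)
The statement is essentially a one-line algebraic identity, so the plan is simply to expand the quadratic form.

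The plan is to write $x' := x^{(k+1)} = x^{(k)} - \eta(g^T A x^{(k)})\, g$, set $\alpha := g^T A x^{(k)}$ (a scalar), and then compute $f(x') = (x^{(k)} - \eta \alpha g)^T A (x^{(k)} - \eta \alpha g)$ directly. Expanding and using the symmetry of $A$ (so that the two cross terms $-\eta \alpha\, g^T A x^{(k)}$ and $-\eta \alpha\, (x^{(k)})^T A g$ are equal and both evaluate to $-\eta \alpha^2$) gives
\begin{equation*}
f(x^{(k+1)}) \;=\; (x^{(k)})^T A x^{(k)} \;-\; 2\eta \alpha^2 \;+\; \eta^2 \alpha^2 \, (g^T A g).
\end{equation*}
Subtracting this from $f(x^{(k)}) = (x^{(k)})^T A x^{(k)}$ and substituting $\alpha^2 = (g^T A x^{(k)})^2$ yields
\begin{equation*}
f(x^{(k)}) - f(x^{(k+1)}) \;=\; 2\eta (g^T A x^{(k)})^2 - \eta^2 (g^T A x^{(k)})^2 (g^T A g) \;=\; \eta (g^T A x^{(k)})^2 (2 - \eta g^T A g),
\end{equation*}
which is the claim.

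There is essentially no obstacle: the only thing to be slightly careful about is that the expectation/randomness plays no role here. The identity is a deterministic equality for each realization of $g$, so one does not need any probabilistic argument. The only place where the symmetry of $A$ is used is to merge the two cross terms in the expansion; this is valid since $A$ is a real symmetric matrix by assumption of the overall setup.
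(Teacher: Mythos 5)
Your proof is correct and matches the paper's own argument: both simply expand $f(x^{(k+1)}) = (x^{(k)} - \eta(g^T A x^{(k)})g)^T A (x^{(k)} - \eta(g^T A x^{(k)})g)$, merge the two cross terms via symmetry of $A$, and rearrange. Your note that the identity is deterministic (the distributional assumption on $g$ plays no role) is a fair observation but does not change the substance.
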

\begin{proof}
Plugging in the update rule and expanding gives
\begin{align*}
    f(x^{(k+1)}) &= (x^{(k)})^T A x^{(k)} - 2\eta (g^T A x^{(k)})^2 + \eta^2 (g^T A x^{(k)})^2 g^T A g\\
    &= (x^{(k)})^T A x^{(k)} - \eta (g^TAx^{(k)})^2 (2 - \eta g^T A g),
\end{align*}
from which the proposition follows.
\end{proof}

A consequence of this update is that the sequence $f(x^{(k)})$ is almost guaranteed to be decreasing as long as $\eta$ is chosen small enough.

\begin{prop}
\label{prop:rayleigh_quotient_decreasing}
Assume that $\Tr(A) \leq 1$ and that $\eta < c$. After $N$ iterations, $f(x^{(N)}) \leq f(x^{(0)})$ with probability at least $99/100$ provided that $\eta \leq \frac{c}{\log N + 1}.$ 
\end{prop}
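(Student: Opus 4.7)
The plan is to use Proposition~\ref{prop:rayleigh_quotient_update}, which gives
$f(x^{(k)}) - f(x^{(k+1)}) = \eta \parens{(g^{(k)})^T A x^{(k)}}^2 \parens{2 - \eta (g^{(k)})^T A g^{(k)}},$
and to argue that, with probability at least $99/100$, the factor $2 - \eta (g^{(k)})^T A g^{(k)}$ is non-negative for every $k = 0, 1, \ldots, N-1$. Once this is in hand, each increment $f(x^{(k+1)}) - f(x^{(k)})$ is non-positive and telescoping immediately yields $f(x^{(N)}) \leq f(x^{(0)})$. So the whole proposition reduces to a uniform upper tail bound on the scalar quadratic form $(g^{(k)})^T A g^{(k)}$.

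To control this quadratic form I would invoke the Hanson--Wright inequality: for a fixed symmetric $A$ and $g \sim \mathcal{N}(0,I_d)$,
$\pr\parens{g^T A g - \Tr(A) > t} \leq 2\exp\parens{-c \min\parens{t^2/\norm{A}{F}^2,\; t/\norm{A}{}}}.$
Under the standing normalization $\norm{A}{1} \leq 1$ in effect throughout the proof of Theorem~\ref{thm:adaptive_vmv_upper} we have $\norm{A}{F} \leq 1$ and $\norm{A}{} \leq 1$, and $\Tr(A) \leq 1$ by hypothesis. Setting $t = C \log N$ for a sufficiently large absolute constant $C$ therefore gives a per-step upper tail probability of at most $1/(100N)$. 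Since the event $\{(g^{(k)})^T A g^{(k)} > \Tr(A) + C\log N\}$ depends only on $g^{(k)}$ and not on the past iterates, no conditioning is needed; a union bound over the $N$ steps gives $(g^{(k)})^T A g^{(k)} \leq C' \log N$ simultaneously for every $k < N$ with probability at least $99/100$. Picking the absolute constant $c$ in $\eta \leq c/(\log N + 1)$ small enough that $\eta \cdot C' \log N \leq 2$ then forces $2 - \eta (g^{(k)})^T A g^{(k)} \geq 0$ at every step, closing the telescoping argument.

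The only real obstacle is the concentration inequality; Hanson--Wright is the cleanest tool, but one could equivalently diagonalize and write $g^T A g = \sum_i \lambda_i g_i^2$, then bound the moment generating function of this weighted chi-squared directly using $\norm{A}{F}$ and $\norm{A}{}$. Note that the lower tail of $g^T A g$ does not matter here: if $(g^{(k)})^T A g^{(k)}$ is very negative, the factor $2 - \eta (g^{(k)})^T A g^{(k)}$ is only more positive, so $f$ decreases by even more. Thus a one-sided tail bound plus a union bound plus one telescoping sum suffices.
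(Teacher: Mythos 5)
Your proposal is correct and is essentially the paper's own argument: both reduce the claim via Proposition~\ref{prop:rayleigh_quotient_update} to a uniform upper-tail bound on $g^T A g$ over $N$ steps, apply a subexponential concentration inequality (the paper uses Bernstein on $\sum_i \lambda_i(g_i^2-1)$, you use Hanson--Wright, which is the same estimate here), and finish with a union bound and telescoping. The only cosmetic difference is that the paper works directly with the threshold $2/\eta$ and gets a per-step failure probability $2\exp(-c/\eta)$, whereas you fix $t = C\log N$; you should take $t = C(\log N + 1)$ so the bound is nonvacuous for small $N$, which is exactly why the paper's hypothesis has $\log N + 1$ in the denominator.
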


\begin{proof}
We show something stronger; namely that for the first $N$ iterations, the sequence $f(x^{(k)})$ is decreasing.  By Proposition \ref{prop:rayleigh_quotient_update}, $f(x^{(k+1)})\leq f(x^{(k)})$ as long as $g^T A g \leq \frac{2}{\eta}.$  The probability that this does not occur is
$\Pr\parens{\sum \lambda_i g_i^2 \geq \frac{2}{\eta}} \leq \Pr\parens{\sum \lambda_i (g_i^2-1) \geq \frac{2}{\eta} - 1}$.

The $g_i^2 - 1$ terms are independent subexponential random variables. So by Bernstein's inequality (see \cite{vershynin2018high} Theorem 2.8.2 for the version used here), this probability is bounded by $2\exp(-c/\eta)$ as long as $\eta$ is a sufficiently small constant.  Taking a union bound gives that $f(x^{(N)}) \leq f(x^{(0)})$ with probability at least $1 - 2N\exp(-c/\eta)$, which is at least $99/100$ under the conditions given.
\end{proof}

\begin{thm}
\label{thm:adaptive_analysis}
Suppose that $\norm{A}{1} \leq 1$, $\epsilon < 1/2$, and that $A$ has $-\epsilon$ as an eigenvalue. If we take $\eta$ such that
$c\eps^2\leq \eta \leq \min\parens{\frac{1}{32 \log(10/\epsilon^2)}, \frac{c}{\log\frac{1}{\epsilon}}}$,
then for some $N = \Theta\parens{\frac{1}{\epsilon\eta}\log \frac{1}{\epsilon} }$ we have $f(x^{(N)}) < 0$ with at least $1/10$ probability.
\end{thm}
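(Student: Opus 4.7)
The plan is to combine Lemma~\ref{lem:first_coord_grows} with a second-moment bound on the non-first coordinates of $x^{(N)}$, using Proposition~\ref{prop:rayleigh_quotient_decreasing} as a supporting monotonicity tool.

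\emph{Step 1 (apply the prior results).} I first verify that with $N = \Theta(\log(1/\epsilon)/(\eta\epsilon))$, chosen just large enough for Assumption~(4) of Lemma~\ref{lem:first_coord_grows}, the remaining three assumptions hold. Assumption~(1) is immediate from $\eta \leq 1/(32\log(10/\epsilon^2)) < 1/4$. For (2) and (3), $\eta^2\epsilon N = O(\eta \log(1/\epsilon))$ and $N\eta^2\epsilon^2 = O(\epsilon \eta \log(1/\epsilon))$, both of which are $O(1)$ by $\eta \leq c/\log(1/\epsilon)$. The hypothesis $\eta \leq c/(\log N+1)$ of Proposition~\ref{prop:rayleigh_quotient_decreasing} is also satisfied since $\log N = O(\log(1/\epsilon))$. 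Thus with probability at least $0.2 - 1/100$, both $x_1^{(N)}\geq 1/\epsilon^2$ (so that $-\epsilon (x_1^{(N)})^2 \leq -1/\epsilon^3$) and $f(x^{(k)})$ is non-increasing on $k\leq N$.

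\emph{Step 2 (control the other coordinates).} Decompose $f(x^{(N)}) = -\epsilon (x_1^{(N)})^2 + R$, where $R = \sum_{i \neq 1}\lambda_i (x_i^{(N)})^2$. The terms of $R$ with $\lambda_i < 0$ are non-positive and only help, so I focus on $R^{+} := \sum_{i : \lambda_i > 0}\lambda_i (x_i^{(N)})^2$. I would rerun the second-moment recurrence from Step~1 of Lemma~\ref{lem:first_coord_grows}, now starting from $e_i$ with $\lambda_i > 0$. The key point is that in this case $\gamma_i = 1 - 2\eta\lambda_i + 2\eta^2\lambda_i^2 \leq 1$, so the ``self'' term $u_i^{(k)}$ contracts and growth in $u_1^{(k)}$ is driven only by the small forcing $\eta^2\lambda_i^2 S^{(k)}$. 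Combined with $\sum_{i : \lambda_i > 0}\lambda_i \leq \|A\|_1 \leq 1$ and the uniform bound $S^{(k)} \leq 2\gamma_1^k$ from the lemma, this should yield $\mathbb{E} R^{+} \ll 1/\epsilon^3$; a Markov step then gives $R^{+} < 1/(2\epsilon^3)$ with high probability.

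\emph{Step 3 (conclude).} A final union bound against the two events from Step~1 produces $f(x^{(N)}) \leq -1/\epsilon^3 + 1/(2\epsilon^3) < 0$ with probability at least $1/10$. Proposition~\ref{prop:rayleigh_quotient_decreasing} also plays a supporting role: if the analysis in Step~2 instead first shows $f(x^{(k^{*})}) < 0$ at some intermediate step $k^{*} \leq N$, monotonicity propagates this to step $N$. The main obstacle is the sharp second-moment bound in Step~2: the uniform estimate $S^{(k)} \leq 2\gamma_1^k$ from Lemma~\ref{lem:first_coord_grows} is too loose when starting from $e_i$ with $\lambda_i > 0$ (it would give an $\mathbb{E} R^{+}$ of order $\gamma_1^N \approx 1/\epsilon^4$), so one must exploit the contractive $\gamma_i \leq 1$ directly, keeping $\mathbb{E}(x_i^{(N)})^2$ of constant order up to a small additive correction from the coupling into the $e_1$-direction.
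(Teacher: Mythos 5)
Your Step~1 is correct and matches the paper. However, Steps~2--3 take a genuinely different route from the paper, and the gap you flag at the end of Step~2 is fatal, not just a loose constant.

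Here is the issue concretely. In your Step~2 you want to show $R^{+} = \sum_{\lambda_i>0}\lambda_i(x_i^{(N)})^2 < \frac{1}{2\epsilon^3}$, and you hope that for $\lambda_i>0$ the contraction $\gamma_i\leq 1$ keeps $\mathbb{E}(x_i^{(N)})^2$ of constant order ``up to a small additive correction.'' But writing $\mathbb{E}(x_i^{(N)})^2 = \mathbb{E}\|B_N^T e_i\|^2 =: S_i^{(N)}$ and rerunning the recurrence from $y^{(0)}=e_i$, the coordinate $u_1^{(k)}$ (associated to the $-\epsilon$ eigenvalue, so $\gamma_1>1$) grows: $u_1^{(k)} = \eta^2\epsilon^2\sum_{m<k}\gamma_1^{k-1-m}S_i^{(m)}$, which couples back into $S_i^{(k)}$. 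Carrying out the induction as in the paper's Step~2 but with initial data $\gamma_i^k\leq 1$, the best one gets is $S_i^{(N)}\lesssim 1 + \eta\,\gamma_1^N$. Since $\gamma_1^N = \Theta(1/\epsilon^4)$, this ``correction'' is $\Theta(\eta/\epsilon^4)$, not constant. Summing against $\sum_{\lambda_i>0}\lambda_i\leq 1$ gives $\mathbb{E}R^{+}\lesssim 1+\eta/\epsilon^4$, which is $\ll 1/\epsilon^3$ only if $\eta\ll\epsilon$. But the theorem lets $\eta$ be as large as $c/\log(1/\epsilon)\gg\epsilon$, so the bound fails throughout most of the allowed range. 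Symptomatically, your argument never invokes the hypothesis $c\epsilon^2\leq\eta$ at all, which is a strong hint that something essential is missing: that hypothesis must be used somewhere, and in the paper it is.

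The paper avoids a second-moment analysis of $R^{+}$ entirely. After Lemma~\ref{lem:first_coord_grows} gives $x_1^{(N)}\geq 1/\epsilon^2$ with probability $0.2$, it uses Proposition~\ref{prop:rayleigh_quotient_decreasing} for a very different purpose than you do: to bound $f(x^{(N)})$ \emph{from above} by $f(x^{(0)})=O(1)$ with probability $0.99$, since the sequence $f(x^{(k)})$ is non-increasing. Then instead of trying to make $f(x^{(N)})$ itself negative, it takes \emph{one more step}. By Proposition~\ref{prop:rayleigh_quotient_update} the decrement is $\eta\,(g^TAx^{(N)})^2(2-\eta g^TAg)$. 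Conditioned on $x_1^{(N)}\geq 1/\epsilon^2$, with constant probability $(g^TAx^{(N)})^2\gtrsim(\epsilon x_1^{(N)})^2\gtrsim 1/\epsilon^2$ and $2-\eta g^TAg\geq 1/2$, giving a one-step drop of order $\eta/\epsilon^2$. This exceeds the $O(1)$ ceiling precisely when $\eta\gtrsim\epsilon^2$ --- that is where the lower bound $c\epsilon^2\leq\eta$ enters, and why it is needed. So the positive-eigenvalue coordinates never need to be controlled at all; one only needs to know $f$ has not blown up, which monotonicity gives for free.
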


\begin{proof}
Given an $\eta$ as in the statement of the theorem, choose 
$N = \left\lceil\frac{2}{\eta\epsilon} \log\frac{10}{\epsilon^2} \right\rceil,$
which satisfies the assumptions of Lemma~\ref{lem:first_coord_grows}.  Then $x^{(N)}_1 \geq \frac{1}{\epsilon^2}$ with probability at least $0.2$.  By proposition~\ref{prop:rayleigh_quotient_decreasing}, $f(x^{(N)}) \leq f(x^{(0)}) \leq 2$ with at least $0.99$  probability, using the fact that $\eta\leq \frac{c}{\log\frac{1}{\epsilon}}$ for an appropriately chosen absolute constant $c$, such that the hypothesis of Proposition~\ref{prop:rayleigh_quotient_decreasing} holds.

If $f(x^{(N)}) < 0$, then the algorithm has already terminated.  Otherwise conditioned on the events in the above paragraph, we have with at least $0.8$ probability that $2 - \eta (g^{(N)})^T A g^{(N)} \geq \frac{1}{2}$ and 
\[
(g^{(N)})^T A x^{(N)} \sim \mathcal{N}\left(0, \|Ax^{(N)}\|^2\right) \geq \frac{1}{3}\norm{Ax^{(N)}}{}
\geq \frac{1}{3}\lambda_1 x^{(N)}_1
\geq \frac{1}{3\epsilon^2}\lambda_1 \geq \frac{1}{3\epsilon}.
\] 
Then by Proposition~\ref{prop:rayleigh_quotient_update} it follows that
    $f(x^{(N+1)}) \leq f(x^{(N)}) - \frac{\eta}{20\epsilon^2} \leq 2 - \frac{\eta}{20\epsilon^2} < 0.$
\end{proof}


We also observe that we can reduce the dimension of the problem by using a result of Andoni and Nguyen.  This allows us to avoid a $\log d$ dependence.
\begin{prop}
\label{prop:andoni_sketching_result}
Suppose that $A$ satisfies $\lambda_{\min}(A) < -\alpha \norm{A}{1},$ and let $G \in \R^{d\times m}$ have independent $\mathcal{N}(0,\frac1d)$ entries.  Then we can choose $m=O(1/\alpha)$ such that $\lambda_{\min}(G^T A G) < -\alpha/2$ and $\norm{G^T A G}{1} \leq 2 \norm{A}{1}.$
\end{prop}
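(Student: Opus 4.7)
The plan is to establish the two bounds $\|G^T A G\|_1 \leq 2\|A\|_1$ and $\lambda_{\min}(G^T A G) < -\alpha/2$ separately, each with high constant probability, and combine via a union bound. Throughout I assume without loss of generality that $\|A\|_1 = 1$, so that $\lambda_{\min}(A) < -\alpha$. Since the joint distribution of $G^T A G$ is invariant under conjugating $A$ by an orthogonal matrix (and conjugating $G$ accordingly), I may also assume $A$ is diagonal, with $A_{11} = \lambda_{\min}(A)$.

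For the Schatten-$1$ bound, write $A = A^{+} - A^{-}$ with $A^{+}, A^{-} \succeq 0$ supported on disjoint eigenspaces. Then by the triangle inequality and positivity,
\[
\|G^T A G\|_1 \;\leq\; \|G^T A^{+} G\|_1 + \|G^T A^{-} G\|_1 \;=\; \Tr(G^T A^{+} G) + \Tr(G^T A^{-} G) \;=\; \Tr(|A|\, G G^T).
\]
The right-hand side is a non-negative quadratic form in the independent Gaussian entries of $G$ whose expectation is proportional to $\|A\|_1$. Hanson--Wright (or a direct application of Bernstein to the sum of the $m$ independent sub-exponentials $g_i^T |A| g_i$) gives sharp concentration, so with probability at least, say, $5/6$, we obtain $\|G^T A G\|_1 \leq 2\|A\|_1$.

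For the $\lambda_{\min}$ bound, let $v = e_1$ be the bottom eigenvector of $A$ and consider the test direction $u^{\star} = G^T v / \|G^T v\|$ in $\R^m$. By the variational characterization,
\[
\lambda_{\min}(G^T A G) \;\leq\; (u^{\star})^T G^T A G\, u^{\star} \;=\; \frac{v^T (GG^T)\, A\, (GG^T)\, v}{\|G^T v\|^2}.
\]
I would decompose $A = \lambda_{\min}(A)\, v v^T + A'$, where $A'v = 0$ and $\|A'\|_1 \leq 1$. This splits the numerator into a main term $\lambda_{\min}(A)\, (v^T GG^T v)^2$ and a cross term $v^T (GG^T)\, A'\, (GG^T)\, v$. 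After dividing by $\|G^T v\|^2 = v^T GG^T v$, the main term gives the signal $\lambda_{\min}(A) \cdot v^T GG^T v$, which by standard $\chi^2$ concentration lies in a tight interval around its mean with high probability. The cross term is the noise and is controlled by a second-moment computation (quartic Gaussian expectations via Isserlis), using $\|A'\|_1 \leq 1$.

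The main obstacle will be showing that the noise from $A'$ is strictly smaller than the signal from $-\alpha\, v v^T$ once the normalization by $\|G^T v\|^2$ is accounted for; this is where the choice of $m$ enters. Carrying out the moment computation and applying Chebyshev yields $\lambda_{\min}(G^T A G) < -\alpha/2$ with probability at least $5/6$ for a suitable $m = O(1/\alpha)$. A final union bound with the Schatten-$1$ estimate produces an event of positive probability on which both conditions hold, which (by standard probability amplification, if needed) suffices to establish the proposition.
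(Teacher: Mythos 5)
Your Schatten-$1$ bound is the same argument the paper uses: split $A$ into its PSD and negative-semidefinite parts, observe that the Schatten-$1$ norm of each sketched piece equals its trace, and apply standard trace-estimator concentration. For the eigenvalue bound, however, you take a genuinely different route: the paper simply cites Theorem~1.1 of Andoni--Nguyen \cite{andoni2013eigenvalues}, whereas you construct the explicit test direction $u^\star = G^T v/\|G^T v\|$ and split $A = \lambda_{\min}(A)\, vv^T + A'$. Your approach is more self-contained and elementary, and it does go through, with one clean-up suggestion: the ``quartic Gaussian expectations via Isserlis'' step understates the work, since the cross term is degree four in $G$ and a second-moment bound therefore requires degree-eight Gaussian moments. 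A cleaner route is to condition on $G^T v$: writing $g_j = (g_j^T v)\,v + h_j$ with the $h_j$ supported on $v^\perp$ and independent of $G^T v$, the cross term becomes a single Gaussian quadratic form with matrix $A'$ and covariance proportional to $\|G^T v\|^2$, and Hanson--Wright on that conditional form gives the concentration you need, with the $1/m$ scaling appearing directly. One caveat that applies to both your sketch and the paper's statement: with $G$ having $\mathcal{N}(0,1/d)$ entries as written, $v^T G G^T v$ concentrates around $m/d$ rather than $1$, so your ``signal'' term is $\lambda_{\min}(A)\cdot m/d$, not $\lambda_{\min}(A)$, and likewise $\Tr(G^T A_\pm G)$ concentrates around $(m/d)\Tr(A_\pm)$ rather than $\Tr(A_\pm)$; the stated conclusions require the normalization $\mathcal{N}(0,1/m)$ so that $G^T$ is an approximate isometry on the bottom eigendirection.
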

\begin{proof}
For the first claim, we simply apply Theorem 1.1 in \cite{andoni2013eigenvalues} and (in their notation) set $\eps = O(1)$ and $k = O(1/\alpha).$

To show that the Schatten $1$-norm does not grow too much under the sketch, we first write $A = A_{+} + A_{-}$ where the nonzero eigenvalues of $A_{+}$ are exactly the positive eigenvalues of $A$.  Then using the usual analysis of Hutchinson's trace estimator (see \cite{meyer2021hutch++} for example), we have
\begin{align*}
    \norm{G^T A G}{1} &\leq \norm{G^T A_+ G}{1} + \norm{G^T A_- G}{1} = \Tr(G^T A_+ G) + \Tr(G^T A_- G)\\ 
    &= (1\pm O(1/\sqrt{m}))\Tr(A_{+}) + (1\pm O(1/\sqrt{m}))\Tr(A_{-})\\ 
    &\leq (1\pm O(1/\sqrt{m})\norm{A}{1}.
\end{align*}

\end{proof}

\noindent We are now ready to give the proof of Theorem~\ref{thm:adaptive_vmv_upper}.
\begin{proof}
The above result applies after scaling the $\eta$ given in Theorem~\ref{thm:adaptive_analysis}
by $1/\norm{A}{1}$.  So it suffices to choose $\eta$ to be bounded above by $$\frac{1}{\norm{A}{1}}\min\parens{\frac{1}{32 \log(10/\epsilon^2)}, \frac{c}{\log\frac{1}{\epsilon}}},$$ and within a constant factor of this value. 

To choose an $\eta$, pick a standard normal $g$, and compute $Ag$ using $1/\eps$ vector-matrix-vector queries.  Then with constant probability, $\lambda_{\max}(A) \leq \norm{Ag}{} \leq 2d\lambda_{\max}.$ Given this, we have
\begin{equation}
    d\norm{Ag}{} \geq \norm{A}{1} \geq \frac{\norm{Ag}{}}{2d},
\end{equation}
which allows us to approximate $\norm{A}{1}$ to within a factor of $d^2$ with constant probability.  Given this, one may simply try the above algorithm with an $\eta$ at each of $O(\log(d^2)) = O(\log d)$ different scales, with the cost of an extra $\log d$ factor. 

Finally, we may improve the $\log d$ factor to a $\log(1/\eps)$ factor by using Proposition~\ref{prop:andoni_sketching_result} to sketch $A$, and then applying the above analysis to $G^T A G.$  Note that the sketch may be used implicitly; once $G$ is chosen, a vector-matrix-vector query to $G^T A G$ can be simulated with a single vector-matrix-vector query to $A.$
\end{proof}

\subsection{Lower bounds}
We will show a bound for two-sided testers which will imply that the bound for \ptester{1}s given in Theorem~\ref{thm:adaptive_vmv_upper} is tight up to $\log$ factors. If we require the tester to have one-sided error, then we additionally show that the bound in Corollary~\ref{cor:adaptive_vmv_upper_p} is tight for all $p$.  Note that this distinction between one-sided and two-sided testers is necessary given Theorem~\ref{thm:adaptive_l2_upper}.

In order to obtain these lower bounds for adaptive testers, we first show corresponding lower bounds for non-adaptive testers. A minor modification to Lemma 3.1 in \cite{sun2019querying} shows that an adaptive tester can have at most quadratic improvement over a non-adaptive tester.  This will allow us to obtain our adaptive lower bounds as a consequence of the non-adaptive bounds.

\subsubsection{Non-adaptive lower bounds}

We first observe that a one-sided tester must always be able to produce a witness matrix $X$, that at least certifies that $A$ is not positive definite.

\begin{prop}
\label{prop:one_sided_vmv_implies_witness}
If a one-sided tester makes a series of symmetric linear measurements $\inner{M_i}{A}$ of $A$, and outputs \False\,on a given instance, then there must exist nonzero $X\in \spn(M_1,\ldots, M_k)$ such that $X$ is PSD and $\inner{X}{A} \leq 0.$
\end{prop}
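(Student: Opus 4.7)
The plan is to combine the semantic consequence of the one-sided guarantee (no PSD matrix is consistent with the observations) with a separating-hyperplane argument in the space of symmetric matrices.

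\textbf{Step 1 (reducing to an infeasibility statement).} First, fix the random coins of the tester on this particular execution. Then the sequence of queries $M_1, \ldots, M_k$ is a deterministic function of the responses, and the final verdict \False\ is likewise deterministic. If some matrix $B \in \psdcone{d}$ satisfied $\inner{M_i}{B} = \inner{M_i}{A}$ for every $i$, then replaying the tester with the same coins on $B$ would produce the same queries and the same responses, and hence again output \False\, contradicting the one-sided guarantee on the PSD input $B$. Therefore the affine set
\begin{equation*}
    L := \{B \text{ symmetric} : \inner{M_i}{B} = \inner{M_i}{A} \text{ for all } i\}
\end{equation*}
is disjoint from $\psdcone{d}$. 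Note that adaptivity is handled for free: once the coins are fixed, the query sequence is a deterministic function of past responses, so any $B \in L$ generates the identical transcript.

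\textbf{Step 2 (separation).} Next, I apply the separating hyperplane theorem to the disjoint convex sets $L$ and $\psdcone{d}$. Since $\psdcone{d}$ has nonempty interior in the space of symmetric matrices (equipped with the Frobenius inner product), this produces a nonzero symmetric $X$ and a scalar $\alpha$ with $\inner{X}{B} \leq \alpha$ for all $B \in L$ and $\inner{X}{Y} \geq \alpha$ for all $Y \in \psdcone{d}$. Because $L$ is affine, the linear functional $B \mapsto \inner{X}{B}$ must be constant on $L$, otherwise it would be unbounded above along some direction parallel to $L$. Consequently $X$ is orthogonal to the direction space of $L$, which is exactly $\spn(M_1, \ldots, M_k)^{\perp}$, so $X \in \spn(M_1, \ldots, M_k)$.

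\textbf{Step 3 (extracting the PSD witness).} Finally, the cone structure of $\psdcone{d}$ pins down $X$ and $\alpha$. Taking $Y = 0$ in the second inequality gives $\alpha \leq 0$, while applying that inequality to $tY$ with $t \to \infty$ for any fixed $Y \succeq 0$ forces $\inner{X}{Y} \geq 0$ for every PSD $Y$. By the self-duality of $\psdcone{d}$ under the Frobenius inner product, this is equivalent to $X \in \psdcone{d}$. Since $A \in L$, we obtain $\inner{X}{A} = \alpha \leq 0$, and $X \neq 0$ by construction, so $X$ is the required witness.

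\textbf{Expected obstacle.} The only mildly delicate point is guaranteeing that the separating functional is nonzero, but this is immediate from the standard separation theorem because $\psdcone{d}$ has nonempty interior in the symmetric matrix space and $L$ avoids it; no appeal to a proper separation / relative-interior strengthening is needed. Everything else is bookkeeping once adaptivity is removed at the outset by fixing the random coins.
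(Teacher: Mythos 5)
Your proof is correct, and it follows a genuinely different (in fact, dual) route from the paper's. The paper argues by contrapositive: assuming $\langle A, X\rangle > 0$ for every nonzero $X \in W \cap \Delta^d_+$, it extends the linear functional $\phi = \langle A, \cdot\rangle$ from $W$ to a functional $\widetilde{\phi}$ that is nonnegative on the whole PSD cone, shows $\widetilde{\phi}$ is represented by a PSD matrix $B$ consistent with all measurements, and concludes the one-sided tester cannot have rejected. You instead start from the consequence of one-sidedness — the affine consistency set $L$ avoids $\Delta^d_+$ — and directly apply separation between $L$ and the cone, reading the witness $X$ off the normal of the separating hyperplane. Both proofs hinge on convex separation in symmetric-matrix space, but yours is more economical: it skips the construction of the auxiliary hyperplane $H$, the projection $\Pi$, and the verification of the four conditions the paper imposes on $H$, and it produces $X$ constructively rather than by contradiction. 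Your argument that $X$ lies in $\spn(M_1,\dots,M_k)$ (constancy of a bounded-above affine functional on an affine set forces orthogonality to its direction space) and the cone argument giving $X \succeq 0$ via self-duality are both clean. One small imprecision: the separation theorem gives $\langle X, A\rangle \le \alpha$ rather than equality (you can of course normalize $\alpha$ to be the constant value on $L$, but as written the equals sign is not forced); since all you need is $\langle X, A\rangle \le \alpha \le 0$, the conclusion is unaffected.
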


\begin{proof}

We work within the space of symmetric matrices. Let $W = \spn(M_1,\ldots, M_k)$, and let $\phi(X) = \inner{A}{X}$ be the linear functional associated with $A.$ Now suppose that $\phi$ is strictly positive for all nonzero $X \in W \cap \psdcone{d}$. We will construct $\widetilde{\phi}$ that agrees with $\phi$ on $W$ and is non-negative on $\psdcone{d}$.

Let $W' = \ker(\phi) \cap W$, and note that $W' \cap \psdcone{d} = \{0\}$.  Now by convexity of $\psdcone{d}$, there exists a hyperplane $H$ and associated half-space $H^+$ such that (i)$H$ contains $W'$ (ii) $H\cap \psdcone{d} = \{0\},$ (iii) $H^+ \supseteq \psdcone{d}$ and (iv) $\phi$ is non-negative on $H^+ \cap W$ . Moreover, since $W'$ intersects $\psdcone{d}$ trivially, $H$ can be chosen such that $H\cap W = W'.$ Now let $\Pi$ be a projection onto $W$ that maps $H$ to $W'$, and choose $\widetilde{\phi} = \phi\circ \Pi_W$.

The linear functional $\widetilde{\phi}$ is represented by the inner product against some symmetric matrix $B.$  By construction of $\widetilde{\phi}$, we have $\inner{B}{M_i} = \inner{A}{M_i}$ for all $i$, and also $\inner{B}{X} \geq 0$ for all PSD $X$. So in particular $\inner{B}{xx^T} = x^T B x \geq 0$ for all $x$, which implies that $B$ is PSD. Given the existence of the PSD matrix $B$ consistent with all measurements, the one-sided tester must not reject.



\end{proof}

We are now able to give an explicit non-PSD spectrum which is hard for any one-sided tester.  Specifically, we show that it is hard for any vector-matrix-vector query algorithm to produce a witness $X$ in the sense of the proposition above.

\begin{thm}
\label{thm:nonadaptive_explicit_hard_instance}

Let $\lambda>0$ and suppose for all matrices $A$ with spectrum $(-\lambda,1,\ldots,1)$ that a non-adaptive one-sided tester $\mathcal{T}$ outputs \False\,with $2/3$ probability.  Then $\mathcal{T}$ must make at least $\frac19 \parens{\frac{d}{1+\lambda}}^2$ vector-matrix-vector queries.

\end{thm}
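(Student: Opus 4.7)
The plan is to combine Proposition~\ref{prop:one_sided_vmv_implies_witness} with a second-moment estimate for the Frobenius projection of the random matrix $uu^T$ onto the span of the queries. By rotational symmetry, every matrix $A$ with spectrum $(-\lambda,1,\dots,1)$ can be written as $A=I-(1+\lambda)uu^T$ for some $u\in S^{d-1}$, so by Yao's minimax it suffices to fix a deterministic non-adaptive tester with symmetric query matrices $M_1,\dots,M_k$ spanning some $W\subseteq \mathrm{Sym}_d(\R)$ of dimension at most $k$ and show that, for $u$ uniform on $S^{d-1}$, the event in Proposition~\ref{prop:one_sided_vmv_implies_witness} occurs with probability strictly less than $2/3$. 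Rescaling the witness to have unit trace, that event reads $\phi(u)\ge 1/(1+\lambda)$, where $\phi(u):=\sup\{u^TYu : Y\in W,\ Y\succeq 0,\ \Tr(Y)=1\}$.

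Next I would bound $\phi(u)$ by a quantity whose moments are easy to control. Any trace-one PSD matrix satisfies $\|Y\|_F\le 1$, and because $Y\in W$ we have $\inner{Y}{uu^T}=\inner{Y}{P_W(uu^T)}$; Cauchy--Schwarz then gives $\phi(u)\le \|P_W(uu^T)\|_F$. Applying Markov's inequality to $\|P_W(uu^T)\|_F^2$ yields
\[
\Pr_u[\phi(u)\ge 1/(1+\lambda)] \;\le\; (1+\lambda)^2\,\E_u[\|P_W(uu^T)\|_F^2].
\]

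The central computation is this second moment. Expanding $\|P_W(uu^T)\|_F^2=\sum_{j=1}^{k}(u^TE_ju)^2$ in a Frobenius-orthonormal basis $\{E_1,\dots,E_k\}$ of $W$ and invoking the standard identity
\[
\E_u[(u^TEu)^2] = \frac{2\|E\|_F^2+\Tr(E)^2}{d(d+2)},
\]
which is derived by rotating $E$ to diagonal form and using $\E[u_i^2u_j^2]=(1+2\delta_{ij})/(d(d+2))$, one obtains $\E[\|P_W(uu^T)\|_F^2]\le(2k+\|P_W(I)\|_F^2)/(d(d+2))\le(2k+d)/(d(d+2))$, since $\|P_W(I)\|_F\le\|I\|_F=\sqrt d$. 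Substituting and requiring the resulting upper bound on the bad-event probability to be at least $2/3$ rearranges (after absorbing the lower-order $d$ term) to the claimed $k\ge \frac{1}{9}(d/(1+\lambda))^2$.

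The main non-trivial ingredient is the moment identity above; everything else is linear algebra. Note that the argument uses only $\dim W\le k$, so the same proof in fact delivers an identical lower bound for arbitrary symmetric linear measurements, not only for rank-two vector-matrix-vector queries.
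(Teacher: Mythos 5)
Your outline follows the paper's approach closely — the reduction via Proposition~\ref{prop:one_sided_vmv_implies_witness}, the second-moment identity for $\E[(u^TEu)^2]$, and the Markov step are all correct and match the paper's Lemma~\ref{lem:identity}. However, there is a genuine gap in how you bound $\|P_W(I)\|_F^2$, and it is not cosmetic.

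You use the trivial bound $\|P_W(I)\|_F^2 \leq \|I\|_F^2 = d$, which gives
$\E[\|P_W(uu^T)\|_F^2] \leq (2k+d)/(d(d+2))$
and hence, after Markov, $k \geq \frac{d(d+2)}{3(1+\lambda)^2} - \frac{d}{2}$. This is at least $\frac19(d/(1+\lambda))^2$ only when $(1+\lambda)^2 \leq (4d+12)/9$, i.e.\ roughly when $1+\lambda \lesssim \frac23\sqrt d$; for larger $\lambda$ the ``lower-order'' $d/2$ term actually dominates and the bound degenerates to nothing. The theorem must hold for all $\lambda>0$, and the regime $\lambda\gg\sqrt d$ is exactly what is used in the downstream Corollary (Theorem~\ref{thm:nonadaptive_one_sided_vmv_lower} takes $\lambda=\eps(d-1)^{1/p}$, which for small $p$ and modest $\eps$ far exceeds $\sqrt d$). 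So the hand-wave ``absorbing the lower-order $d$ term'' does not go through.

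The missing observation is that after the polarization reduction (queries $x_i^TAx_i$ against rank-one matrices $x_ix_i^T$, costing only a factor of three), $W$ is spanned by $k$ rank-one matrices, so \emph{every} matrix in $W$, and in particular $P_W(I)$, has rank at most $k$. By Eckart--Young the best rank-$k$ Frobenius approximation of $I$ has squared error $d-k$, so $\|P_W(I)\|_F^2 = d - \|P_{W^\perp}(I)\|_F^2 \leq d-(d-k)=k$. Plugging $k$ rather than $d$ into your moment estimate gives $\E[\|P_W(uu^T)\|_F^2]\leq 3k/(d(d+2))\leq 3k/d^2$ with no additive term, and then Markov produces $k\geq\frac19(d/(1+\lambda))^2$ for every $\lambda>0$. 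This also means your final remark is not right: the argument does \emph{not} transfer verbatim to arbitrary symmetric linear measurements, because for general $M_i$ the subspace $W$ may contain full-rank matrices (e.g.\ $I$ itself), in which case $\|P_W(I)\|_F^2$ can be as large as $d$ and the sharper bound fails.
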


\begin{proof}

By the polarization identity, $$x^T A y = \frac{1}{2}\parens{(x+y)^T A (x+y) - y^T A y - x^T A x},$$ we may assume that all queries are of the form $x_i^T A x_i,$ at the cost requiring at most a factor of three increase in the number of queries.


We set $A = I - (1+\lambda)vv^T$ where $v$ is uniform over $S^{d-1}$, and let $W = \spn(x_1x_1^T,\ldots x_kx_k^T).$  By Proposition~\ref{prop:one_sided_vmv_implies_witness}, the tester may only reject if there is an $X$ in $W \cap \psdcone{d}$ with $\norm{X}{F} = 1$ such that $\inner{X}{A} \leq 0.$ For such an $X$ we have 
\begin{equation}
    \inner{vv^T}{X} \geq \frac{\Tr(X)}{1 + \lambda} \geq \frac{1}{1+\lambda}.
\end{equation}
But since $vv^T$ and $X$ both have unit norm and $X\in W$, this condition implies that $\norm{\Pi_W(vv^T)}{F} \geq \frac{1}{1+\lambda}.$






Now we turn to understanding $\E(\norm{\Pi_W(v v^T)}{F}^2).$ Indeed we have the following:

\begin{lem}
\label{lem:identity}
Let $v$ be drawn uniformly from $S^{d-1},$ and let $W$ be a $k$-dimensional subspace of the $d\times d$ symmetric matrices.  Let $\alpha_4 = \E(v_1^4)$ and $\alpha_{22} = E(v_1^2v_2^2).$ Then $$\E(\norm{\Pi_W(v v^T)}{F}^2) = (\alpha_4 - \alpha_{22})k + \alpha_{22} \norm{\Pi_W(I)}{F}^2,$$ where $I$ is the identity matrix.
\end{lem}

\begin{proof}
Let $M_1,\ldots M_k$ be an orthonormal basis for $W$.  By the Pythagorean theorem, 
\begin{equation}
    \label{pythag}
    \E(\norm{\Pi_W(v v^T)}{F}^2) = \sum_{i=1}^k \E(\norm{\Pi_{M_i}(v v^T)}{F}^2).
\end{equation}

For fixed $M$ we have 
\begin{equation}
    \E(\norm{\Pi_M(v v^T)}{F}^2) = \E(\inner{v v^T}{M}^2) = \E(\Tr(v v^T M)^2) = \E((v^T M v)^2).\\
\end{equation}

Since $M$ is symmetric, we can diagonalize $M$ to $D = \diag(a_1, \ldots, a_d)$ in some orthonormal basis.  Since $M$ has unit norm, $a_1^2 + \ldots + a_d^2 = 1.$ Then we have
\begin{align*}
    \E((v^T M v)^2) &= \E((v^T D v)^2)\\
    &= \E( (a_1x_1^2 + \ldots + a_d x_d^2)^2  )\\
    &= \alpha_4 (a_1^2 + \ldots + a_d^2) + 2\alpha_{22} \sum_{i<j}a_ia_j\\
    &= \alpha_4 + 2\alpha_{22} \sum_{i<j}a_ia_j.
\end{align*}

Next observe that $$\Tr(M)^2 = (a_1 + \ldots + a_d)^2  = a_1^2 + \ldots + a_d^2 + 2\sum_{i<j} a_i a_j = 1 + 2\sum_{i<j} a_i a_j,$$ so that $$\E((v^T M v)^2) = \alpha_4 + \alpha_{22}(\Tr(M)^2 - 1).$$  Combining with \eqref{pythag} gives

\begin{align*}
    \E(\norm{\Pi_W(v v^T)}{F}^2) &= \sum_{i=1}^k \parens{\alpha_4 + \alpha_{22}(\Tr(M_i)^2 - 1)}\\
    &= (\alpha_4 - \alpha_{22})k + \alpha_{22} \sum_{i=1}^k \Tr(M_i)^2.
\end{align*}

Finally, observe that $\sum_{i=1}^k \Tr(M_i)^2 = \sum_{i=1}^k \inner{I}{M_i}^2 = \norm{\Pi_W(I)}{F}^2,$ by the Pythagorean theorem, which finishes the proof.
\end{proof}

\begin{rmk}
\label{rmk:unit_vect_moment_cals}
While approximations would suffice, this result gives a quick way to compute $\alpha_4$ and $\alpha_{22}.$  Set $W$ to be the entire space of $n\times n$ symmetric matrices, and $k = d(d+1)/2.$ The previous result gives $$1 = \frac{d(d+1)}{2}(\alpha_4 - \alpha_{22}) + d\alpha_{22}.$$ On the other hand, by expanding $1 = (v_1^2 + \ldots + v_d^2)^2,$ we have $$1 = d\alpha_4 + d(d-1)\alpha_{22}.$$  Solving the system yields $\alpha_4 = \frac{3}{d(d+2)}$ and $\alpha_{22} = \frac{1}{d(d+2)}.$
\end{rmk}

To finish the proof of the theorem, we recall that $W$ is spanned by the matrices $x_1 x_1^T, \ldots, x_k x_k^T,$ each of which has rank one.  Therefore each matrix in $W$, and in particular $\Pi_W(I)$, has rank at most $k$. 

We recall for a general matrix $A,$ that $\argmin_{\rank(U) \leq k}\norm{A - U}{F}$ is gotten by truncating all but the largest $k$ singular values of $A.$  Applying this to the identity matrix, when $k\leq d,$ we see that $$\norm{\Pi_W(I)}{F}^2 = \norm{I}{F}^2 - \norm{\Pi_{W^{\perp}}(I)}{F} \leq d - (d-k) = k,$$
since $\norm{\Pi_{W^{\perp}}(I)}{F}^2 \geq \min_{\rank(U)\leq k} \norm{I - U}{F}^2 = d-k.$  Since $\norm{I}{F}^2 = d,$ we always have $\norm{\Pi_W(I)}{F}^2 \leq k.$

Combining this fact with Lemma~\ref{lem:identity} gives 
\begin{equation*}
    \E(\norm{\Pi_W(v v^T)}{F}^2) \leq (\alpha_4 - \alpha_{22})k + \alpha_{22}k
    = k \alpha_4
    = \frac{3k}{d(d+2)}
    \leq \frac{3k}{d^2},
\end{equation*}
and by Markov's inequality, $$\pr\left(\norm{\Pi_W(v v^T)}{F}^2 > \frac{9k}{d^2}\right) \leq \frac{1}{3}.$$ So with probability $2/3,$ $\norm{\Pi_W(v v^T)}{F}^2 \leq \frac{9k}{d^2}.$ But for $\mathcal{A}$ to be correct, we saw that we must have $\norm{\Pi_W(v v^T)}{F} \geq \frac{1}{1+\lambda}$ with probability $2/3$.  It follows that $$\parens{\frac{1}{1+\lambda}}^2 \leq \frac{9k}{d^2},$$ which implies that $$k\geq \frac{1}{9}\parens{\frac{d}{1+\lambda}}^2.$$
\end{proof}

In particular, this result implies that for non-adaptive one-sided testers, a $\poly(1/\eps)$ \ptester{p} can only exist for $p=1.$ 

\begin{thm}
\label{thm:nonadaptive_one_sided_vmv_lower}
A one-sided non-adaptive \ptester{p} must make at least $\Omega(\frac {1}{\eps^2} d^{2-2/p})$ vector-matrix-vector queries. 
\end{thm}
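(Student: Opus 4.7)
The plan is a reduction to Theorem~\ref{thm:nonadaptive_explicit_hard_instance}. The previous theorem gives a query lower bound of $\tfrac19(d/(1+\lambda))^2$ for distinguishing the PSD cone from matrices with spectrum $(-\lambda,1,\ldots,1)$, so it suffices to choose $\lambda$ as small as possible subject to the constraint that $A=\diag(-\lambda,1,\ldots,1)$ is $\epsilon$-far from PSD in the Schatten-$p$ sense.

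For this $A$ we have $\lambda_{\min}(A)=-\lambda$ and $\|A\|_p = (\lambda^p + d-1)^{1/p}$, so the $\epsilon$-far condition is $\lambda \geq \epsilon(\lambda^p+d-1)^{1/p}$. Using the sub-additivity $(\lambda^p+d-1)^{1/p}\leq \lambda+(d-1)^{1/p}\leq \lambda + d^{1/p}$, I would verify that the choice
\[
\lambda := 2\epsilon d^{1/p}
\]
satisfies the constraint whenever $\epsilon\leq 1/2$: indeed $\epsilon(\lambda+d^{1/p}) = 2\epsilon^2 d^{1/p} + \epsilon d^{1/p} \leq 2\epsilon d^{1/p} = \lambda$. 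This is a one-line calculation and is the only place where the Schatten-$p$ norm enters.

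Plugging this $\lambda$ into Theorem~\ref{thm:nonadaptive_explicit_hard_instance}, the lower bound becomes
\[
\frac{1}{9}\parens{\frac{d}{1+2\epsilon d^{1/p}}}^2.
\]
In the main regime $\epsilon d^{1/p}\gtrsim 1$ this simplifies to $\Omega\!\parens{(d^{1-1/p}/\epsilon)^2}=\Omega(d^{2-2/p}/\epsilon^2)$, matching the claim. In the complementary regime $\epsilon d^{1/p}\lesssim 1$, the denominator is $O(1)$ and the bound is the stronger $\Omega(d^2)$, which subsumes $\Omega(d^{2-2/p}/\epsilon^2)$ (since $d^{2-2/p}/\epsilon^2 \leq d^{2-2/p}\cdot d^{2/p}=d^2$ in that regime, after replacing $\lambda$ by the larger admissible value $\lambda=1$ if necessary). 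Combining the two regimes yields the theorem.

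There is no substantive obstacle; the entire work has already been done in Theorem~\ref{thm:nonadaptive_explicit_hard_instance}, and the present theorem is a matter of tuning the free parameter $\lambda$ to the Schatten-$p$-scaled farness requirement. The only care needed is the casework at the boundary $\epsilon d^{1/p}\asymp 1$ and being explicit that the hard instance does satisfy $\lambda_{\min}(A)\leq -\epsilon\|A\|_p$.
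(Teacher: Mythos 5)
Your approach matches the paper's proof exactly: both invoke Theorem~\ref{thm:nonadaptive_explicit_hard_instance} on the spectrum $(-\lambda, 1, \ldots, 1)$ with $\lambda = \Theta(\eps d^{1/p})$ (the paper takes $\lambda = \eps (d-1)^{1/p}$; you take $2\eps d^{1/p}$, which in fact verifies the farness condition more cleanly than the paper's choice, since $\eps(\eps^p(d-1) + (d-1))^{1/p} = (1+\eps^p)^{1/p}\eps(d-1)^{1/p}$ slightly exceeds $\eps(d-1)^{1/p}$). One small slip: in the $\eps d^{1/p} \lesssim 1$ regime you assert $d^{2-2/p}/\eps^2 \leq d^2$, but the inequality runs the other way there ($1/\eps^2 \gtrsim d^{2/p}$ gives $d^{2-2/p}/\eps^2 \gtrsim d^2$); the correct reading is that the theorem's stated bound should be interpreted with the implicit cap $\min(d^{2-2/p}/\eps^2,\, d^2)$, which your construction does achieve, and which the paper silently ignores.
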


\begin{proof}
This follows as a corollary of Theorem~\ref{thm:nonadaptive_explicit_hard_instance};  simply apply that result to the spectrum $(\eps (d-1)^{1/p}, 1\ldots, 1)$ where there are $d-1$ $1$'s.
\end{proof}

\subsubsection{Adaptive lower bounds}

As remarked earlier, our adaptive lower bounds follow as a corollary of our non-adaptive bounds, and a slightly modified version of Lemma 3.1 in \cite{sun2019querying}, which we give here.

\begin{lem}
\label{lem:at_most_quadratic_improvement}
Let $A = X\Sigma X^T$ be a random symmetric $d\times d$ real-valued matrix, with $\Sigma$ diagonal, and where $X$ is orthonormal and sampled from the rotationally invariant distribution.  Any $s$ adaptive vector-matrix-vector queries to $A$ may be simulated by $O(s^2)$ non-adaptive vector-matrix-vector queries.
\end{lem}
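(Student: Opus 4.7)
The plan is to build a non-adaptive simulator $\mathcal{B}$ making $(2s)^2$ vector-matrix-vector queries whose output faithfully matches that of $\mathcal{A}$'s $s$ adaptive queries. The lynchpin is the rotational invariance of the distribution of $A = X\Sigma X^T$: for any fixed orthogonal $R$ one has $R^T A R \stackrel{d}{=} A$. First, $\mathcal{B}$ samples a uniformly random orthonormal set $q_1, \ldots, q_{2s}$ in $\mathbb{R}^d$ (drawn independently of $A$) and non-adaptively queries the scalars $M_{ij} := q_i^T A q_j$ for all $i, j \in [2s]$. These $(2s)^2$ numbers pin down $A$ restricted to $V := \spn(q_1, \ldots, q_{2s})$ in the basis $\{q_i\}$.

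Next, $\mathcal{B}$ simulates $\mathcal{A}$ internally. Writing $W_j := \spn(u_1, v_1, \ldots, u_j, v_j)$, whenever $\mathcal{A}$ issues its $k$-th query $(u_k, v_k)$, $\mathcal{B}$ extends a previously-built linear isometry $\Phi_{k-1} : W_{k-1} \to V$ to an isometry $\Phi_k : W_k \to V$; this is possible because $\dim W_k \leq 2k \leq 2s = \dim V$. It then returns to $\mathcal{A}$ the value $\Phi_k(u_k)^T A\, \Phi_k(v_k)$, which is a fixed bilinear form in the precomputed $M_{ij}$ and so is available to $\mathcal{B}$ without any further queries.

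The remaining task is to show by induction on $k$ that the joint distribution of queries and responses produced by this simulation coincides with that of $\mathcal{A}$'s actual interaction with $A$. Given the inductive hypothesis through step $k-1$, and since $(u_k, v_k)$ is a deterministic function of the history and $\mathcal{A}$'s internal randomness, it suffices to show that, conditional on the common history and $(u_k, v_k)$, the scalar $\Phi_k(u_k)^T A\, \Phi_k(v_k)$ is distributed like $u_k^T A v_k$. Extending $\Phi_k$ to a full orthogonal matrix $R_k$ on $\mathbb{R}^d$, the simulated response equals $u_k^T (R_k^T A R_k) v_k$, and the desired equality will follow from rotational invariance once one checks that $R_k$ preserves the $k-1$ linear constraints already imposed on $A$ by prior responses.

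The main obstacle is executing this last verification carefully: $R_k$ is a measurable function of $A$ through the queries, so one cannot apply $R^T A R \stackrel{d}{=} A$ naively with $R = R_k$. The resolution is to observe that $R_k$ is a deterministic function of the conditioning variables $(u_i, v_i, r_i)_{i \leq k-1}$ together with $(u_k, v_k)$, while the conditional law of $A$ given these variables remains invariant under any orthogonal transformation that preserves the linear constraints $u_i^T A v_i = r_i$ for $i \leq k-1$. By construction $\Phi_k$ agrees with $\Phi_{k-1}$ on $W_{k-1}$, so $R_k$ acts trivially on these constraints, and the rotational invariance of the conditional law then supplies the required distributional equality, closing the induction and hence exhibiting the desired $O(s^2)$-query non-adaptive simulation.
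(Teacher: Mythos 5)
Your overall plan matches the idea behind the paper's proof (the paper itself cites and lightly modifies Lemma 3.1 of Sun--Woodruff--Zhang after first reducing to quadratic-form queries via polarization), but your final paragraph—the step that actually proves distributional equality—contains a gap.

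The claim that ``$R_k$ acts trivially on these constraints'' is not correct. By construction $R_k$ extends $\Phi_{k-1}$, which is an isometry from $W_{k-1}$ \emph{into} $V = \spn(q_1,\ldots,q_{2s})$; unless the $q_i$ happen to span $W_{k-1}$, the map $R_k$ does \emph{not} fix $W_{k-1}$ pointwise. Consequently $u_i^T (R_k^T A R_k) v_i = \Phi_{k-1}(u_i)^T A\, \Phi_{k-1}(v_i)$, which is the constraint the simulated process has imposed on $A$, not the real-process constraint $u_i^T A v_i = r_i$. So the ``conditional law of $A$ given the history'' is actually a different object in the two processes: in the real run you have conditioned on $\{u_i^T A v_i = r_i\}$, while in the simulated run you have conditioned on $\{\Phi_{k-1}(u_i)^T A \Phi_{k-1}(v_i) = r_i\}$ (and additionally on the realization of the $q$'s). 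Applying rotational invariance ``to the conditional law'' with a rotation $R_k$ that is itself a measurable function of those same conditioning variables does not collapse this distinction, and the induction does not close as written.

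What is actually needed is a comparison of the real run to a third, canonical process, not a direct equality of conditional laws. Concretely: writing $A = U^T \Sigma U$ with $U$ Haar and independent of $\Sigma$, one shows (i) the simulated transcript is a deterministic function of $(U q_i)^T \Sigma (U q_j)$, and since the $q$'s are independent of $(U,\Sigma)$, the tuple $(Uq_1,\ldots,Uq_{2s})$ is Haar on the Stiefel manifold and independent of $\Sigma$; and (ii) the real transcript is the \emph{same} deterministic function of $(U w_i)^T \Sigma (U w_j)$, where $w_1,\ldots,w_{2s}$ is the (adaptive) Gram--Schmidt basis of the query subspaces, and by a step-by-step conditioning argument on $U$ one shows $(Uw_1,\ldots,Uw_{2s})$ is also Haar on the Stiefel manifold, independent of $\Sigma$. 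Equality in distribution of the two transcripts follows. Your construction is fine and your intuition that rotational invariance is what drives the result is right, but the induction should be organized around this coupling to a common canonical process rather than around the incorrect claim that $R_k$ fixes the prior constraints.
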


\begin{proof}
(Sketch) First note that the adaptive protocol may be simulated by $3s$ adaptive quadratic form queries, of the form $x^T A x$ by the polarization identity
\begin{equation}
    x^T A y = \frac{1}{2}\parens{(x+y)^T A (x+y) - x^T A x - y^T A y}.
\end{equation}

These queries in turn may be simulated by $9s^2$ non-adaptive queries by following exactly the same proof as Lemma 3.1 in \cite{sun2019querying} (but now with $u_i = v_i$ in their proof).
\end{proof}


As a direct consequence of this fact and our Theorem~\ref{thm:nonadaptive_one_sided_vmv_lower} we obtain the following.
\begin{thm}
\label{thm:one_sided_p_lower_bound}
An adaptive one-sided \ptester{p} must make at least $\Omega(\frac{1}{\eps}d^{1-1/p})$ vector-matrix-vector queries.
\end{thm}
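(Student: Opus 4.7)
The plan is to obtain this adaptive lower bound by combining the non-adaptive lower bound of Theorem~\ref{thm:nonadaptive_one_sided_vmv_lower} with the quadratic adaptive-to-non-adaptive simulation provided by Lemma~\ref{lem:at_most_quadratic_improvement}, exactly as the preceding commentary in the paper suggests.

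First I would identify the hard distribution and verify its compatibility with the simulation lemma. Theorem~\ref{thm:nonadaptive_one_sided_vmv_lower} is obtained by invoking Theorem~\ref{thm:nonadaptive_explicit_hard_instance} on the spectrum $(-\eps(d-1)^{1/p},1,\ldots,1)$, whose hard instances are matrices of the form $A = X \Sigma X^T$ with $\Sigma$ the fixed diagonal matrix realizing that spectrum and $X$ a Haar-random orthogonal matrix (equivalently, the unique negative eigenvector is uniform on $S^{d-1}$). This is exactly the distributional form required by the hypothesis of Lemma~\ref{lem:at_most_quadratic_improvement}.

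Next I would run the contrapositive. Suppose for contradiction that a one-sided adaptive \ptester{p} uses $s$ vector-matrix-vector queries and succeeds with probability $2/3$ on this random instance. By Lemma~\ref{lem:at_most_quadratic_improvement}, its $s$ adaptive queries may be replaced by $O(s^2)$ non-adaptive queries that reproduce the same transcript (and hence the same accept/reject outcome) on the distribution. The resulting non-adaptive algorithm still succeeds with probability $2/3$ on the hard distribution, so Theorem~\ref{thm:nonadaptive_one_sided_vmv_lower} forces $s^2 = \Omega(\frac{1}{\eps^2} d^{2 - 2/p})$, giving $s = \Omega(\frac{1}{\eps} d^{1 - 1/p})$.

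The main subtlety---and essentially the only step needing care---is ensuring that the simulation preserves the one-sided guarantee on all PSD inputs, not merely on inputs drawn from the hard distribution. This is the point of the ``minor modification'' of Lemma~3.1 of \cite{sun2019querying} alluded to in the paper: because the simulation implements the adaptive protocol by feeding it responses consistent with some matrix of the required rotationally invariant form, every accept/reject decision produced by the simulated non-adaptive algorithm is a decision the original adaptive tester would produce on some genuine input. Hence one-sidedness transfers from the adaptive tester to its non-adaptive simulation, closing the reduction. I would write out this preservation argument explicitly before concluding.
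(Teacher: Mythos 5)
Your proposal is correct and takes exactly the paper's route: the paper derives Theorem~\ref{thm:one_sided_p_lower_bound} in one line as a ``direct consequence'' of the non-adaptive lower bound (Theorem~\ref{thm:nonadaptive_one_sided_vmv_lower}) together with the adaptive-to-non-adaptive simulation of Lemma~\ref{lem:at_most_quadratic_improvement}. Your closing remarks on why the one-sided witness guarantee survives the simulation usefully spell out a subtlety the paper leaves implicit.
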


\section{Adaptive matrix-vector queries}


We analyze random Krylov iteration.  Namely we begin with a random $g\sim \mathcal{N}(0,I_d)$ and construct the sequence of iterates $g, Ag, A^2 g, \ldots A^{k}g$ using $k$ adaptive matrix-vector queries. The span of these vectors is denoted $\mathcal{K}_{k}(g)$ and referred to as the ${k}^{\text{th}}$ Krylov subspace.

Krylov iteration suggests a very simple algorithm.  First compute $g, Ag, \ldots, A^{k+1}g.$ If $\mathcal{K}_{k}(g)$ contains a vector $v$ such that $v^T A v < 0$ then output \False, otherwise output \True. (Note that one can compute $Av$ and hence $v^T A v$ for all such $v$, given the $k+1$ matrix-vector queries.)  We show that this simple algorithm is in fact optimal.

\td{Maybe say a little more}
As a point of implementation, we note that the above condition on $\mathcal{K}_{k}(g)$ can be checked algorthmically.  One first uses Gram-Schmidt to compute the projection $\Pi$ onto $\mathcal{K}_{k}(g)$.  The existence of a $v\in \mathcal{K}_{k}(g)$ with $v^T A b < 0$ is equivalent to the condition $\lambda_{\min}(\Pi A \Pi) < 0$.  When $A$ is $\eps$-far from PSD, the proof below will show that in fact $\lambda_{\min}(\Pi A \Pi)<-\Omega(\eps)\norm{A}{p}$, so it suffices to estimate $\lambda_{\min}(\Pi A \Pi)$ to within $O(\eps)\norm{A}{p}$ accuracy.

\begin{prop}
\label{prop:chebyshev_construction}
For $r > 0$, $\alpha>0$ and $\delta>0$ there exists a polynomial $p$ of degree $O(\frac{\sqrt{r}}{\sqrt{\alpha}}\log\frac{1}{\delta})$, such that $p(-\alpha) = 1$ and $\abs{p(x)}\leq \delta$ for all $x\in [0,r].$
\end{prop}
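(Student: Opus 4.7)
The plan is a standard shifted-Chebyshev construction. Let $T_n$ denote the degree-$n$ Chebyshev polynomial of the first kind, which satisfies $|T_n(y)| \leq 1$ on $[-1,1]$ and grows rapidly outside this interval. Let $\phi : \R \to \R$ be the affine map $\phi(x) = (2x-r)/r$, so that $\phi$ sends $[0,r]$ bijectively onto $[-1,1]$ and sends $-\alpha$ to $-1 - 2\alpha/r$, which lies to the left of $-1$. I will define
\[
p(x) \;=\; \frac{T_n(\phi(x))}{T_n(\phi(-\alpha))}.
\]
By construction $p(-\alpha) = 1$, and for $x\in [0,r]$ we have $\phi(x)\in [-1,1]$, so $|T_n(\phi(x))| \leq 1$ and hence $|p(x)| \leq 1/|T_n(\phi(-\alpha))|$.

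The remaining task is to choose $n$ so that $|T_n(\phi(-\alpha))| \geq 1/\delta$; this will establish the bound $|p(x)| \leq \delta$ on $[0,r]$. For $|y| > 1$ the identity $T_n(y) = \cosh(n \operatorname{arccosh}|y|)$ gives, with $y_0 := 1 + 2\alpha/r$,
\[
|T_n(\phi(-\alpha))| \;=\; T_n(y_0) \;\geq\; \tfrac12 \exp\!\bigl(n\,\operatorname{arccosh}(y_0)\bigr).
\]
So it suffices to pick $n$ such that $n \cdot \operatorname{arccosh}(y_0) \geq \log(2/\delta)$, which after rearranging is $n \geq \log(2/\delta)/\operatorname{arccosh}(y_0)$.

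The one technical step is a clean lower bound on $\operatorname{arccosh}(1+2t)$ for $t = \alpha/r > 0$. Using the identity $\operatorname{arccosh}(1+2t) = 2\log(\sqrt{t}+\sqrt{1+t})$ and the elementary inequality $\log(1+u) \geq u/2$ for $u\in[0,1]$, one gets $\operatorname{arccosh}(y_0) \gtrsim \sqrt{\alpha/r}$ in the interesting regime $\alpha \leq r$, and $\operatorname{arccosh}(y_0) \gtrsim 1$ when $\alpha > r$. Plugging back in, the choice $n = O(\sqrt{r/\alpha}\,\log(1/\delta))$ suffices in both cases, matching the stated bound.

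I expect no real obstacle here; the only mildly delicate piece is the $\operatorname{arccosh}$ estimate, but that is a one-line calculation once the identity $\operatorname{arccosh}(1+2t) = 2\log(\sqrt{t}+\sqrt{1+t})$ is invoked. The rest is the classical observation that Chebyshev polynomials are the optimal tool for being uniformly small on an interval while attaining a prescribed value at a nearby external point.
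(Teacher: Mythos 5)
Your proof is correct and follows essentially the same route as the paper: shift and scale a Chebyshev polynomial so that $[0,r]$ maps to $[-1,1]$ and $-\alpha$ maps outside, then invoke the exponential growth of $T_n$ outside $[-1,1]$. The only difference is that the paper directly cites the growth bound $T_d(1+\gamma)\geq 2^{d\sqrt{\gamma}-1}$ from \cite{musco2015randomized}, whereas you re-derive it from the identity $T_n(y)=\cosh(n\operatorname{arccosh} y)$ together with the $\operatorname{arccosh}(1+2t)=2\log(\sqrt{t}+\sqrt{1+t})\gtrsim\sqrt{t}$ estimate; both are fine.
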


\begin{proof}
Recall that the degree $d$ Chebyshev polynomial $T_d$ is bounded by $1$ in absolute value on $[-1,1]$ and satisfies \[T_d(1+\gamma \geq 2^{d\sqrt{\gamma} - 1}).\] (See \cite{musco2015randomized} for example.)   The proposition follows by shifting and scaling $T_d$.
\end{proof}

\begin{thm}
\label{thm:adaptive_mv_upper}
Suppose that $A$ has an eigenvalue $\lambda_{\min}$ with $\lambda_{\min} \leq -\eps \norm{A}{p}.$ When $p=1$, the Krylov subspace $\mathcal{K}_{k}(g)$ contains a vector $v$ with $v^T A v < 0$ for $k = O\parens{\parens{\frac1\eps}^{\frac{1}{3}} \log \frac1\eps}.$  When $p\in (1,\infty]$, the same conclusion holds for $k = O\parens{\parens{\frac1\eps}^{\frac{p}{2p+1}} \log \frac1\eps \log d}.$
\end{thm}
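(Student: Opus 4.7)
The plan is to exhibit a polynomial $p$ of degree at most $k$ such that, with constant probability over the random Gaussian $g$, the vector $v := p(A)g \in \mathcal{K}_k(g)$ satisfies $v^T A v < 0$. Normalize so $\|A\|_p = 1$, and suppose (the worst case) that $\lambda_{\min}(A) = -\eps$ exactly. We will take $p = q_1 q_2$, the product of an \emph{implicit deflation} factor $q_1$ that kills the large positive eigenvalues and a Chebyshev factor $q_2$ tailored to the interval $[0, \tau]$, where $\tau > 0$ is a threshold to be optimized.

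For the deflation step, let $I_L := \{i : \lambda_i > \tau\}$ and define $q_1(x) = \prod_{i \in I_L}(1 - x/\lambda_i)$. The Schatten-$p$ constraint forces $|I_L|\cdot\tau^p \leq \sum_i |\lambda_i|^p \leq 1$, so $\deg q_1 \leq \tau^{-p}$. Crucially one does not need to compute $q_1$ explicitly; it suffices that such a polynomial exists in the Krylov space. Two properties of $q_1$ are used: for $\lambda \in [0, \tau]$ each factor $1-\lambda/\lambda_i$ lies in $[0, 1]$, so $|q_1(\lambda)| \leq 1$ on that interval; and $q_1(-\eps) = \prod_{i \in I_L}(1 + \eps/\lambda_i) \geq 1$. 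For $q_2$, invoke Proposition~\ref{prop:chebyshev_construction} with $\alpha = \eps$, $r = \tau$, and a $\delta > 0$ to be chosen below, obtaining a polynomial of degree $O(\sqrt{\tau/\eps}\log(1/\delta))$ with $q_2(-\eps) = 1$ and $|q_2(\lambda)| \leq \delta$ on $[0, \tau]$.

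Now expand $v^T A v = \sum_i \lambda_i p(\lambda_i)^2 g_i^2$ in the eigenbasis. The large positive eigenvalues contribute $0$ (annihilated by $q_1$); every negative eigenvalue other than $\lambda_{\min}$ contributes non-positively, which only helps; the $\lambda_{\min}$ term contributes at most $-\eps g_1^2$ since $p(-\eps)^2 \geq 1$; and each small positive eigenvalue contributes at most $\delta^2 \lambda_i g_i^2$, since $|q_1 q_2| \leq \delta$ on $[0, \tau]$. Summing,
\[
v^T A v \;\leq\; -\eps\, g_1^2 \;+\; \delta^2 \sum_{0 < \lambda_i \leq \tau} \lambda_i\, g_i^2.
\]
With constant probability $g_1^2 \geq c$, and a routine Markov bound gives $\sum \lambda_i g_i^2 \leq C\|A\|_1$ on a constant-probability event. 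Since $\|A\|_1 \leq d^{1-1/p}\|A\|_p = d^{1-1/p}$, choosing $\delta^2 = \Theta(\eps / d^{1-1/p})$ forces the right-hand side to be negative and yields $\log(1/\delta) = O(\log(d/\eps))$. For the special case $p = 1$ the identity $\|A\|_1 = 1$ lets us take $\delta = \Theta(\sqrt{\eps})$, saving the $\log d$ factor entirely.

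Finally, the total degree is $k = O(\tau^{-p} + \sqrt{\tau/\eps}\log(1/\delta))$, which is minimized by balancing the two terms; the choice $\tau = \eps^{1/(2p+1)}$ equalizes them at $\eps^{-p/(2p+1)}$ up to the logarithmic factor, giving the stated bounds. The main obstacle is the deflation argument itself: realizing that one can exploit the large positive eigenvalues without knowing their values, and verifying the two structural facts above, namely that $q_1$ is bounded by $1$ on $[0,\tau]$ (so multiplying by $q_2$ does not spoil the small-eigenvalue bound) while simultaneously being at least $1$ at $\lambda_{\min}$. The concentration estimates for $g_1^2$ and $\sum \lambda_i g_i^2$, and the final optimization in $\tau$, are standard.
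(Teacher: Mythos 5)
Your proof is correct and follows essentially the same route as the paper: you build the same product polynomial (a Chebyshev factor tailored to $[0,\tau]$ times an implicit deflation factor vanishing on the large eigenvalues), use the Schatten-$p$ constraint to bound the number of deflated eigenvalues by $\tau^{-p}$, and close with the same first-moment/Markov argument on $\sum_i \lambda_i\, p(\lambda_i)^2 g_i^2$. The only cosmetic differences are the normalization of the deflation factor (the paper divides by $\lambda_i - \lambda_{\min}$ so that $p(\lambda_{\min})=1$ exactly, whereas you use $q_1(0)=1$ and note $q_1(\lambda_{\min})\geq 1$) and that you make explicit the remark that other negative eigenvalues can only help.
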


\begin{proof}

Without loss of generality, assume that $\norm{A}{p}\leq 1$. Fix a value $T$ to be determined later, effectively corresponding to the number of top eigenvalues that we deflate. By Proposition~\ref{prop:chebyshev_construction} we can construct a polynomial $q$, such that $q(\lambda_{\min}) = 1$ and $\abs{q(x)} \leq \sqrt{\frac{\eps/10}{d^{1 - 1/p}}}$ for $x\in [0,T^{-1/p}]$ with
\begin{equation}
    \deg(q) \leq C \frac{T^{-1/(2p)}}{\sqrt{\eps}}\log\parens{\sqrt{\frac{d^{1-1/p}}{\eps/10}}},
\end{equation} where $C$ is an absolute constant.

Now set 
\begin{equation}
    p(x) = q(x) \prod_{i : \lambda_i > T^{-1/p}} \frac{\lambda_i - x}{\lambda_i - \lambda_{\min}}.
\end{equation}
Since we assume $\norm{A}{p} \leq 1,$ there at most $T$ terms in the product, so 
\begin{equation}
    \deg(p) \leq T + C \frac{T^{-1/(2p)}}{\sqrt{\eps}}\log\parens{\sqrt{\frac{d^{1-1/p}}{\eps/10}}}.
\end{equation}
By setting $T = \eps^{- p/(2p+1)},$ we get
\begin{equation}
\deg(p) = 
    \begin{cases} 
        O\parens{\parens{\frac1\eps}^{\frac{p}{2p+1}} \log \frac1\eps} \quad &\text{if } p=1\\
        O\parens{\parens{\frac1\eps}^{\frac{p}{2p+1}} \log \frac1\eps \log d} \quad &\text{if } p>1
    \end{cases}
\end{equation}
As long as $k$ is at least $\deg(p)$, then $v = p(A)g$ lies in $\mathcal{K}_k(g),$ and
\begin{equation}
    v^T A v = g^T p(A)^2 A g.
\end{equation}
By construction, $p(\lambda_{\min}) = 1$.  Also for all $x$ in $[0, T^{-1/p}],$ $|p(x)| \leq |q(x)| \leq \sqrt{\eps/10} d^{(1/p) - 1}.$

Therefore the matrix $p(A)^2 A$ has at least one eigenvalue less than $-\eps$, and the positive eigenvalues sum to at most
\begin{equation}
    \sum_{i:\lambda_i>0} \frac{\eps}{10} d^{1/p - 1} \lambda_i \leq \frac{\eps}{10},
\end{equation} by using Holder's inequality along with the fact that $\norm{A}{p} \leq 1$. So with at least $2/3$ probability, $g^T p(A)^2 A g < 0$ as desired.
\end{proof}



\begin{rmk}
For $1<p<\infty$, the $\log d$ dependence can be removed by simply applying the $p=1$ tester to  $A^{\lceil p \rceil},$ as a matrix-vector query to $A^{\lceil p \rceil}$ may be simulated via $\lceil p \rceil$ matrix-vector queries to $A.$  However this comes at the cost of a $\parens{\frac1\eps}^{\lceil p \rceil/3}$ dependence, and is therefore only an improvement when $d$ is extremely large.
\end{rmk}

\begin{rmk}
While we observe that deflation of the top eigenvalues can be carried out implicitly within the Krylov space, this can also be done explicitly using block Krylov iteration, along with the guarantee given in Theorem 1 of \cite{musco2015randomized}.
\end{rmk}




We showed above that we could improve upon the usual analysis of Krylov iteration in our context. We next establish a matching lower bound that shows our analysis is tight up to $\log$ factors.  This is a corollary of the proof of Theorem 3.1 presented in \cite{braverman2020gradient}.

\begin{thm}
\label{thm:adaptive_mv_lower}
A two-sided, adaptive \ptester{p} in the matrix-vector model must in general make at least $\Omega(\frac{1}{\epsilon^{p/(2p+1)}})$ queries.
\end{thm}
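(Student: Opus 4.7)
The plan is to reduce $(\epsilon, \ell_p)$-PSD testing to estimating the smallest eigenvalue of a Wishart matrix in the adaptive matrix-vector model, and then invoke Theorem 3.1 of \cite{braverman2020gradient}. Their bound is a distributional lower bound against any randomized algorithm succeeding with constant probability, so it applies equally to one-sided and two-sided testers.

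Concretely, take $W = GG^T/n$ with $G \in \R^{d \times n}$ having i.i.d.\ $\mathcal{N}(0,1)$ entries and $n/d$ bounded away from $1$, so that $W$ has spectrum concentrated in a fixed interval, $\|W\|_{\text{op}} = \Theta(1)$, and $\|W\|_p = \Theta(d^{1/p})$. For a shift $c$ placed in the bulk, $A_c := W - cI$ satisfies $\|A_c\|_p = \Theta(d^{1/p})$ and is PSD exactly when $c \leq \lambda_{\min}(W)$. Running the hypothesized $\ell_p$-tester with slack $\epsilon \|A_c\|_p$ on $A_c$ at two nearby candidate values of $c$ distinguishes the sign of $c - \lambda_{\min}(W)$, and $O(1)$ such calls therefore approximate $\lambda_{\min}(W)$ to additive accuracy $\Theta(\epsilon d^{1/p})$ with constant probability. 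Crucially, this reduction preserves query efficiency: each matrix-vector query to $A_c$ is implemented by one matrix-vector query to $W$ followed by a cheap linear correction.

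It remains to invoke BHSW to lower bound the number of matrix-vector queries. Their analysis yields a bound of the form ``$k$ queries cannot estimate the extreme eigenvalue of $W$ to better than $\phi(k,d)$ accuracy,'' and the required accuracy $\epsilon d^{1/p}$ must stay on or above the edge-fluctuation scale $\Theta(d^{-2/3})$ for the bound to be nontrivial. Treating $d$ as a free parameter and optimizing (taking $d$ so that the edge-scale constraint is tight, which makes the Wishart instance maximally resistant to detection) then substituting into BHSW gives the claimed $\Omega(\epsilon^{-p/(2p+1)})$ bound on the query count. Sanity check: at $p=\infty$ this recovers the standard Lanczos/Krylov lower bound of $\Omega(\epsilon^{-1/2})$, and at $p=1$ it gives $\Omega(\epsilon^{-1/3})$, matching the upper bound of Theorem~\ref{thm:adaptive_mv_upper}.

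The main obstacle is the bookkeeping: carefully threading the BHSW bound through the shifted-Wishart reduction, and verifying that balancing dimension against required accuracy produces exactly $p/(2p+1)$ rather than some nearby fraction. I would also have to check that the shift $c$ can always be placed close enough to the edge so that $\|A_c\|_p = \Theta(d^{1/p})$ holds up to constants, so that the tester's $\epsilon$-budget translates cleanly into an additive accuracy in the eigenvalue estimation problem. Neither issue looks genuinely hard; both reduce to straightforward calculations using known concentration facts for Wishart spectra, but they are where essentially all of the work of the proof sits.
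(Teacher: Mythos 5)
Your plan is the right one at a high level: reduce to the BHSW lower bound for Wishart eigenvalue estimation via a shift, and balance the dimension against the accuracy the tester provides. But there is a genuine error in the scales, and it changes the exponent.

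BHSW's Theorem 3.1 is about the \emph{square} Wishart ensemble: $W = XX^T$ with $X \in \R^{d\times d}$ having i.i.d.\ $\mathcal{N}(0,1/d)$ entries. In that regime the smallest eigenvalue sits at the hard edge and is of order $1/d^2$, and their theorem says that $(1-\beta)d$ matrix-vector queries cannot decide whether $\lambda_{\min}(W)$ is above or below $t = 1/(2d^2)$ with slack $\Theta(1/d^2)$. You instead take $n/d$ bounded away from $1$, which pushes the spectrum off zero and puts the smallest eigenvalue at a \emph{soft} edge with Tracy--Widom fluctuations of order $d^{-2/3}$. That is not the regime BHSW's theorem addresses, and more importantly it breaks the arithmetic: balancing $\epsilon d^{1/p} \asymp d^{-2/3}$ gives $d \asymp \epsilon^{-3p/(2p+3)}$, which at $p=1$ would yield a purported lower bound of $\Omega(\epsilon^{-3/5})$ --- but that exceeds the $\tildeO(\epsilon^{-1/3})$ upper bound of Theorem~\ref{thm:adaptive_mv_upper}, so something must be wrong. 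The correct balance uses the hard-edge scale $d^{-2}$: requiring $\epsilon\,\|W - tI\|_p \lesssim 1/d^2$ with $\|W-tI\|_p = O(d^{1/p})$ forces $\epsilon d^{1/p} \lesssim d^{-2}$, i.e., $d \gtrsim \epsilon^{-p/(2p+1)}$, and BHSW then gives $\Omega(d)$ queries --- exactly the paper's argument, with $d = \Theta(\epsilon^{-p/(2p+1)})$ and $t = 1/(2d^2)$.

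A smaller point: BHSW is already a decision problem at a fixed threshold $t$, so there is no need to ``run the tester at two nearby candidate values of $c$'' and binary-search; the paper simply feeds $W - tI$ to the tester once (after boosting the tester's success probability by independent repetition) and reads off the answer. Your multi-call scheme is not wrong, just unnecessary, but it obscures that the hardness is entirely in the single decision at the hard-edge scale.
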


\begin{proof}
We make use of the proof of Theorem 3.1 given in \cite{braverman2020gradient}.  We consider an algorithm $\mathcal{A}$ that receives a matrix $W$ sampled from the Wishart distribution makes at most $(1-\beta)d$ queries, and outputs either True or False, depending on whether $\lambda_{\min}(W)$ is greater or less than $t$ (where $t=1/(2d^2)$ is defined as in \cite{braverman2020gradient}). We say that $\mathcal{A}$ fails on a given instance if either (i) $\mathcal{A}$ outputs True and $t - \frac{1}{4d^2} \geq \lambda_{\min}(W)$ or (ii) $\mathcal{A}$ outputs False and $\lambda_{\min}(W) \geq t + \frac{1}{4d^2}.$ Exactly the same proof given in \cite{braverman2020gradient} shows that $\mathcal{A}$ must fail with probability at least $c_{\text{wish}}\sqrt{\beta}$ where $c_{\text{wish}}>0$ is an absolute constant, as long as $d$ is chosen sufficiently large depending on $\beta.$ Taking $\beta = 1/4$ say, means that any such algorithm fails with probability at least $c_{\text{wish}}/2$ as long as $d$ is a large enough constant.

Now consider an \ptester{p} $\mathcal{T}$ with $d = 1/(4\epsilon^{p/(2p+1)})$, applied to the random matrix $W - tI.$  While our definition allows $\mathcal{T}$ to fail with $1/3$ probability we can reduce this failure probability to $c_{\text{wish}}/2$ by running a constant number of independent instances and taking a majority vote. So from here on we assume that $\mathcal{T}$ fails on a given instance with probability at most $c_{\text{wish}}/2$.

First recall that $W \sim XX^T$ where each entry of $X$ is i.i.d. $\mathcal{N}(0,1/d).$ Then with high probability, the operator norm of $X$ is bounded, say, by $\sqrt{2}$, and the eigenvalues of $W$ are bounded by $2.$


Therefore with high probability, $\norm{W}{p} \leq 2d^{1/p},$ and so $\norm{W - tI}{p} \leq 3d^{1/p}.$  It follows that $1/(4d^2) = 4\epsilon (4d)^{1/p} \geq \epsilon \norm{W-tI}{p}.$ This means that $\mathcal{T}$ can solve the problem above, and by correctness of the tester, fails with at most $c_{\text{wish}}/2$ probability. For $\epsilon$ sufficiently small, the above analysis implies that $\mathcal{T}$ must make at least $\Omega(d) = \Omega(1/\epsilon^{p/(2p+1)})$ queries.

\end{proof}


\section{An optimal bilinear sketch}


In this section we analyze a bilinear sketch for PSD-testing which will also yield an optimal  \ptester{2} in the vector-matrix-vector model.

Our sketch is very simple.  We choose $G \in \R^{d\times k}$ to have independent $\mathcal{N}(0,1)$ entries and take our sketch to be $G^T A G.$  In parallel we construct estimates $\alpha$ and $\beta$ for the trace and Frobenius norm of $A$ respectively, such that $\beta$ is accurate to within a multiplicative error of $2$, and $\alpha$ is accurate to with $\norm{A}{F}$ additive error. (Note that this may be done at the cost of increasing the sketching dimension by $O(1)$.)

If $G^T A G$ is not PSD then we automatically reject. Otherwise, we then consider the quantity
\begin{equation}
    \gamma := \frac{\alpha - \lambda_{\min}(G^TAG)}{\beta \sqrt{k}\log k}
\end{equation}
If $\gamma$ is at most $c_{\text{psd}}$ for some absolute constant $c_{\text{psd}}$, then the tester outputs \False, otherwise it outputs \True.

We first show that a large negative eigenvalue of $A$ results causes the smallest sketched eigenvalue to be at most $\Tr(A) - \Omega(\eps k).$  On the other hand, when $A$ is PSD, we will show that $\lambda_{\min}(G^TAG)$ is substantially larger.

\subsection{Upper bound on $\lambda_{\min}(G^T A G)$}

We start with the following result on trace estimators which we will need below.
\begin{prop}
\label{prop:variance_calculation}
Let $M$ be a symmetric matrix with eigenvalues $\lambda_1, \ldots, \lambda_d,$ and let $u$ be a random unit vector with respect to the spherical measure.  Then
\[\var(u^T M u) = \frac{2}{d+2} \parens{\frac{\lambda_1^2 + \ldots + \lambda_d^2}{d} - \frac{(\lambda_1 + \ldots + \lambda_d)^2}{d^2}} := \frac{2}{d+2} \var(\lambda_1, \ldots, \lambda_d).\]
\end{prop}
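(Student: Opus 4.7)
The plan is to exploit rotational invariance and the moment calculations from Remark~\ref{rmk:unit_vect_moment_cals}. Since the spherical measure on $S^{d-1}$ is orthogonally invariant and $M$ is symmetric, I may diagonalize $M$ and assume without loss of generality that $M = \diag(\lambda_1, \ldots, \lambda_d)$, which reduces the quadratic form to $u^T M u = \sum_i \lambda_i u_i^2$.

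With this reduction, the computation becomes a direct second-moment calculation. First I would write the mean as $\E(u^T M u) = \sum_i \lambda_i \E(u_i^2) = \frac{1}{d}\sum_i \lambda_i$, using the symmetry that forces $\E(u_i^2) = 1/d$. Next I would expand the second moment:
\[
\E\bigl((u^T M u)^2\bigr) = \sum_{i} \lambda_i^2 \,\E(u_i^4) + \sum_{i \neq j} \lambda_i \lambda_j \,\E(u_i^2 u_j^2),
\]
and plug in the exact values $\E(u_i^4) = \alpha_4 = \tfrac{3}{d(d+2)}$ and $\E(u_i^2 u_j^2) = \alpha_{22} = \tfrac{1}{d(d+2)}$ supplied by Remark~\ref{rmk:unit_vect_moment_cals}. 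Collecting the cross terms via $\sum_{i \neq j} \lambda_i \lambda_j = (\sum_i \lambda_i)^2 - \sum_i \lambda_i^2$ yields
\[
\E\bigl((u^T M u)^2\bigr) = \frac{1}{d(d+2)}\Bigl(2\sum_i \lambda_i^2 + \bigl(\textstyle\sum_i \lambda_i\bigr)^2\Bigr).
\]

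Finally I would subtract $(\E(u^T M u))^2 = \tfrac{1}{d^2}(\sum_i \lambda_i)^2$, and simplify $\tfrac{1}{d(d+2)} - \tfrac{1}{d^2} = -\tfrac{2}{d^2(d+2)}$ to obtain the stated identity. There is no real obstacle here: the entire argument hinges on correctly invoking rotational invariance to diagonalize and then carefully bookkeeping the fourth-moment values of a uniform spherical vector. The mildly delicate step is the algebraic simplification at the end, where the coefficient of $(\sum_i \lambda_i)^2$ must combine cleanly to produce the $\tfrac{2}{d+2}$ prefactor on the sample variance of the eigenvalues.
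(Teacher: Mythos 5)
Your proof is correct and follows the same route as the paper: diagonalize $M$ via rotational invariance, invoke the spherical fourth-moment values $\alpha_4 = \tfrac{3}{d(d+2)}$ and $\alpha_{22} = \tfrac{1}{d(d+2)}$ from Remark~\ref{rmk:unit_vect_moment_cals}, and expand the variance directly. You simply carry out the algebraic bookkeeping that the paper's proof leaves implicit (and you correctly use $\E(u_i^2) = 1/d$, whereas the paper's proof contains a small typo stating $\E(u_i^2)=1$).
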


\begin{proof}
By the spectral theorem, it suffices to prove the result when $M$ is diagonal.  Then
\[\var{u^T M u} = \E(\lambda_1 u_1^2 +\ldots + \lambda_d  u_d^2)^2 - \parens{\E(\lambda_1 u_1^2 + \ldots + \lambda_d u_d^2)}^2.\]

By Remark~\ref{rmk:unit_vect_moment_cals}, we have $\E(u_i^2) = 1,$ $\E(u_i^4) = \frac{3}{d(d+2)}$ and $\E(u_i^2 u_j ^2) = \frac{1}{d(d+2)}$ for $i\neq j.$  The result follows by expanding using linearity of expectation, and then applying these facts.
\end{proof}

The next two results will give an upper bound on the smallest eigenvalue of the Gaussian sketch.  For the proof of Lemma~\ref{lem:sketch_eval_upper_bd} we will start with random orthogonal projections, from which the Gaussian result will quickly follow. We include a technical hypothesis that essentially enforces non-negativity of $\Tr(A).$  We write the hypothesis in the form below simply to streamline the argument.

\begin{lem}
\label{lem:rand_projection_sketch}
Suppose that $\norm{A}{F}=1$ and that $v$ is an eigenvector of $A$ with associated eigenvalue $-\eps.$ Let $\Pi \in R^{d\times d}$ be a projection onto a random $k$ dimensional subspace $S$ of $\R^d$, sampled from the rotationally invariant measure.  Also suppose that $x^T A x \geq 0$ with probability $0.999$ when $x$ is a random unit vector.  Then
\[\frac{1}{\norm{\Pi v}{}^2} (\Pi v)^T A (\Pi v) \leq \frac{1}{d} (-0.5 \eps k + \Tr(A) + O(1))\] 
with probability at least $0.99 - \exp(-ck)$.
\end{lem}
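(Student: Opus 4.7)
The plan is to exploit the rotational invariance of $\Pi$. Since the distribution of the subspace $S$ is invariant under orthogonal conjugation and $v$ is fixed, I may rotate coordinates so that $v = e_1$. Writing $\Pi v = \alpha v + w$ with $w \perp v$, the fact that $\Pi$ is a projection forces $\alpha = v^T \Pi v = \norm{\Pi v}{}^2$ and $\norm{w}{}^2 = \alpha(1-\alpha)$; using $Av = -\eps v$ together with $v^T w = 0$ gives the identity $(\Pi v)^T A(\Pi v) = -\eps\alpha^2 + w^T A w$, so
\[
\frac{(\Pi v)^T A(\Pi v)}{\norm{\Pi v}{}^2} = -\eps\alpha + (1-\alpha)\cdot\frac{w^T A w}{\norm{w}{}^2}.
\]
Because the law of $\Pi$ is also invariant under rotations fixing $v$, conditional on $\alpha$ the direction $w/\norm{w}{}$ is uniform on the unit sphere of $v^\perp$. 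Let $A''$ be the compression of $A$ to $v^\perp$, so $\Tr(A'') = \Tr(A)+\eps$ and $\norm{A''}{F}^2 \leq 1$. Then $w^T A w/\norm{w}{}^2 = u^T A'' u$ for a uniform unit vector $u$ in $\R^{d-1}$, and Proposition~\ref{prop:variance_calculation} combined with Chebyshev's inequality yields
\[
u^T A'' u \leq \frac{\Tr(A)+\eps}{d-1} + \frac{C_1}{d}\qquad\text{with probability at least } 0.99.
\]

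The two remaining ingredients control $\alpha$ and $\Tr(A)$. Writing $\alpha \stackrel{d}{=} X/(X+Y)$ for independent $X\sim\chi^2_k$ and $Y\sim\chi^2_{d-k}$, Laurent--Massart concentration gives $\alpha \geq k/(2d)$ with probability at least $1-\exp(-ck)$ (the edge case $k$ close to $d$ makes the target bound trivial after adjusting constants). To handle the trace, I apply Proposition~\ref{prop:variance_calculation} to $A$ itself: $x^T A x$ has mean $\Tr(A)/d$ and variance at most $2/d^2$, so if $\Tr(A) < -C_2$ for a sufficiently large absolute constant $C_2$, the mean would lie many standard deviations below zero and Chebyshev would force $\Pr(x^T A x < 0) > 0.001$, contradicting the technical hypothesis. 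Hence $\Tr(A) \geq -C_2$.

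Combining the three pieces via a union bound, with probability at least $0.99 - \exp(-ck)$,
\[
\frac{(\Pi v)^T A(\Pi v)}{\norm{\Pi v}{}^2}
\leq -\eps\alpha + (1-\alpha)\!\left(\frac{\Tr(A)+\eps}{d-1} + \frac{C_1}{d}\right)
\leq -\tfrac{1}{2}\frac{\eps k}{d} + \frac{\Tr(A) + C_3}{d},
\]
after using $\alpha \geq k/(2d)$, $0 \leq 1-\alpha \leq 1$, $|\Tr(A)|\leq\sqrt{d}$, and $\Tr(A) \geq -C_2$ to absorb terms like $\Tr(A)/(d(d-1))$ and $\eps/(d-1)$ into $O(1)/d$. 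The main obstacle is this last piece of sign bookkeeping: when $\Tr(A'')$ is negative, the factor $(1-\alpha)$ shrinks it in magnitude rather than amplifying it, so it is precisely the hypothesis-derived inequality $\Tr(A) \geq -O(1)$ that keeps the resulting discrepancy $O(1)/d$ and makes the claimed upper bound hold in the stated form.
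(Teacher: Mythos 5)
Your proof is correct and tracks the same high-level structure as the paper's (decompose $\Pi v$ into its $v$-component and its $v^\perp$-component, control the $v$-component via concentration of $\norm{\Pi v}{}^2$ around $k/d$, and treat the $v^\perp$-component as a trace estimator via Proposition~\ref{prop:variance_calculation} and Chebyshev), but the bookkeeping is handled differently in a way worth noting. The paper writes $u = \Pi v/\norm{\Pi v}{}$, decomposes $u^T A u = -\eps (u\cdot v)^2 + (u')^T A' u'$ with $u' = u - (u\cdot v)v$, and then passes from $(u')^T A' u'$ to the normalized $(u'/\norm{u'}{})^T A' (u'/\norm{u'}{})$ using the \emph{inequality} $(u')^T A' u' \leq (u'/\norm{u'}{})^T A' (u'/\norm{u'}{})$; this step is one-sided and only valid when $(u')^T A' u' \geq 0$, which the paper argues holds with probability $\geq 0.999$ as a direct consequence of the technical hypothesis (indeed, $\{x^T A x \geq 0\}$ implies $\{y^T A'' y \geq 0\}$ for the associated direction $y$ in $v^\perp$). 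You instead keep an \emph{exact identity}, $\frac{(\Pi v)^T A (\Pi v)}{\norm{\Pi v}{}^2} = -\eps\alpha + (1-\alpha)\,u^T A'' u$ with $\alpha = \norm{\Pi v}{}^2$, apply Chebyshev to $u^T A'' u$ directly, and then handle the sign ambiguity of the $(1-\alpha)$ factor by using the hypothesis to derive the clean lower bound $\Tr(A)\geq -O(1)$ (via the variance of the spherical trace estimator), which suffices to absorb the bad case where the bracketed quantity is negative. Both routes invoke the technical hypothesis exactly once, but at different steps and in different forms; your exact decomposition makes the role of $\alpha$ transparent and sidesteps the conditional-positivity step, at the price of the small sign-bookkeeping argument at the end. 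Both are valid, and your $\Tr(A)\geq -O(1)$ observation is a useful reformulation of what the hypothesis is really buying. (As a minor aside, the paper's definition ``$A' = A - \eps vv^T$'' is a sign typo for $A + \eps vv^T$, which your notation $A''$ avoids.)
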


\begin{proof}
Let $u = \frac{\Pi v}{\norm{\Pi v}{}}.$ The subspace $S$ was chosen randomly, so with probability at least $1-\exp(-ck)$, 
\[\inner{u}{v}^2 = \norm{\Pi v}{}^2 \geq 0.5 \frac{k}{d}.\]
\td{citation?} 
Let $u'$ be the projection of $u$ onto the hyperplane $v^{\perp}$ orthogonal to $v.$  Observe by symmetry that $u'/\norm{u'}{}$ is distributed uniformly over the sphere in $v^{\perp}$.

Let $A' = A - \eps vv^T$ be the matrix $A$ with the $-\eps$ eigenvalue zeroed out.  Then 
\[u^T A u \leq -0.5 \frac{k}{d} \eps + (u')^T A' u'\leq -0.5 \frac{k}{d} \eps + (u'/\norm{u'}{})^T A' (u'/\norm{u'}{}),\]
as long as $(u')^T A' u' \geq 0$, which holds with probability at least $0.999$ as a consequence of the similar hypothesis.
The latter term is a trace estimator for $\frac{1}{d}A'$ with variance bounded by $\frac{2}{d^2}\norm{A'}{F}^2 \leq \frac{2}{d^2}$ (for example by Proposition~\ref{prop:variance_calculation}). So with $0.999$ probability
\[(u'/\norm{u'}{})^T A' (u'/\norm{u'}{}) \leq \frac{1}{d}(\Tr(A') + O(1)) = \frac{1}{d}(\Tr(A) + \eps + O(1)) \leq \frac{1}{d}(\Tr(A) + O(1)),\] 
and the result follows.
\end{proof}

In the following lemma, we introduce the technical assumption that $k < cd$. However this will be unimportant later, as any sketch with $k\geq cd$ might as well have sketching dimension $d$, at which point the testing problem is trivial.

\begin{lem}
\label{lem:sketch_eval_upper_bd}
Suppose that $A\in \R^{d\times d}$ has an eigenvalue of $-\eps$, $\norm{A}{F}=1$, and $G \in \R^{d\times k}$ with $k<d$ has iid $\mathcal{N}(0,1)$ entries.  Also suppose that $x^T A x\geq 0$ with probability at least $0.999$ for a random unit vector $x,$ and that $k<cd$ for some absolute constant $c.$  Then with probability at least $0.99 - 3\exp(-ck)$,
\[\lambda_{\min}(G^T A G) \leq  - 0.4\eps k + O(1) + \Tr(A) + c\frac{\sqrt{k}}{\sqrt{d}}\abs{\Tr(A)} .\]

\end{lem}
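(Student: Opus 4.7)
The plan is to reduce the Gaussian sketch to the random projection setting of Lemma~\ref{lem:rand_projection_sketch} by separating the ``direction'' of $G$ from its ``scale.'' By rotational invariance, one can write the thin SVD $G = U R$ where $U \in \R^{d \times k}$ has orthonormal columns whose span $S$ is a uniformly random $k$-dimensional subspace, and crucially $S$ is independent of the $k \times k$ (a.s.\ invertible) matrix $R$. With this decomposition $G^T A G = R^T (U^T A U) R$, so any Rayleigh quotient of $U^T A U$ can be transferred to one of $G^T A G$ by an appropriate change of variables.

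First I would condition on $R$ having typical singular values. By standard Gaussian concentration for the extremal singular values of a square matrix of i.i.d.\ normals (Vershynin, Theorem~4.6.1 / Cor.~7.3.3), with probability at least $1 - 2\exp(-ck)$ one has $\sqrt{d} - \sqrt{k} - O(1) \leq \sigma_{\min}(R) \leq \sigma_{\max}(R) \leq \sqrt{d} + \sqrt{k} + O(1)$. Combined with $k < cd$, this gives $\sigma_{\min}(R)^{-2}, \sigma_{\max}(R)^{-2} = \tfrac{1}{d}(1 \pm O(\sqrt{k/d}))$. Independently, by Lemma~\ref{lem:rand_projection_sketch} applied to the random subspace $S$, with probability at least $0.99 - \exp(-ck)$ there is a unit vector $u \in S$ (namely $u = \Pi v / \norm{\Pi v}{}$) satisfying $u^T A u \leq \tfrac{1}{d}(-0.5 \eps k + \Tr(A) + O(1)) =: \tfrac{\tau}{d}$.

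Next I would transfer the bound to $G^T A G$. Writing $u = Uy$ for a unit vector $y \in \R^k$ and setting $w = R^{-1} y$, we have $Gw = u$ and hence
\[
\lambda_{\min}(G^T A G) \leq \frac{w^T G^T A G w}{\norm{w}{}^2} = \frac{u^T A u}{\norm{R^{-1} y}{}^2}.
\]
The denominator lies in $[\sigma_{\max}(R)^{-2}, \sigma_{\min}(R)^{-2}]$, which by Step 2 equals $\tfrac{1}{d}(1 + \delta)$ for some $|\delta| = O(\sqrt{k/d})$. Since $\tau$ is a deterministic function of $A$, its sign is known: if $\tau \geq 0$, we use $\norm{R^{-1}y}{}^2 \geq \sigma_{\max}(R)^{-2}$ to conclude $\lambda_{\min}(G^T A G) \leq \tau\,\sigma_{\max}(R)^2/d \leq \tau(1 + O(\sqrt{k/d}))$; if $\tau < 0$, we use $\norm{R^{-1}y}{}^2 \leq \sigma_{\min}(R)^{-2}$ and the fact that multiplying a negative number by $(1 - O(\sqrt{k/d}))$ makes it less negative to obtain the analogous bound $\tau(1 - O(\sqrt{k/d}))$.

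In both cases the overhead is $O(\sqrt{k/d}) \cdot |\tau| \leq O(\sqrt{k/d})\bigl(0.5 \eps k + |\Tr(A)| + O(1)\bigr)$. Taking the constant $c$ in the hypothesis $k < cd$ small enough, the contribution $O(\sqrt{k/d})\cdot 0.5 \eps k \leq O(\sqrt{c}) \eps k$ is absorbed into $0.1 \eps k$, weakening the coefficient of $\eps k$ from $-0.5$ to $-0.4$; the $O(\sqrt{k/d})|\Tr(A)|$ piece matches the desired last term, and $O(\sqrt{k/d})\cdot O(1) = O(1)$. A union bound over the projection lemma and the two singular-value concentration events gives total success probability $\geq 0.99 - 3\exp(-ck)$. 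The main piece of care is the sign case-analysis on $\tau$: one has to keep track of whether to use $\sigma_{\min}(R)$ or $\sigma_{\max}(R)$ when dividing by $\norm{R^{-1}y}{}^2$, and to verify that in both regimes the $\sqrt{k/d}\cdot \eps k$ correction can be folded into the constant $0.1$ loss on the main term.
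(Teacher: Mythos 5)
Your argument is correct and is essentially the same as the paper's: both apply Lemma~\ref{lem:rand_projection_sketch} to the uniformly random column space of $G$, then transfer the bound back to the Gaussian sketch via concentration of the extreme singular values of $G$, with a sign case analysis to decide whether to scale by the largest or smallest singular value. The only cosmetic differences are that the paper phrases the change of variables through the pseudoinverse $G^\dagger$ rather than an explicit factorization $G = UR$ (which, note, is a polar/QR-type decomposition rather than a thin SVD, though the only fact used -- independence of the random column space from the singular values -- holds either way), and the paper cases on the sign of the random quantity $(\Pi_G v)^T A (\Pi_G v)$ rather than the deterministic $\tau$; both conventions lead to the stated bound.
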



\begin{proof}
Let $\Pi_G = GG^{\pinv}$ denote projection onto the image of $G$.

Let $v$ be an eigenvector of $A$ with associated eigenvalue smaller or equal to $-\eps,$ and set $u = G^{\pinv}v/\norm{G^{\pinv}v}{}$.
We then have 
\[\lambda_{\min}(G^T A G) \leq u^T (G^T A G) u = \frac{1}{\norm{G^{\pinv}v}{}^2} (\Pi_G v)^T A (\Pi_G v).
\]
We also have 
\[\norm{\Pi_G v}{} = \norm{G G^{\pinv} v}{} \leq \norm{G}{\text{op}} \norm{G^{\pinv} v}{}.\]
By Theorem 4.6.1 in \cite{vershynin2018high}, $\norm{G}{\text{op}} \leq \sqrt{d} + c\sqrt{k}$ with probability at least $1 - 2\exp(-k)$. Conditional on this occurring, 
\[\norm{G^{\pinv} v}{} \geq \frac{\norm{\Pi_G v}{}}{\sqrt{d} + c \sqrt{k}},\]
from which it follows that
\[\lambda_{\min}(G^T A G) \leq (d + c\sqrt{d}\sqrt{k}) \frac{1}{\norm{\Pi_G v}{}^2} (\Pi_G v)^T A (\Pi_G v),\] as long as the quantity on the right-hand side is non-negative.  If this quantity is negative, then we similarly have
\[\lambda_{\min}(G^T A G) \leq (d - c\sqrt{d}\sqrt{k}) \frac{1}{\norm{\Pi_G v}{}^2} (\Pi_G v)^T A (\Pi_G v),\]
using the analogous bound on the smallest singular value of $G.$

Since $G$ is Gaussian, the image of $G$ is distributed with the respect to the rotationally invariant measure on $k$-dimensional subspaces. Therefore Lemma~\ref{lem:rand_projection_sketch} applies, and the result follows after collecting terms, and using the assumption that $k\leq cd$ in the negative case.
\end{proof}




\subsection{Lower bound on $\lambda_{\min}(G^T A G)$}

We follow a standard protocol for bounding the extreme eigenvalues of a random matrix.  We first show that $u^T (G^T A G) u$ is reasonably large for a fixed vector $u$ with high probability.  Then by taking a union bound over an $\eps$-net we upgrade this to a uniform bound over the sphere.

We require two additional tricks.  Our lower bound on $u^T (G^T A G) u$ arises from Berstein's inequality, which is hampered by the existence of large eigenvalues of $A$.  Therefore in order to get a guarantee that holds with high enough probability, we first prune the large eigenvalues of $A.$ 

Second, the mesh size of our $\eps$-net needs to be inversely proportional to the Lipschitz constant of $x\mapsto x^T (G^T A G) x$ as $x$ ranges over the sphere.  A priori, the Lipschitz constant might be as bad as $\norm{A}{\text{op}}$ which is typically larger than $\sqrt{d}$. This would ultimately give rise to an additional $\log(d) $ factor in the final sketching dimension.  However we show that the the Lipschitz constant is in fact bounded by $O(k)$ with good probability, avoiding the need for any $d$ dependence in the sketching dimension.

\begin{prop}
\label{prop:quad_form_inequality}
Let $Q$ be a symmetric matrix, and let $x$ and $y$ be unit vectors.  Then \[\abs{x^T Q x - y^T Q y} \leq 2 (\lambda_{\max}(Q) - \lambda_{\min}(Q))\norm{x-y}{}\]
\end{prop}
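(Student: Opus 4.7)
The plan is to exploit the fact that $\|x\|=\|y\|=1$ to reduce to the case where $Q$ has operator norm exactly $(\lambda_{\max}(Q)-\lambda_{\min}(Q))/2$, and then apply a straightforward Cauchy--Schwarz bound. The direct estimate $|x^T Q x - y^T Q y|\leq 2\|Q\|_{\text{op}}\|x-y\|$ is not by itself enough, since $\|Q\|_{\text{op}}$ can substantially exceed $\lambda_{\max}(Q)-\lambda_{\min}(Q)$ (e.g.\ when $Q$ is PSD with all eigenvalues close together), so a shift is needed.

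Concretely, I would first observe that for any scalar $c$, the map $Q\mapsto Q+cI$ changes $x^T Q x - y^T Q y$ by $c(\|x\|^2-\|y\|^2)=0$, so the quantity we want to bound is invariant under such shifts. Applying this with $c=-(\lambda_{\max}(Q)+\lambda_{\min}(Q))/2$ produces $Q':=Q-cI$ whose eigenvalues are symmetric about zero, giving $\|Q'\|_{\text{op}}=(\lambda_{\max}(Q)-\lambda_{\min}(Q))/2$, while $x^T Q x - y^T Q y = x^T Q' x - y^T Q' y$.

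Next, using the symmetry of $Q'$, I would write the telescoping identity
\[
x^T Q' x - y^T Q' y \;=\; (x-y)^T Q' (x+y),
\]
and apply Cauchy--Schwarz together with $\|x+y\|\leq \|x\|+\|y\|=2$ to conclude
\[
|x^T Q x - y^T Q y| \;\leq\; \|Q'\|_{\text{op}}\,\|x-y\|\,\|x+y\| \;\leq\; 2\|Q'\|_{\text{op}}\|x-y\| \;=\; (\lambda_{\max}(Q)-\lambda_{\min}(Q))\|x-y\|,
\]
which is in fact a factor of two better than claimed. There is no real obstacle here; the only subtlety is recognizing that the shift-invariance on unit vectors is what lets one replace $\|Q\|_{\text{op}}$ by the spread of eigenvalues.
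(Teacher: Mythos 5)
Your proof is correct, and it follows a genuinely different route from the paper's. Both arguments hinge on shift-invariance (that $Q\mapsto Q+cI$ leaves $x^TQx-y^TQy$ unchanged for unit $x,y$), but where the paper first restricts $Q$ to the two-dimensional subspace $\spn(x,y)$, notes the eigenvalue spread can only shrink, and then does a coordinate computation, you instead use the single identity $x^TQ'x-y^TQ'y=(x-y)^TQ'(x+y)$ (valid by symmetry of $Q'$) followed by Cauchy--Schwarz. Your route is shorter, avoids the dimension reduction entirely, and in fact yields the sharper constant $(\lambda_{\max}(Q)-\lambda_{\min}(Q))\norm{x-y}{}$ rather than twice that; the paper's extra factor of $2$ comes from bounding $|x_1+y_1|\leq 2$ in coordinates, whereas you absorb the same $\norm{x+y}{}\leq 2$ factor but pair it with $\norm{Q'}{\mathrm{op}}=(\lambda_{\max}-\lambda_{\min})/2$. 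Since the lemma is only used in the paper to get a Lipschitz-type estimate with an unspecified constant (to control the $\eps$-net argument), the improved constant doesn't change anything downstream, but your argument is cleaner and would be a legitimate simplification.
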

\begin{proof}
We first reduce to the $2$-dimensional case.  Let $W$ be a $2$-dimensional subspace passing through $x$ and $y$.  The largest and smallest eigenvalues of the restriction to $W$ of the quadratic form associated to $Q$ are bounded from above and below by $\lambda_{\max}(Q)$ and $\lambda_{\min}(Q)$ respectively.  It therefore suffices to prove the result when $Q$ has dimension $2.$

Since the result we wish to show is invariant under shifting $Q$ by multiples of the identity, it suffices to consider the case when $\lambda_2 = 0.$  After these reductions, we have \[x^T Q x - y^T Q y = \lambda_1 (x_1^2 - y_1^2) = \lambda_1 (x_1 + y_1)(x_1 - y_1).\]  Since $x$ and $y$ are unit vectors, $\abs{x_1 + y_1} \leq 2$ and $\abs{x_1 - y_1} \leq \norm{x-y}{}$ and the result follows.
\end{proof}

\begin{lem}
\label{lem:avoid_logd_dependence}
Let $S = G^T A G$ where $G\in \R^{k\times d}$ has iid $\mathcal{N}(0,1)$ entries and $\norm{A}{F} = 1.$  Then \[\lambda_{\max}(S) - \lambda_{\min}(S)  \leq t\] with probability at least $1 - \frac{4k(k+2)}{t^2}$.
\end{lem}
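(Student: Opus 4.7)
The plan is to reduce the claim to a second-moment estimate together with Markov's inequality. The key observation is a purely deterministic fact about the spectrum of any symmetric $k\times k$ matrix $S$: if $\mu_1,\ldots,\mu_k$ are its eigenvalues and $c$ is any real number, then
\[
\left(\lambda_{\max}(S) - \lambda_{\min}(S)\right)^2 \;\leq\; 4\max_i(\mu_i - c)^2 \;\leq\; 4\sum_i (\mu_i-c)^2 \;=\; 4\,\norm{S-cI}{F}^2,
\]
using $\max(|\mu_{\max}-c|,|c-\mu_{\min}|)\geq (\mu_{\max}-\mu_{\min})/2$ by the triangle inequality. Choosing $c=\Tr(S)/k$ minimizes the right side and gives
\[
\left(\lambda_{\max}(S)-\lambda_{\min}(S)\right)^2 \;\leq\; 4\Bigl(\norm{S}{F}^2 - \tfrac{1}{k}\Tr(S)^2\Bigr).
\]

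Next, I would compute the expectations of $\norm{S}{F}^2$ and $\Tr(S)^2$ directly. Writing the columns of $G$ as $g_1,\ldots,g_k$ so that $S_{ij}=g_i^T A g_j$, a short Gaussian calculation (using $\mathbb{E}[g^TAg]=\Tr(A)$, $\mathrm{Var}(g^TAg)=2\norm{A}{F}^2$, and $\mathbb{E}[(g_i^TAg_j)^2]=\norm{A}{F}^2$ for $i\neq j$) yields
\[
\E\norm{S}{F}^2 \;=\; k(k-1) + k\bigl(2 + \Tr(A)^2\bigr), \qquad \E\Tr(S)^2 \;=\; k(k-1)\Tr(A)^2 + k\bigl(2+\Tr(A)^2\bigr).
\]
Subtracting, the $\Tr(A)^2$ contributions cancel and we obtain
\[
\E\Bigl[\norm{S}{F}^2 - \tfrac{1}{k}\Tr(S)^2\Bigr] \;=\; (k-1)(k+2)\;\leq\; k(k+2),
\]
using $\norm{A}{F}=1$.

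Combining the two steps gives $\E[(\lambda_{\max}(S)-\lambda_{\min}(S))^2] \leq 4k(k+2)$, and then Markov's inequality applied to the non-negative random variable $(\lambda_{\max}(S)-\lambda_{\min}(S))^2$ produces the advertised tail bound $\Pr(\lambda_{\max}(S)-\lambda_{\min}(S) > t) \leq 4k(k+2)/t^2$. The only slightly delicate step is the Gaussian moment bookkeeping for $\E\Tr(S)^2$ and $\E\norm{S}{F}^2$; the crucial feature to verify is the cancellation of the $\Tr(A)^2$ terms, which is what makes the right-hand side depend only on $k$ and not on $\Tr(A)$.
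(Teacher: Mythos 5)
Your proof is correct, and it takes a genuinely different route from the paper's. Both arguments ultimately control the quantity $\norm{S}{F}^2-\tfrac1k\Tr(S)^2 = k\cdot\var(\lambda_1(S),\dots,\lambda_k(S))$, which dominates $\tfrac14(\lambda_{\max}(S)-\lambda_{\min}(S))^2$. The paper bounds it indirectly and rather slickly: it introduces an auxiliary random unit vector $u\in\R^k$ independent of $G$, observes that $Gu$ is a standard Gaussian so $\alpha = u^T S u$ is a trace estimator for $A$ with marginal variance exactly $2$, and then uses Proposition~\ref{prop:variance_calculation} to recognize the conditional variance $\var(\alpha\mid G)$ as $\tfrac{2}{k+2}\var(\lambda(S))$; a total-variance style argument then converts $\var(\alpha)=2$ into the tail bound. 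You instead compute $\E\norm{S}{F}^2$ and $\E\Tr(S)^2$ directly from the Gaussian second and fourth moments, verify that the $\Tr(A)^2$ contributions cancel, and apply Markov. Your calculation is elementary, self-contained (it does not rely on Proposition~\ref{prop:variance_calculation}), and in fact yields the marginally sharper constant $4(k-1)(k+2)$ — the extra slack in the paper's bound comes from discarding $\var(\E[\alpha\mid G])$ in the law of total variance. The paper's approach is shorter once the variance proposition is available and illustrates a reusable trick (turning a conditional-variance estimate into a tail bound via a known marginal variance), while yours makes the dependence on $k$ and the cancellation of $\Tr(A)^2$ transparent.
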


\begin{proof}

Consider the random quantity $\alpha = u^T G^T A G u$, where $u$ is a random unit vector in $\R^k$, independent from $G$.  Note that $Gu$ is distributed as a standard Gaussian, so $\alpha$ is a trace estimator for $A$ with variance $2$ \cite{avron2011randomized}.

On the other hand, one can also study the variance of $\alpha$ conditional on $G$ by using Proposition~\ref{prop:variance_calculation}. Let $E$ be the event that $\lambda_{\max}(S) - \lambda_{\min}(S)  \geq t$.  If $E$ occurs too often, then $\var(\alpha)$ would be too large. Specifically, in the notation of  Proposition~\ref{prop:variance_calculation} when $E$ occurs, we necessarily have $\var(\lambda_1(S), \ldots, \lambda_k(S)) \geq \frac{1}{k} (t/2)^2$, so $\var(\alpha | E) \geq \frac{t^2}{2k(k+2)}.$ Thus we have
\begin{equation}
    2 = \var(\alpha) \geq  \Pr(E)  \var(\alpha | E)  \geq  \Pr(E)    \frac{t^2}{2k(k+2)},
\end{equation}
and so $\Pr(E) \leq 4k(k+2)/t^2$ as desired.
\end{proof}

\begin{lem}
\label{lem:single_point_in_net_bound}
Suppose that $A$ is PSD with $\norm{A}{F}=1,$ and that $v$ consists of iid $\mathcal{N}(0,1)$ entries.  Then for $t \geq 2\sqrt{k}$ we have
\[
\Pr(v^T A v \leq \Tr(A) - t) \leq \exp(-c\sqrt{k}(t-\sqrt{k})).
\]
\end{lem}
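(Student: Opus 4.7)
The plan is to observe that $v^T A v$ is a weighted sum of independent $\chi^2_1$ random variables with nonnegative weights summing in square to $\|A\|_F^2 = 1$, and to bound the lower tail of this sum by a direct MGF calculation (essentially Laurent--Massart). Diagonalizing $A$ in its eigenbasis (using that standard Gaussian vectors are rotation-invariant), write
\[
v^T A v \;=\; \sum_{i=1}^d \lambda_i\, g_i^2, \qquad g_i \stackrel{\text{iid}}{\sim} \mathcal{N}(0,1),
\]
so that $\Tr(A) - v^T A v = -\sum_i \lambda_i(g_i^2 - 1) =: Z$ and we need to upper bound $\Pr(Z \geq t)$.

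The core computation is a one-sided MGF bound. For each $i$, since $\lambda_i \geq 0$ (using the PSD hypothesis), one has for all $s \geq 0$
\[
\E\bigl[e^{-s\lambda_i(g_i^2-1)}\bigr] \;=\; \frac{e^{s\lambda_i}}{\sqrt{1+2s\lambda_i}},
\]
and taking logarithms, the inequality $\log(1+x) \geq x - x^2/2$ for $x \geq 0$ (which one verifies by differentiating twice) yields $s\lambda_i - \tfrac12\log(1+2s\lambda_i) \leq s^2\lambda_i^2$. Multiplying over $i$ and using $\sum_i \lambda_i^2 = \|A\|_F^2 = 1$ gives the clean bound $\E[e^{sZ}] \leq e^{s^2}$ for all $s \geq 0$, i.e.\ $Z$ has a subgaussian upper tail with variance proxy $2$ (note the PSD assumption is essential here --- it is what makes the lower tail of $v^TAv$ subgaussian even though the upper tail is only subexponential).

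Applying Markov's inequality to $e^{sZ}$ and optimizing at $s = t/2$ produces
\[
\Pr\bigl(v^T A v \leq \Tr(A) - t\bigr) \;\leq\; e^{-t^2/4}.
\]
To match the stated form, I just need to check $t^2/4 \geq c\sqrt{k}(t - \sqrt{k})$ on the range $t \geq 2\sqrt{k}$, and with $c = 1$ the difference equals $(t/2 - \sqrt{k})^2 \geq 0$, which closes the argument for any $c \leq 1$.

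The entire proof is short and contains no real obstacle --- the only mildly subtle point is the one-sided MGF inequality, where the PSD hypothesis on $A$ (equivalently $\lambda_i \geq 0$) is exactly what lets us use $\log(1+2s\lambda_i) \geq 2s\lambda_i - 2s^2\lambda_i^2$ in the direction that produces a subgaussian (not merely subexponential) lower tail without any dependence on $\|A\|_{\mathrm{op}}$. This is what allows the bound to avoid the pruning step that is needed later for controlling the upper tail on the $\eps$-net.
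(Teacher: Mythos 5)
Your proof is correct and takes a genuinely different (and cleaner) route than the paper's. The paper prunes the top $k$ eigenvalues to form $A_{-k}$, uses $v^T A v \geq v^T A_{-k} v$ (the PSD hypothesis entering here), and then applies Bernstein/Hanson--Wright to $v^T A_{-k} v$; the truncation is what makes the $\min$ in the exponent resolve to the subexponential branch $\sqrt{k}(t-\sqrt{k})$ when $t \ge 2\sqrt{k}$, since $\lambda_{\max}(A_{-k}) \le 1/\sqrt{k}$. You instead compute the one-sided MGF directly and use $\log(1+x) \ge x - x^2/2$ for $x\ge 0$ (which is exactly where PSD enters for you, keeping $1+2s\lambda_i>0$ and making the inequality point the right way), arriving at the purely subgaussian tail $\Pr(Z\ge t)\le e^{-t^2/4}$ with no dependence on $k$ or $\|A\|_{\mathrm{op}}$ at all. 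This is the classical Laurent--Massart lower-tail bound, and it is strictly stronger than what the lemma asks for; the reduction $(t/2-\sqrt{k})^2\ge 0$ you give recovers the stated form with $c=1$. What each approach buys: yours is shorter and gives the optimal $e^{-\Theta(t^2)}$ tail; the paper's pruning trick is arguably overkill here, but it is the same trick they genuinely need elsewhere for the \emph{upper} tail over the $\eps$-net (where the tail really is only subexponential and one must control $\lambda_{\max}(A_{-k})$), so they likely reused the machinery. If anything, your argument makes it clearer that for PSD $A$ the lower tail of $v^TAv$ is always subgaussian with variance proxy $\Theta(\|A\|_F^2)$, which is the conceptual content of the lemma.
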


\begin{proof}
We have 
\begin{align}
    \Pr(v^T A v \leq \Tr(A) - t) &\leq \Pr(v^T A_{-k} v \leq \Tr(A) - t)\\
    &= \Pr(v^T A_{-k} v \leq \Tr(A_{-k}) -(t- \Tr(A_{k}))
\end{align}

Note that $v^T A_{-k} v$ has expectation $\Tr(A_{-k})$, so by Bernstein's inequality  (or Hanson-Wright) \cite{vershynin2018high}, 
\[\Pr(v^T A v \leq \Tr(A) - t) \leq \exp\parens{-c \min\parens{\frac{(t-\Tr(A_{k}))^2}{\norm{A_{-k}}{F}^2} , \frac{t - \Tr(A_{k})}{\lambda_{\max}(A_{-k})} } },\]
for $t\geq \Tr(A_k).$

Now note that $\norm{A_{-k}}{F} \leq \norm{A}{F} \leq 1$, and that $\lambda_{\max}(A_{-k}) \leq \frac{1}{\sqrt{k}},$ since $\norm{A}{F}=1.$ Additionally, $\Tr(A_k) \leq \sqrt{k} \norm{A_k}{F} \leq \sqrt{k}.$  These bounds imply that
\[\frac{(t-\Tr(A_{k}))^2}{\norm{A_{-k}}{F}^2} \geq (t-\sqrt{k})^2\]
and
\[\frac{t - \Tr(A_{k})}{\lambda_{\max}(A_{-k})} \geq \sqrt{k}(t-\sqrt{k})\]
for $t\geq \Tr(A_k).$  When $t > 2\sqrt{k}$, the latter expression is smaller, and the conclusion follows.

\end{proof}


\begin{thm}
\label{thm:sketch_eval_lower_bd}
Suppose that $A$ is PSD with $\norm{A}{F}=1$, and that $G\in \R^{d\times k}$ has iid $\mathcal{N}(0,1)$ entries and that $k\geq 5$. Then with at least $0.99$ probability,
\[
\lambda_{\min}(G^T A G) \geq \Tr(A) - c\sqrt{k}\log(k)
\] for some absolute constant $c$.
\end{thm}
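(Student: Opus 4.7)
The plan is to carry out the standard $\epsilon$-net argument that the authors have already telegraphed, using the three ingredients built up just before the theorem statement: the single-point tail bound (Lemma~\ref{lem:single_point_in_net_bound}), the bound on the spread of eigenvalues of the sketch (Lemma~\ref{lem:avoid_logd_dependence}), and the elementary Lipschitz-type inequality (Proposition~\ref{prop:quad_form_inequality}). Writing $S = G^T A G$, the key identity is that for a fixed unit vector $u \in \R^k$, the vector $Gu$ is distributed as $\mathcal{N}(0,I_d)$, so that $u^T S u = (Gu)^T A (Gu)$ matches the setting of Lemma~\ref{lem:single_point_in_net_bound}. Hence for each fixed $u$ and any $t \geq 2\sqrt{k}$,
\[
\Pr\!\left( u^T S u \leq \Tr(A) - t \right) \leq \exp\!\left(-c\sqrt{k}(t - \sqrt{k})\right).
\]

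Next I would control the event that $\lambda_{\max}(S) - \lambda_{\min}(S)$ is large: apply Lemma~\ref{lem:avoid_logd_dependence} with $t = C k$ for an absolute constant $C$ chosen so that $4k(k+2)/t^2 \leq 1/200$. This gives, with probability at least $1 - 1/200$, that the map $u \mapsto u^T S u$ is $O(k)$-Lipschitz on $S^{k-1}$ via Proposition~\ref{prop:quad_form_inequality}. Condition on this event.

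I then fix a $\delta$-net $\mathcal{N} \subset S^{k-1}$ of the unit sphere with $\lvert \mathcal{N} \rvert \leq (3/\delta)^k$, where I will take $\delta = 1/k$. Union-bounding the single-point bound over $\mathcal{N}$,
\[
\Pr\!\left( \min_{u \in \mathcal{N}} u^T S u \leq \Tr(A) - t_0 \right) \leq (3k)^k \exp\!\left(-c\sqrt{k}(t_0 - \sqrt{k})\right),
\]
and this is at most $1/200$ provided $t_0 \geq C' \sqrt{k}\log k$ for a suitable absolute constant. Combining the two events (each failing with probability at most $1/200$, so both hold with probability at least $0.99$) and transferring the bound from the net to all of $S^{k-1}$ via the Lipschitz estimate yields, for any $u \in S^{k-1}$ and its nearest net point $u' \in \mathcal{N}$,
\[
u^T S u \geq (u')^T S (u') - 2 \cdot O(k) \cdot \delta \geq \Tr(A) - C'\sqrt{k}\log k - O(1),
\]
which after absorbing the $O(1)$ into the $\sqrt{k}\log k$ term (using $k \geq 5$) gives the stated bound on $\lambda_{\min}(G^T A G)$.

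The main obstacle to making this argument give a clean, dimension-free bound is the interaction between the net cardinality and the tail rate: the tail decay in Lemma~\ref{lem:single_point_in_net_bound} is sub-exponential of rate $\sqrt{k}$ rather than Gaussian of rate $k$, so overpaying on the net cardinality would cost a factor of $\sqrt{k}$ in the threshold. The choice $\delta = 1/k$ keeps $\log|\mathcal{N}| = O(k \log k)$, which after dividing by $\sqrt{k}$ yields exactly the $\sqrt{k}\log k$ slack in the theorem. Equally important is that the Lipschitz constant, which a priori could have been as bad as $\|A\|_{\mathrm{op}}$ (introducing an unwanted $\log d$), is controlled by $O(k)$ via Lemma~\ref{lem:avoid_logd_dependence}; this is precisely what allows the final bound to depend only on $k$ and not on the ambient dimension $d$.
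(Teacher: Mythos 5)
Your proposal reproduces the paper's own argument essentially step for step: apply Lemma~\ref{lem:single_point_in_net_bound} to each fixed net point via the distributional identity $Gu \sim \mathcal{N}(0,I_d)$, take a $1/k$-net of size $(3k)^k$ so that the union bound over the net costs $O(k\log k)$ in the exponent (balancing against the $\sqrt{k}$-rate tail), control the Lipschitz constant by $O(k)$ via Lemma~\ref{lem:avoid_logd_dependence} and Proposition~\ref{prop:quad_form_inequality} rather than by $\norm{A}{\text{op}}$ (which would reintroduce a $\log d$), and combine. Apart from minor reordering of the two conditioning events and slightly different constant bookkeeping, this is the same proof.
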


\begin{proof}



For any fixed unit vector $u\in \R^k$, $Gu$ is distributed as a standard Gaussian, and so  Lemma~\ref{lem:single_point_in_net_bound} applies.  Therefore for a choice of constant, 

\begin{equation}
\label{eq:bound_over_net}
u^T (G^T A G) u \geq \Tr(A) - c\sqrt{k}\log(k)
\end{equation}
with probability at least $1 - \exp(-10k\log k).$

Let $\mathcal{N}$ be a net for the sphere in $\R^k$ with mesh size $1/k,$ which can be taken to have at most $(3k)^k$ elements \cite{vershynin2018high}.  By taking a union bound, equation~\ref{eq:bound_over_net} holds over $\mathcal{N}$ with probability at least \[1 - (3k)^k\exp(-10k\log k) \geq 1 - \exp(-k)\] for $k\geq 2.$

By choosing $t=100k$ in Lemma~\ref{lem:avoid_logd_dependence}, and applying Proposition~\ref{prop:quad_form_inequality}, we get that \[\abs{x^T (G^T A G) x - y^T (G^T A G)y} \leq 100k\norm{x-y}{}\] with probability at least $0.999$.  Since $\mathcal{N}$ has mesh size $1/k$, we have that
\[u^T (G^T A G) u \geq \Tr(A) - c\sqrt{k} + O(1)\]
for all unit vectors $u$ in $\R^k$ with probability at least $0.999 - \exp(-k).$




\end{proof}

As a consequence of Theorem~\ref{thm:sketch_eval_lower_bd} and Lemma~\ref{lem:sketch_eval_upper_bd} we obtain our main result.
\begin{thm}
\label{thm:bilinear_sketch_guarantee}
There is a bilinear sketch $G^T A G$ with sketching dimension $k = O(\frac{1}{\eps^2} \log^2\frac{1}{\eps})$ that yields a two-sided \ptester{2} that is correct with at least $0.9$ probability.
\end{thm}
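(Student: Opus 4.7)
The plan is to combine Theorem~\ref{thm:sketch_eval_lower_bd} and Lemma~\ref{lem:sketch_eval_upper_bd} directly. By scaling the problem we may assume $\|A\|_F = 1$. Choose sketching dimension $k = C\log^2(1/\eps)/\eps^2$ for an absolute constant $C$ to be taken sufficiently large. The estimates $\alpha$ (for $\Tr(A)$, accurate to additive $\|A\|_F$) and $\beta$ (for $\|A\|_F$, accurate to a factor of $2$) can each be produced with $O(1)$ additional Gaussian quadratic queries or by directly inspecting $\Tr(G^T A G)$ and $\|G^T A G\|_F$, so they do not affect the asymptotic sketching dimension. We note for later that for $k$ in this regime, $\eps\sqrt{k}/\log k = \Theta(\sqrt{C})$, which supplies the needed separation.

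For the PSD case, $G^T A G$ is automatically PSD so the first check always passes. Theorem~\ref{thm:sketch_eval_lower_bd} yields $\lambda_{\min}(G^T A G) \geq \Tr(A) - c\sqrt{k}\log k$ with probability at least $0.99$. Combined with $|\alpha - \Tr(A)| \leq 1$ and $\beta \geq 1/2$, this gives $\gamma = O(1)$ uniformly, which lies on the correct side of the threshold $c_{\mathrm{psd}}$.

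For the non-PSD case, suppose $\lambda_{\min}(A) \leq -\eps$. We split into two subcases depending on whether the technical hypothesis of Lemma~\ref{lem:sketch_eval_upper_bd} holds, namely whether $x^T A x \geq 0$ with probability at least $0.999$ for a random unit $x$. If the hypothesis holds, Lemma~\ref{lem:sketch_eval_upper_bd} gives $\lambda_{\min}(G^T A G) \leq \Tr(A) - 0.4\eps k + O(\sqrt{k})$, where the $O(\sqrt{k})$ absorbs the $O(\sqrt{k/d})|\Tr(A)|$ term using the bound $|\Tr(A)| \leq \sqrt{d}\,\|A\|_F = \sqrt{d}$ (and we may assume $k \leq cd$, else the problem is trivial). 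Combined with $|\alpha - \Tr(A)|\leq 1$ this yields $\alpha - \lambda_{\min}(G^T A G) = \Omega(\eps k)$, so $\gamma = \Omega(\eps\sqrt{k}/\log k) = \Omega(\sqrt{C})$, which for $C$ large separates from the $O(1)$ bound in the PSD case. If instead the hypothesis fails, then a single Gaussian $g$ satisfies $g^T A g < 0$ with probability at least $0.001$; the diagonal entries of $G^T A G$ are iid copies of this random variable, so with probability at least $1 - (0.999)^k = 1 - e^{-\Omega(k)}$ at least one diagonal entry of $G^T A G$ is negative, forcing $G^T A G$ to be non-PSD and causing rejection in the first check.

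The main obstacle is ensuring the $\Omega(\eps k)$ main term dominates both the additive $O(\sqrt{k})$ arising from the $\sqrt{k/d}\,|\Tr(A)|$ slack in Lemma~\ref{lem:sketch_eval_upper_bd} and the $O(\sqrt{k}\log k)$ PSD-side bound; this requires $\eps\sqrt{k}/\log k \to \infty$, which our choice $k = C\log^2(1/\eps)/\eps^2$ achieves since $\eps\sqrt{k} = \sqrt{C}\log(1/\eps)$ while $\log k = O(\log(1/\eps))$. A secondary subtlety is handling the subcase where the hypothesis of Lemma~\ref{lem:sketch_eval_upper_bd} fails — rather than trying to analyze $\lambda_{\min}(G^T A G)$ in this regime, we sidestep it by observing that the diagonal of $G^T A G$ already witnesses non-PSD-ness with overwhelming probability, so the first check handles it for free.
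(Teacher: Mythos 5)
Your proof is correct and follows essentially the same route as the paper: apply Theorem~\ref{thm:sketch_eval_lower_bd} on the PSD side and Lemma~\ref{lem:sketch_eval_upper_bd} on the $\eps$-far side, normalize by $\beta\sqrt{k}\log k$ to form $\gamma$, and choose $k = \Theta(\eps^{-2}\log^2(1/\eps))$ so that the two bounds on $\gamma$ separate by a constant. One place where you are actually more careful than the paper's own writeup: Lemma~\ref{lem:sketch_eval_upper_bd} carries the technical hypothesis that $x^T A x \geq 0$ for a random unit $x$ with probability $0.999$, which the paper's proof of Theorem~\ref{thm:bilinear_sketch_guarantee} invokes without verifying (the paper only addresses it explicitly in Theorem~\ref{thm:adaptive_l2_upper}, via a constant number of extra checks). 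Your observation that the $k$ diagonal entries of $G^T A G$ are themselves i.i.d.\ copies of $g^T A g$, so when the hypothesis fails some diagonal entry is negative with probability $1 - 0.999^{k}$ and the initial PSD check on $G^T A G$ already rejects, closes this gap cleanly and without additional queries.
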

\begin{proof}
If $\lambda_{\min}(A) < 0$ then we automatically reject. Otherwise we first use $O(1)$ columns of the sketching matrices to estimate $\alpha$ of $\Tr(A)$ to within an additive error of $\norm{A}{F}$ with $0.01$ failure probability. We then use another $O(1)$ columns of the sketching matrices to construct an approximation $\beta$ of $\norm{A}{F}$ with $\frac{1}{2} \norm{A}{F} \leq \beta \leq 2 \norm{A}{F}$ with $0.01$ failure probability (see for example \cite{meister2019tight}).

Now consider the quantity 
\[\gamma = \frac{\alpha - \lambda_{\min}(G^T A G)}{\beta\sqrt{k}\log k}.\]

If $A$ is PSD, then by Theorem~\ref{thm:sketch_eval_lower_bd}, 
\[\lambda_{\min}(G^T A G) \geq \Tr(A) - c\sqrt{k}\log k \norm{A}{F},\]
which implies that
\[\gamma \leq c_{\text{psd}},\]
for some absolute constant $c_{\text{psd}}.$

On the other hand, if $A$ has a negative eigenvalue less than or equal to $-\eps$, then by Theorem~\ref{lem:sketch_eval_upper_bd}
\[ \lambda_{\min}(G^T A G) \leq - 0.4\eps k \norm{A}{F} + \parens{1 + c\frac{\sqrt{k}}{\sqrt{d}}}\Tr(A) + O(1)\norm{A}{F},\]
which implies that 
\[\gamma \geq c_{\text{far}}(\eps \sqrt{k} - c)/\log k,\] 
for some absolute constant $c_{\text{far}}.$

Finally by taking $k = \Theta(\frac{1}{\eps^2} \log^2\frac{1}{\eps})$, we have $c_{\text{psd}} < c_{\text{far}}(\eps \sqrt{k}-c)/\log k$, which implies that the tester is correct if it outputs \True\,precisely when $\gamma \leq c_{\text{psd}}.$

\end{proof}

Note that this result immediately gives a non-adaptive vector-matrix-vector tester which makes $\tildeO(1/\eps^4)$ queries.


\subsection{Application to adaptive vector-matrix-vector queries}

By combining our bilinear sketch with Theorem~\ref{thm:adaptive_vmv_upper} we achieve tight bounds for adaptive queries.

\begin{thm}
\label{thm:adaptive_l2_upper}
There is a two-sided adaptive \ptester{2} in the vector-matrix-vector model, which makes $\tildeO(1/\eps^2)$ queries.
\end{thm}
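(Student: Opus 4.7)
The plan is to implicitly apply the Gaussian bilinear sketch of Theorem~\ref{thm:bilinear_sketch_guarantee} to produce a $k \times k$ matrix $M = G^T A G$ with $k = \Theta(\frac{1}{\eps^2}\log^2\frac{1}{\eps})$, shift by a scalar multiple of the identity to obtain $N := M - \tau I$, and then invoke the adaptive one-sided $\ell_1$-tester of Theorem~\ref{thm:adaptive_vmv_upper} on $N$ with parameter $\Theta(\eps^2)$. A vector-matrix-vector query $u^T N v = (Gu)^T A (Gv) - \tau\, u^T v$ to $N$ can be simulated by a single vector-matrix-vector query to $A$ with vectors $Gu, Gv \in \R^d$, so the cost translates directly to $\tildeO(1/\eps^2)$ queries to $A$.

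The shift is chosen as follows. First spend $O(1)$ extra queries to produce estimates $\alpha$ of $\Tr(A)$ up to additive error $\norm{A}{F}$ and $\beta$ of $\norm{A}{F}$ up to a factor of $2$; then set $\tau = \alpha - c\sqrt{k}\log k \cdot \beta$ with $c$ slightly larger than the constant appearing in Theorem~\ref{thm:sketch_eval_lower_bd}. In the PSD case that theorem gives $\lambda_{\min}(M) \geq \tau$ with high probability, so $N$ is PSD. In the $\eps$-far case, Lemma~\ref{lem:sketch_eval_upper_bd} yields $\lambda_{\min}(N) \leq -\Omega(\eps k)\norm{A}{F}$ with high probability.

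The main obstacle is showing $\norm{N}{1} = \tildeO(k^{3/2})\norm{A}{F}$ with high probability in both cases, since a priori $\norm{N}{1}$ could blow up with $\Tr(A)$, which is unbounded relative to $\norm{A}{F}$. I would bound $\norm{N}{1} \leq \sqrt{k}\norm{N}{F}$ and expand $\norm{N}{F}^2 = \norm{M}{F}^2 - 2\tau\,\Tr(M) + k\tau^2$. A routine moment computation gives $\E[\Tr(M)] = k\Tr(A)$ and $\E[\norm{M}{F}^2] = (k^2 + k)\norm{A}{F}^2 + k\Tr(A)^2$, and since $\tau \approx \Tr(A) - \tildeO(\sqrt{k})\norm{A}{F}$, the $\Tr(A)$-dependent terms cancel to give $\E[\norm{N}{F}^2] = \tildeO(k^2)\norm{A}{F}^2$ with no dependence on $\Tr(A)$. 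A Hanson--Wright style argument upgrades this to a high-probability bound on $\norm{N}{1}$, and with $k = \tildeO(1/\eps^2)$ the far-case eigenvalue estimate becomes $\lambda_{\min}(N) \leq -\tildeO(\eps^2)\norm{N}{1}$. This places us in the hypothesis of the $(\Theta(\eps^2),\ell_1)$-tester, and Theorem~\ref{thm:adaptive_vmv_upper} applied to $N$ solves it in $\tildeO(1/\eps^2)$ queries, completing the proof.
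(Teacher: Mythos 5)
Your proposal follows the same high-level strategy as the paper: form the bilinear sketch $M = G^T A G$, shift by a scalar multiple of the identity chosen so that the sketch of a PSD matrix lands in the PSD cone, and then invoke the adaptive $\ell_1$-tester of Theorem~\ref{thm:adaptive_vmv_upper} on the $k\times k$ shifted sketch, with vector-matrix-vector queries to the sketch simulated by single queries to $A$. The one non-trivial technical hurdle in both arguments is controlling the Schatten-$1$ norm of the shifted sketch, since it could in principle scale with $\Tr(A)$, which is unbounded relative to $\norm{A}{F}$. Here you take a genuinely different route. The paper exploits the trace: $\Tr(G^TAG)$ concentrates around $k\,\Tr(A)$, the chosen shift nearly cancels this, and since the negative part of the shifted spectrum is lower-bounded by $k\,\lambda_{\min}$ the trace bound automatically constrains the positive part too, giving $\norm{\Gamma}{1} \leq O(k)\abs{\lambda_{\min}(\Gamma)} + O(k)$ directly. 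You instead pass through the Frobenius norm via $\norm{N}{1} \leq \sqrt{k}\norm{N}{F}$ and compute $\E[\norm{N}{F}^2]$ explicitly, observing that the $\Tr(A)$-dependent terms cancel under the shift; this is heavier computation but gives an equivalent $\tildeO(k)$ bound on the ratio $\norm{N}{1}/\abs{\lambda_{\min}(N)}$ and hence the same final query count. Both mechanisms are valid.

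Two small gaps in your writeup. First, your appeal to ``a Hanson--Wright style argument'' to upgrade $\E[\norm{N}{F}^2]$ to a high-probability bound is imprecise: $\norm{N}{F}^2$ is a degree-four polynomial in the Gaussian entries of $G$, so Hanson--Wright does not directly apply. Markov's inequality from the expectation gives a constant-probability bound, and since a constant success probability is all Theorem~\ref{thm:adaptive_l2_upper} requires, that is sufficient and cleaner. Second, Lemma~\ref{lem:sketch_eval_upper_bd} carries a technical hypothesis that $x^T A x \geq 0$ with probability at least $0.999$ for a random unit vector $x$; you cite the lemma for the $\eps$-far case without verifying or handling this hypothesis. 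The paper deals with it by spending $O(1)$ extra Gaussian quadratic-form queries at the start and rejecting immediately if any comes back negative (which is safe, since a negative quadratic form already certifies non-PSD). Your proof should include this preliminary check, or otherwise address the hypothesis.
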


\begin{proof}
To handle the technical condition in Lemma~\ref{lem:sketch_eval_upper_bd}, we first compute $x^T A x$ for a constant number of independent Gaussian vectors $x.$  If $x^T A x$ is ever negative, then we automatically reject.

We showed in the proof of Theorem~\ref{thm:bilinear_sketch_guarantee} that with $0.9$ probability, $\gamma \leq c_{psd}$ if $A$ is PSD, and $\gamma \geq c_{\text{far}}(\eps\sqrt{k} - c)/\log k := C_{\text{far}}(k)$ if $A$ is $\eps$-far from PSD.  By choosing some $k = O(\frac{1}{\eps^2} \log^2 \frac{1}{\eps})$ we can arrange for $C_{\text{far}}(k) - c_{\text{psd}} \geq 1,$ and also for $C_{\text{far}}(k) = \Theta(1).$

Next we compute estimates $\alpha$ and $\beta$  of $\Tr(A)$ and $\norm{A}{F}$ as above, and (implicitly) form the matrix
\[\Gamma = \frac{G^TAG - \alpha I_k}{\beta \sqrt{k}\log k} + C_{\text{far}}(k) I_k - I_k.\]

If $A$ is PSD, then with very good probability, \[\lambda_{\text{min}}(\Gamma) = -\gamma + C_{\text{far}}(k) - \frac{1}{\eps} \geq -c_{\text{psd}} + C_{\text{far}}(k) - 1 \geq 0.\]

Similarly, if $A$ is $\eps$-far from PSD, then \[\lambda_{\text{min}}(\Gamma) = -\gamma + C_{\text{far}}(k) - \frac{1}{\eps} \leq -C_{\text{far}}(k) + C_{\text{far}}(k) - 1 = -1.\]

Thus it suffices to distinguish $\Gamma$ being PSD from $\Gamma$ having a negative eigenvalue less than or equal to $-1.$

For this we will utilize our adaptive $\ell_1$-tester, so we must bound $\norm{\Gamma}{1}.$ Note that $G^T A G$ is a trace estimator for $kA$ with variance $O(k)\norm{A}{F}$ \cite{andoni2013eigenvalues}.  Therefore $\Tr(G^T A G) = k(\Tr(A) \pm O(1)\norm{A}{F}).$  Define $M = \frac{1}{\beta}(G^T A G - \alpha I_k)$, so that $\Tr(M) = O(k).$  The negative eigenvalues of $M$ sum to at most $k\lambda_{\min}(M)$ in magnitude, and so the bound on $\Tr(M)$ implies that $\norm{M}{1} \leq 2k\lambda_{\min}(M) + O(k).$  Write $\Gamma = \frac{1}{\sqrt{k}\log k} M + C_{\text{far}}(k) I - I$, so that $\norm{\Gamma}{1} \leq \frac{1}{\sqrt{k}\log k}\norm{M}{1} + O(k).$  Note that $\lambda_{\min}(M) \leq \sqrt{k}\log k \lambda_{\min}(\Gamma) + O(\sqrt{k}\log k).$  Therefore $\norm{\Gamma}{1} \leq 2k \lambda_{\min}(\Gamma) + O(k).$ From this we have
\[\frac{1}{\lambda_{\min}(\Gamma)}\norm{\Gamma}{1} \leq 2k\parens{\frac{\lambda_{\min}(\Gamma) + O(1)}{\lambda_{\min}(\Gamma)}} + \frac{O(k)}{\lambda_{\min}(\Gamma)}
\leq O(k)\] 
as long as $|\lambda_{\min}(\Gamma)| \geq \Omega(1)$, which it is by assumption.

Therefore Theorem~\ref{thm:adaptive_vmv_upper} gives an adaptive vector-matrix-vector tester for $\Gamma$ which requires only $\tildeO(k) = \tildeO(\frac{1}{\eps^2})$ queries.

\end{proof}

\noindent As a consequence we also obtain a two-sided $p$-tester for all $p\geq 2.$
\begin{corollary}
\label{cor:adaptive_lp_upper}
For $p\geq 2$, there is a two-sided adaptive \ptester{p} in the vector-matrix-vector model, which make $\tildeO(1/\eps^2) d^{1-1/p}$ queries.
\end{corollary}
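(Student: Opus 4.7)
The plan is to mirror the argument used to derive Corollary~\ref{cor:adaptive_vmv_upper_p} from Theorem~\ref{thm:adaptive_vmv_upper}, except that now the reduction is from the $\ell_p$ norm to the Frobenius norm rather than to the Schatten-$1$ norm, and we invoke Theorem~\ref{thm:adaptive_l2_upper} as the base tester.

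The key fact is a Schatten-norm comparison. Since the Schatten norms are $\ell_p$ norms on the $d$-dimensional vector of singular values, the standard inequality $\|x\|_2 \leq d^{1/2 - 1/p} \|x\|_p$ for $p \geq 2$ and $x \in \R^d$ yields
\[
\|A\|_2 \leq d^{1/2 - 1/p}\, \|A\|_p,
\qquad \text{equivalently} \qquad
\|A\|_p \geq d^{1/p - 1/2}\, \|A\|_2.
\]
Consequently, if $A$ satisfies $\lambda_{\min}(A) \leq -\eps \|A\|_p$, then
\[
\lambda_{\min}(A) \;\leq\; -\eps \cdot d^{1/p - 1/2} \|A\|_2 \;=\; -\eps'\, \|A\|_2,
\qquad \text{where } \eps' := \eps\, d^{1/p - 1/2}.
\]
Thus any $(\eps', \ell_2)$-tester with this choice of $\eps'$ is automatically an $(\eps, \ell_p)$-tester for $p \geq 2$: PSD inputs are still accepted (the base tester is correct on all PSD matrices), and inputs that are $\eps$-far from PSD in Schatten-$p$ are also $\eps'$-far in Frobenius, so they are rejected with the required probability.

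Invoking Theorem~\ref{thm:adaptive_l2_upper} with the parameter $\eps'$ gives a two-sided adaptive tester using
\[
\tildeO\!\left(\frac{1}{(\eps')^2}\right) \;=\; \tildeO\!\left(\frac{d^{1 - 2/p}}{\eps^2}\right)
\]
vector-matrix-vector queries, matching the bound advertised in the results table. There is no real obstacle here: the only substantive content is the Schatten-norm inequality above, and the hidden polylogarithmic factors simply inherit the $\polylog(1/\eps')$ factors from Theorem~\ref{thm:adaptive_l2_upper}, which absorb into the $\tildeO(\cdot)$ notation (producing only an extra $\polylog d$).
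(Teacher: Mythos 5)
Your proof is correct and is essentially identical to the paper's: the paper also applies Theorem~\ref{thm:adaptive_l2_upper} together with the inequality $\norm{A}{p} \geq d^{1/p - 1/2}\norm{A}{F}$. One small note: your derivation correctly yields $\tildeO(d^{1-2/p}/\eps^2)$, agreeing with the results table and the matching lower bound (Corollary~\ref{cor:adaptive_two_sided_lower_bound}); the exponent $d^{1-1/p}$ appearing in the corollary statement itself is a typo in the paper.
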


\begin{proof}
Apply Theorem~\ref{thm:adaptive_l2_upper} along with the bound $\norm{A}{p} \geq d^{\frac{1}{p} - \frac{1}{2}}\norm{A}{F}.$
\end{proof}

\subsection{Lower bounds for two-sided testers}

Our lower bounds for two-sided testers come from the spiked Gaussian model introduced in \cite{li2016tight}.  As before, our adaptive lower bounds will come as a consequence of the corresponding non-adaptive bounds.

\begin{thm}
\label{thm:nonadaptive_two_sided_vmv_lower_bound}
A two-sided \ptester{p} that makes non-adaptive vector-matrix-vector queries requires at least
\begin{itemize}
    \item $\Omega(\frac{1}{\eps^{2p}})$ queries for $1\leq p \leq 2$
    \item $\Omega(\frac{1}{\eps^4} d^{2 - 4/p})$ queries for $2 < p < \infty$ as long as $d$ can be taken to be $\Omega(1/\eps^p).$
    \item $\Omega(d^2)$ queries for $p=\infty.$
\end{itemize}
\end{thm}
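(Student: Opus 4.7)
The plan is to exhibit hard distributions over symmetric matrices and appeal to the general linear-sketch lower bounds of \cite{li2016tight}. Since a non-adaptive vector-matrix-vector query $v^T A w = \inner{A}{wv^T}$ is a rank-one linear measurement of $A$, any non-adaptive vmv protocol with $s$ queries is a special case of a linear sketch of dimension at most $s$, so any lower bound for general linear sketches transfers to our setting.

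The hard instance I would use is a spiked Gaussian model. Fix parameters $\alpha,\sigma,\lambda>0$ (to be chosen per case) and consider the two distributions $\mathcal{D}_0$: $A=\alpha I+\sigma W$, and $\mathcal{D}_1$: $A=\alpha I+\sigma W-\lambda uu^T$, where $W$ is drawn from the GOE with unit off-diagonal variance and $u$ is uniform on $S^{d-1}$ independent of $W$. Standard concentration for GOE gives $\norm{W}{\text{op}}\leq 2\sqrt{d}+o(\sqrt{d})$ with high probability, so choosing $\alpha\geq 3\sigma\sqrt{d}$ guarantees $\mathcal{D}_0$ is PSD whp; and choosing $\lambda\geq\alpha+2\sigma\sqrt{d}+\eps\norm{A}{p}$ guarantees that $\lambda_{\min}(A)$ under $\mathcal{D}_1$ is at most $-\eps\norm{A}{p}$ whp. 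For suitable parameter choices $\norm{A}{p}$ is tightly concentrated and essentially the same under both distributions, so the two distributions fall cleanly on opposite sides of the tester's decision boundary.

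For each range of $p$ I would tune the parameter triple so that the LNW lower bound for distinguishing the two distributions via low-dimensional linear sketches becomes tight. The signal-to-noise ratio---the rank-one spike against the Gaussian background, measured in the appropriate Schatten norm---controls the required sketching dimension. For $1\leq p\leq 2$ a direct application yields the $\Omega(1/\eps^{2p})$ bound; for $2<p<\infty$ with $d=\Omega(1/\eps^p)$ the LNW bound picks up the extra dimension factor $d^{2-4/p}$, giving $\Omega(d^{2-4/p}/\eps^4)$; and for $p=\infty$ the bound reduces to the classical $\Omega(d^2)$ operator-norm sketching lower bound (take $\lambda=\Theta(\sqrt{d})$ and $\alpha,\sigma=\Theta(1)$). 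In every case the reduction to a two-Gaussian total-variation calculation on the sketched side is essentially the same as in \cite{li2016tight}.

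The main obstacle is pinning down $(\alpha,\sigma,\lambda)$ so that the PSD-versus-far dichotomy holds whp while simultaneously keeping $\lambda$ small enough relative to the background to make the LNW obstruction kick in; this is precisely the source of the hypothesis $d=\Omega(1/\eps^p)$ in the $p>2$ case, since a larger spike is needed to overwhelm the GOE bulk when $p$ is large, and we need enough ambient dimension to hide it from a low-dimensional sketch. A secondary technical point is tracking the Schatten-$p$ norm concentration of an identity-shifted GOE plus a rank-one perturbation; this is trivial for $p=\infty$, standard for $p=2$ via Hanson--Wright, and requires moment bounds for the intermediate cases. Adapting the LNW argument itself to the identity-shifted, randomly-spiked model should be routine, as it only amounts to computing a total-variation distance between two multivariate Gaussians (after marginalizing over $u$) on the sketch output.
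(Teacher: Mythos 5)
Your high-level strategy---use a planted spike against a Gaussian background, argue the two cases straddle the tester's decision boundary, then invoke the Li--Woodruff sketching lower bound---is the same as the paper's, but the specific hard instance you propose does not actually fall under the theorem you cite, and this is a real gap. Theorem~4 of \cite{li2016tight} is stated for the \emph{asymmetric} Gaussian model: distinguishing $G$ from $G + suv^T$ where $G$ has iid $\mathcal{N}(0,1)$ entries and $u,v$ are independent Gaussian vectors. Your instance is a GOE matrix plus a \emph{symmetric rank-one} perturbation $-\lambda uu^T$ with $u$ a single random direction. These are genuinely different ensembles (the parameter $u$ lives on a $(d-1)$-dimensional sphere rather than $u,v$ jointly on a $(2d)$-dimensional object), and it is not at all obvious the LNW total-variation calculation transfers. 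Indeed, if you try to produce a \emph{symmetric} hard instance from LNW's by the naive symmetrization $A\mapsto\frac12(A+A^T)$, you get a GOE plus a rank-\emph{two} symmetric spike $\frac{s}{2}(uv^T+vu^T)$, which is a third model again; there is no obvious bijection taking LNW's pair of distributions to yours. The phrase ``adapting the LNW argument itself \ldots should be routine'' is doing a lot of unearned work here.

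The paper sidesteps this entirely with a block-embedding trick. Starting from LNW's asymmetric pair $(G,\ \widetilde G = G+suv^T)$, it forms
\[
G_{\text{sym}} = \begin{pmatrix} 0 & G \\ G^T & 0 \end{pmatrix},
\qquad \widetilde G_{\text{sym}} = \begin{pmatrix} 0 & \widetilde G \\ \widetilde G^T & 0 \end{pmatrix},
\]
and then shifts by $c\sqrt{d}\,I$ to land $G_{\text{sym}}+c\sqrt d I$ in the PSD cone while $\widetilde G_{\text{sym}}+c\sqrt d I$ acquires a large negative eigenvalue (since the spectrum of $G_{\text{sym}}$ is $\{\pm\sigma_i(G)\}$). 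The crucial point is that any linear sketch of the $2d\times 2d$ symmetric matrix $G_{\text{sym}}$ reads off linear functionals of the single off-diagonal block $G$, so it \emph{is} a linear sketch of the asymmetric matrix $G$ itself---and now Theorem~4 of \cite{li2016tight} applies verbatim, with no need to re-derive a symmetric analogue. Your proof, as written, needs either (a) this embedding, or (b) a self-contained proof of a symmetric-ensemble version of LNW; without one of these, the reduction to a ``two-Gaussian total-variation calculation'' is unsubstantiated. The parameter bookkeeping (choosing the spike size, checking $\norm{A}{p}$ concentration, tuning $d$ per regime of $p$) is then the easy part and lines up with what the paper does.
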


\begin{proof}
First, take $G$ to be a $d\times d$ matrix with $\mathcal{N}(0,1)$ entries, where $d = 1/\epsilon$. Also let $\widetilde{G} = G + suv^T$ where $u$ and $v$ have $\mathcal{N}(0,1)$ entries, and $s$ is to be chosen later.  We will show that a PSD-tester can be used to distinguish $G$ and $\widetilde{G}$, while this is hard for any algorithm that uses only a linear sketch.

Recall that $G$ has spectral norm at most $c\sqrt{d}$ with probability at least $1-2e^{-d},$ where $c$ is an absolute constant. (We will use $c$ throughout to indicate absolute constants that we do not track -- it may have different values between uses, even within the same equation.) Set 
\begin{equation}
    G_{\text{sym}} = \begin{bmatrix}
                        0 & G \\
                        G^T & 0,
\end{bmatrix}
\end{equation}
and define $\widetilde{G}_{\text{sym}}$ similarly. Note that the eigenvalues of $G_{\text{sym}}$ are precisely $\{\pm \sigma_i\}$ where the $\sigma_i$ are singular values of $G.$  Therefore $G_{\text{sym}} + c\sqrt{d} I$ is PSD with high probability.

On the other hand, $\norm{uv^T}{} \geq cd$ with high probability so
\begin{equation}
\norm{\widetilde{G}}{} \geq s\norm{uv^T}{} - \norm{G}{} \geq csd - c\sqrt{d},
\end{equation}
which implies that $\widetilde{G}_{\text{sym}} + c\sqrt{d}I$ has a negative eigenvalue with magnitude at least $csd - c\sqrt{d} - c\sqrt{d} = csd - c\sqrt{d}.$  

We also have that $\norm{\widetilde{G}_{\text{sym}}}{p} \leq c\sqrt{d}d^{1/p},$ since the operator norm of $G$ is bounded by $c\sqrt{d}$ with high probability. Hence if
\begin{equation}
    \label{eq:cond_for_two_sided_lower_bound}
    c \frac{sd - \sqrt{d}}{\sqrt{d}d^{1/p}} = c\frac{s\sqrt{d} - 1}{d^{1/p}} \geq \eps,
\end{equation} then a two-sided PSD-tester can distinguish between $G$ and $\widetilde{G}$ with constant probability of failure.

On the other hand, Theorem 4 in \cite{li2016tight} implies that any sketch that distinguishes these distributions with constant probability, must have sketching dimension at least $c/s^4.$

It remains to choose values of $s$ and $d$ for which the inequality in equation~\eqref{eq:cond_for_two_sided_lower_bound} holds. When $1\leq p \leq 2$, we take $d = \Theta(\eps^{-p})$ and $s = O(\eps^{p/2})$ giving a lower bound of $\Omega(1/\eps^{2p}).$ When $2 < p < \infty$, we take $d=\Omega(1/\eps^p)$ and $s = O(\eps d^{1/p - 1/2})$ giving a lower bound of $\Omega(\frac{1}{\eps^4} d^{2 - 4/p}).$ Finally, when $p = \infty$ we take $s=O(1/\sqrt{d})$ giving a lower bound of $\Omega(d^2).$

\end{proof}

\begin{rmk}
The argument above applies equally well to arbitrary linear sketches, of which a series of non-adaptive vector-matrix-vector queries is a special case.
\end{rmk}

\begin{corollary}
\label{cor:adaptive_two_sided_lower_bound}
A two-sided adaptive \ptester{p} in the vector-matrix-vector model requires at least
\begin{itemize}
    \item $\Omega(\frac{1}{\eps^{p}})$ queries for $1\leq p \leq 2$
    \item $\Omega(\frac{1}{\eps^2} d^{1 - 2/p})$ queries for $2 < p < \infty$ as long as $d$ can be taken to be $\Omega(1/\eps^p).$
    \item $\Omega(d)$ queries for $p=\infty.$
\end{itemize}
\end{corollary}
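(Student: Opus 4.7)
The plan is to derive the adaptive lower bound as an immediate consequence of the non-adaptive bound from Theorem~\ref{thm:nonadaptive_two_sided_vmv_lower_bound} by invoking the adaptive-to-non-adaptive simulation in Lemma~\ref{lem:at_most_quadratic_improvement}, which loses only a square factor. Since the PSD-testing problem is invariant under conjugation $A \mapsto OAO^{T}$ for any orthogonal $O$, I would first argue that the hard distributions constructed in Theorem~\ref{thm:nonadaptive_two_sided_vmv_lower_bound} can be replaced by their orthogonal-conjugation averages without changing any tester's query complexity. Concretely, if $A$ is sampled from the spiked (or unspiked) Gaussian distribution used in the proof of Theorem~\ref{thm:nonadaptive_two_sided_vmv_lower_bound}, then $OAO^{T}$ for an independent Haar-distributed $O$ has the form $X \Sigma X^{T}$ with $X$ drawn from Haar measure on the orthogonal group and $\Sigma$ the diagonal of eigenvalues, so Lemma~\ref{lem:at_most_quadratic_improvement} applies.

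Given that, any $s$-query adaptive tester for such a distribution can be simulated by $O(s^{2})$ non-adaptive queries. Taking contrapositives of the three bounds in Theorem~\ref{thm:nonadaptive_two_sided_vmv_lower_bound} then yields exactly the three bounds claimed: $\Omega(1/\eps^{2p})$ non-adaptive becomes $\Omega(1/\eps^{p})$ adaptive for $1 \le p \le 2$; $\Omega(d^{2-4/p}/\eps^{4})$ non-adaptive becomes $\Omega(d^{1-2/p}/\eps^{2})$ adaptive for $2 < p < \infty$ (under the same assumption $d = \Omega(1/\eps^{p})$, which is inherited from the parameter choice in the non-adaptive proof); and $\Omega(d^{2})$ non-adaptive becomes $\Omega(d)$ adaptive at $p=\infty$.

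The only non-routine point is justifying the reduction to a rotationally invariant input distribution. The key observation is that for \emph{any} tester $\mathcal{T}$ and any orthogonal $O$, the tester $\mathcal{T}_O$ that replaces each query pair $(x,y)$ by $(O^{T} x, O^{T} y)$ on an input $OAO^{T}$ returns exactly what $\mathcal{T}$ would return on $A$, with the same query count; averaging over Haar-random $O$ does not change the worst-case success probability over the two classes (PSD vs.\ $\eps$-far) since both classes are closed under orthogonal conjugation. Therefore we may assume the adversary's distribution is rotationally invariant, at which point Lemma~\ref{lem:at_most_quadratic_improvement} legitimately converts any $s$ adaptive queries into $O(s^{2})$ non-adaptive queries and the three cases follow by direct substitution into Theorem~\ref{thm:nonadaptive_two_sided_vmv_lower_bound}. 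I do not anticipate a significant obstacle beyond writing the rotation-invariance reduction cleanly.
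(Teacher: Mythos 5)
Your approach is the same as the paper's one-line proof, which also combines Theorem~\ref{thm:nonadaptive_two_sided_vmv_lower_bound} with Lemma~\ref{lem:at_most_quadratic_improvement}. You are in fact more careful than the paper: the hard instances in Theorem~\ref{thm:nonadaptive_two_sided_vmv_lower_bound} are built from $G_{\text{sym}} + c\sqrt{d}\,I$ with fixed zero diagonal blocks, so they are \emph{not} of the form $X\Sigma X^T$ with $X$ Haar-distributed and independent of $\Sigma$, which is what Lemma~\ref{lem:at_most_quadratic_improvement} requires; the symmetrization reduction you spell out (replace the hard distribution by its Haar-conjugation average, note that PSD and $\eps$-far are orthogonally invariant, and observe that a $k$-query non-adaptive tester for the averaged distribution yields a $k$-query non-adaptive tester for the original by conjugating the query matrices by a fresh random $O$) is a genuine step that the paper's terse proof leaves implicit.
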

\begin{proof}
Apply Theorem~\ref{thm:nonadaptive_two_sided_vmv_lower_bound} along with Lemma~\ref{lem:at_most_quadratic_improvement}.
\end{proof}

For adaptive measurements, we supply a second proof via communication complexity which has the advantage of applying to general linear measurements, albeit at the cost of an additional bit complexity term.

\begin{prop}
Let $p \in [1,\infty)$, $\eps<\frac{1}{2}$ and $d \geq (p/\eps)^p$. An adaptive two-sided \ptester{p} taking general linear measurements $\inner{M_i}{A}$ of $A$, where each $M_i$ has integer entries in $(-2^{b-1}, 2^{b-1}]$, must make at least $\frac{c}{b + d\log\frac{1}{\eps}} \frac{1}{\eps^2} d^{1-2/p}$ queries.
\end{prop}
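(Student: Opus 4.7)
The plan is to prove this via a communication complexity reduction, using essentially the spiked Gaussian hard instance from the proof of Theorem~\ref{thm:nonadaptive_two_sided_vmv_lower_bound}, but now accounting for the bit budget of each adaptive query. The total amount of information a $k$-query protocol can extract about $A$ is roughly $k$ times the bit complexity of one measurement-response pair, and the goal is to show that this falls short of what is needed to separate PSD inputs from $\epsilon$-far inputs.

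First, I would set up a two-party problem. Let $d' = \Theta((p/\epsilon)^p)$ (which is $\leq d$ by hypothesis) and take $t = \lfloor d/d' \rfloor$ independent block-diagonal copies of the symmetrized-and-shifted Gaussian construction from Theorem~\ref{thm:nonadaptive_two_sided_vmv_lower_bound}, where in the ``hard'' case exactly one uniformly-random block contains the rank-one spike $s u v^T$ with $s$ chosen exactly as in that proof so the planted block is $\epsilon$-far from PSD in Schatten-$p$ norm (using $\|A\|_p \geq d^{1/p - 1/2}\|A\|_F$ together with the calculation in \eqref{eq:cond_for_two_sided_lower_bound}). A two-sided $\ell_p$-tester on the full $d \times d$ block-diagonal matrix must detect the planted spike, i.e.\ solve a direct sum of $t$ spiked-Gaussian distinguishing problems. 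Split the blocks between Alice and Bob so that each player owns $t/2$ blocks of $A$, and the planted block is with Alice with probability $1/2$.

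Next I would simulate the adaptive linear measurement protocol by communication. Using public randomness, both players independently compute the next measurement $M_i$ from the shared transcript of previous responses. In round $i$, each player sends her partial value $\langle M_i, A^{(\text{player})}\rangle$. Since $M_i$ has integer entries of bit size at most $b$, and since (after standard truncation of Gaussian tails) one may assume $A$'s entries lie on a grid of step $\Theta(\epsilon)$ and magnitude $O(d)$, a single response is an integer of bit length $O(b + d\log(1/\epsilon))$, which is enough to communicate it exactly. Thus a $k$-query protocol becomes a communication protocol of cost $O(k\,(b + d\log(1/\epsilon)))$. To close the argument, I would prove a communication lower bound of $\Omega(t/s^4) = \Omega(\tfrac{1}{\epsilon^2} d^{1-2/p})$ for this direct-sum distinguishing problem, by lifting the Li--Nguyen--Woodruff bound of \cite{li2016tight} (which already gives an $\Omega(1/s^4)$ sketching/information lower bound per block) via a standard direct-sum / information-cost argument: conditioning on the index of the planted block, one shows that the per-block information revealed by the transcript must be $\Omega(1/s^4)$, and by the chain rule this aggregates to $\Omega(t/s^4)$ overall.

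The main obstacle I anticipate is Step~3, the direct-sum lifting: \cite{li2016tight} is stated for linear sketches, so one needs a genuinely information-theoretic version showing that distinguishing the spiked from unspiked single-block distributions requires $\Omega(1/s^4)$ bits of mutual information with the input, not merely $\Omega(1/s^4)$ sketch dimension. The cleanest route is probably to bound $\mathrm{KL}(\mu_{\text{spike}} \,\|\, \mu_{\text{plain}})$ for the marginals on the transcript of a single block and invoke Pinsker, which tracks the $s^4$ dependence that the $\chi^2$-computation in \cite{li2016tight} already exploits. A secondary subtlety is ensuring the truncation used to bound the response bit length (the $d\log(1/\epsilon)$ term) does not change the problem: this can be handled by conditioning on the $1-o(1)$-probability event that all Gaussian entries have magnitude $O(\sqrt{\log d})$ and rounding at scale $\epsilon$, absorbing the negligible error into the tester's failure probability.
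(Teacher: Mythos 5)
Your proposal takes a genuinely different route from the paper's, and the gap you flag in Step~3 is a real one, not a technicality. You would need a lower bound of the form ``any communication protocol (equivalently, any bounded-mutual-information observable) that distinguishes the spiked from the unspiked Gaussian must reveal $\Omega(1/s^4)$ bits of information about the block.'' The result of \cite{li2016tight} is a \emph{sketching-dimension} lower bound: it bounds the number of real coordinates of a linear map $L(A)$ needed to distinguish the two distributions, and the $\chi^2$-calculation there lives in the space of sketches, not of transcripts. These are not interchangeable. In particular, the transition you sketch (bounding KL between the transcript marginals and invoking Pinsker) presupposes exactly the statement you're trying to prove; the chain rule gives you an additive decomposition over blocks only after you already know the per-block transcript information is $\Omega(1/s^4)$. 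Moreover, the planted-index structure you set up (one spiked block out of $t$, owner uniformly random) is not a plain direct sum of $t$ independent distinguishing problems; it is more like a sparse-indexing problem, and the ``information-cost chain rule aggregates to $\Omega(t/s^4)$'' step would require a separate argument (a direct-sum theorem for the conditional information cost of the per-block subproblem under the \emph{null} per-block distribution), none of which is supplied.

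The paper avoids all of this by choosing a hard instance tailor-made for communication complexity: it reduces from $k$-player set disjointness with $k = \lceil 4\eps d^{1/p}\rceil$, encoding player $i$'s set as $A_i = \diag(\chi_i)$ and feeding the tester $A = I_d - \sum_i A_i$. The $\Omega(d/k)$ communication lower bound for multiplayer set disjointness \cite{kamath2021simple} is a known, off-the-shelf information lower bound, so there is no need to manufacture an information-theoretic analogue of \cite{li2016tight}. Each linear measurement $\inner{M}{A}$ is exactly simulated by each player announcing $\inner{M}{A_i}$ on the blackboard, which is an integer of controlled bit length because $A$ itself is an integer matrix; this makes the bit-budget accounting trivial, whereas your Gaussian construction forces you to truncate and round, and to argue the rounding doesn't change the answer. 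Finally, the set-disjointness construction gives the Schatten-$p$ gap by an elementary eigenvalue count ($\lambda_{\min} = -(k-1)$, $\|A\|_p^p \leq k^p + d$), with no appeal to spiked random-matrix concentration. If you want to rescue your approach, the piece you must actually prove is the information-cost lower bound for the single-block spiked-Gaussian distinguishing problem; as it stands that step is an open placeholder, so the argument does not go through.
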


\begin{proof}


We reduce from the multiplayer set disjointness problem \cite{kamath2021simple}.  Let the $k$ players have sets $S_1, \ldots, S_k \subseteq [d]$ which either are (i) all pairwise disjoint or (ii) all  share precisely one common element.  Distinguishing between (i) and (ii) with $2/3$ probability in the blackboard model of communication requires $\Omega(d/k)$ bits.  We will choose $k = \left\lceil \max(4\eps d^{1/p}, 4p)\right\rceil = \lceil 4\eps d^{1/p} \rceil.$

For each $i$ let $\chi_i$ be the characteristic vector of $S_i$ and let $A_i = \diag(\chi_i).$  Consider the matrix $A := I_d - \sum_{i=1}^d A_i$.

In situation (i), $A$ is PSD. In situation (ii), $\norm{A}{p}^p \leq k^p + d $ and $\lambda_{\min}(A) = -(k-1).$ We have
\[\abs{\lambda_{\min}(A)}^p = (k-1)^p = k^p\parens{1-\frac1k}^p \geq k^p \parens{1-\frac{p}{k}}\geq \frac{3}{4}k^p\]
and
\[\eps^p\norm{A}{p}^p \leq \eps^p (k^p + d) = \eps^p k^p + \eps^p d \leq \frac12 k^p + \eps^p d \leq \frac{3}{4}k^p. \]

Given query access to $A$, an \ptester{p} can therefore distinguish between (i) and (ii) with $2/3$ probability.  Note that a single linear measurement $\inner{M}{A}$ may be simulated in the blackboard model using $O(k(\log b + \log d))$ bits; each player simply computes and communicates $\inner{M}{A_i}$, and the players add the resulting measurements. The players therefore need at least $\Omega(\frac{1}{k(\log b + \log d)}\cdot \frac{d}{k})$ bits of communication to solve the PSD-testing problem.


\end{proof}

\section{Spectrum Estimation}






We make use of the following result, which is Lemma 11 of \cite{clarkson2017lowPSD} specialized to our setting.

\begin{lem}
\label{lem:clarkson_woodruff_psd_result}
For a symmetric matrix $A\in \R^{d\times d}$, there is a distribution over an oblivious sketching matrix $R \in \R^{d\times m}$ with $m = O(\frac{k}{\eps})$ so that with at least $0.9$ proability,
\begin{equation}
    \min_{Y^* \in\, \textup{rank}\, k, \textup{PSD}}\norm{(AR)Y^*(AR)^T - A}{F}^2 \leq (1 + \eps)\norm{A_{k,+} - A}{F}^2,
\end{equation}
where $A_{k,+}$ is the optimal rank-one PSD approximation to $A$ in Frobenius norm.
\end{lem}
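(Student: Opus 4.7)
The plan is a sketch-and-solve argument. Take $R \in \R^{d \times m}$ to be a Gaussian oblivious sketch with $m = O(k/\eps)$ columns, and exhibit an explicit PSD rank-$k$ candidate $Y^*$ for which $\|(AR)\,Y^*\,(AR)^T - A\|_F^2 \leq (1+\eps)\|A - A_{k,+}\|_F^2$; since the statement involves a minimum, this will suffice.

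Write $A_{k,+} = BB^T$ with $B = U\Lambda^{1/2} \in \R^{d \times k}$, where $U$ holds the top-$k$ positive eigenvectors of $A$ and $\Lambda \succ 0$ the corresponding positive eigenvalues. Set $W := (AR)^{\pinv} B$ and $Y^* := W W^T$. Then $Y^*$ is PSD of rank at most $k$ by construction, and
\[
(AR)\, Y^* \,(AR)^T \;=\; (PB)(PB)^T \;=\; P\, A_{k,+}\, P,
\]
where $P = (AR)(AR)^{\pinv}$ is orthogonal projection onto $\mathrm{col}(AR)$. The problem thus reduces to bounding $\|PA_{k,+}P - A\|_F^2$.

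Let $E := A - A_{k,+}$. The symmetric spectral structure of $A$ gives the orthogonality relations $EU = 0$ and $U^TE = 0$; hence $EB = 0 = B^TE$ and $\langle A_{k,+}, E\rangle = 0$. Expanding and using this,
\[
\|PA_{k,+}P - A\|_F^2 \;=\; \|E\|_F^2 \;+\; \|PA_{k,+}P - A_{k,+}\|_F^2 \;-\; 2\,\langle PA_{k,+}P - A_{k,+},\, E\rangle,
\]
and by Cauchy--Schwarz the final cross term is controlled by $\|PA_{k,+}P - A_{k,+}\|_F \cdot \|E\|_F$. Hence it suffices to prove the projection-cost bound $\|PA_{k,+}P - A_{k,+}\|_F^2 \leq O(\eps^2)\|E\|_F^2$.

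The main obstacle is establishing this projection-cost bound, which is the technical heart of the lemma. The key geometric input is that $EU = 0$ makes the decomposition $AR = U\Lambda(U^TR) + ER$ column-orthogonal, with the ``signal'' part $U\Lambda(U^TR)$ living in $\mathrm{col}(U)$ and the ``noise'' part $ER$ living in $\mathrm{col}(U)^{\perp}$. Standard Sarlos-type guarantees for a Gaussian $R$ with $m = O(k/\eps)$ columns then supply two facts with constant probability: (i) $U^TR$ is well conditioned, so $\mathrm{col}(U\Lambda(U^TR)) = \mathrm{col}(U)$; and (ii) the approximate matrix multiplication bound $\|ER(U^TR)^{\pinv}\|_F^2 \leq O(\eps)\|E\|_F^2$ holds. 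Combining these facts---via the explicit formula for $P$ in the column-orthogonal decomposition, together with a weighted-least-squares-style analysis---gives the desired projection-cost bound \emph{without} introducing any dependence on $\lambda_{\max}(A_{k,+})$, which is the subtle step (a naive triangle-inequality bound would lose a multiplicative $\lambda_{\max}(A_{k,+})$ factor and fail to yield a $(1+\eps)$-approximation). Assembling the estimates and rescaling $\eps$ by a universal constant yields the claim.
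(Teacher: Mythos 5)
The paper does not actually give an independent proof of this lemma: it is cited verbatim as Lemma~11 of \cite{clarkson2017lowPSD}, specialized to the present setting. So the question is whether your reconstruction is a valid standalone argument, and I believe there is a genuine gap that makes it fall short of the claimed $(1+\eps)$ bound with $m = O(k/\eps)$ columns.

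Your setup is fine: with $W = (AR)^{\pinv}B$ and $Y^* = WW^T$, one indeed has $(AR)Y^*(AR)^T = PA_{k,+}P$ for $P = (AR)(AR)^{\pinv}$, and the orthogonality relations $EU = U^TE = 0$, $\langle A_{k,+}, E\rangle = 0$ hold. The problem lies in the final reduction. After expanding
\[
\norm{PA_{k,+}P - A}{F}^2 = \norm{E}{F}^2 + \norm{PA_{k,+}P - A_{k,+}}{F}^2 - 2\langle PA_{k,+}P - A_{k,+}, E\rangle,
\]
you invoke Cauchy--Schwarz on the cross term and conclude that it suffices to show $\norm{PA_{k,+}P - A_{k,+}}{F}^2 \leq O(\eps^2)\norm{E}{F}^2$. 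But with a Gaussian $R$ having $m = O(k/\eps)$ columns, the approximate-matrix-multiplication/subspace-embedding toolbox that you cite only delivers $\norm{ER(U^TR)^{\pinv}}{F}^2 \leq O(\eps)\norm{E}{F}^2$ --- one factor of $\eps$, not two. Since $\norm{PA_{k,+}P - A_{k,+}}{F}^2 = \norm{P^{\perp}A_{k,+}}{F}^2 + \norm{PA_{k,+}P^{\perp}}{F}^2 \leq 2\norm{ER(U^TR)^{\pinv}}{F}^2$, you get $O(\eps)\norm{E}{F}^2$, not $O(\eps^2)\norm{E}{F}^2$. Plugging this into your Cauchy--Schwarz bound yields only a $(1+O(\sqrt{\eps}))$ approximation, which is weaker than the lemma's $(1+\eps)$ guarantee; obtaining the $\eps^2$ bound you want would force $m = \Omega(k/\eps^2)$.

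Moreover, avoiding Cauchy--Schwarz and attacking the cross term directly does not obviously repair this. One computes $\langle PA_{k,+}P - A_{k,+}, E\rangle = \Tr\bigl((P^{\perp}B)^TE(P^{\perp}B)\bigr)$, and writing $A_{k,+} = U\Lambda U^T$, $B = U\Lambda^{1/2}$ one finds $P^{\perp}B = -P^{\perp}ER(U^TR)^{\pinv}\Lambda^{-1/2}$. So the cross term inherits a $\Lambda^{-1}$ weighting (effectively a $1/\lambda_k^+$ factor) that is not controlled by $\norm{E}{F}$ alone, and crucially the sign of this term can be adverse precisely when $E$ has large-magnitude negative eigenvalues --- the regime of interest for PSD testing. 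You describe this as ``the subtle step'' to be handled ``via a weighted-least-squares-style analysis,'' but this is the technical heart of the lemma and it is never carried out; as written, the argument does not close. The missing ingredient is the structural decomposition from \cite{clarkson2017lowPSD} that exploits approximate matrix multiplication at the $O(\sqrt{\eps})$ level simultaneously on the two sides in a way that lets the cross term be absorbed exactly rather than bounded by Cauchy--Schwarz, and that argument is different from (and more delicate than) the projection-cost route you sketch.
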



\begin{rmk}
In our setting one can simply take $R$ to be Gaussian since the guarantee above must hold when $A$ is drawn from a rotationally invariant distribution.  In many situations, structured or sparse matrices are useful, but we do not need this here.
\end{rmk}

We also recall the notion of an affine embedding \cite{clarkson2017lowReg}.  
\begin{definition}
$S$ is an affine embedding for matrices $A$ and $B$ if for all matrices $X$ of the appropriate dimensions, we have 
\begin{equation}\label{eq:theguarantee}
    \norm{S(AX - B)}{F}^2 = (1\pm \eps)\norm{AX-B}{F}^2.
\end{equation}
\end{definition}
We also recall that when $A$ is promised to have rank at most $r$, there is a distribution over $S$ with $O(\eps^{-2}r)$ rows such that \eqref{eq:theguarantee} holds with constant probability for any choice of $A$ and $B$ \cite{clarkson2017lowReg}.

\begin{lem}
\label{lem:spectral_est_lem}
There is an algorithm which makes $O(\frac{k^2}{\eps^6}\log\frac{1}{\delta})$ vector-matrix-vector queries to $A$ and with at least $1-\delta$ probability outputs an approximation of $\norm{A}{k,+}$, accurate to within $\eps \norm{A}{F}^2$ additive error.
\end{lem}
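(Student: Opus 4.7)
The plan is to reduce estimation of $\norm{A_{k,+}}{F}^2$ to approximately solving a small rank-$k$ PSD regression whose matrices can all be assembled from a modest number of vector-matrix-vector queries, and then to subtract the resulting cost from a separately-computed estimate of $\norm{A}{F}^2$.

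The key structural observation is that, for a symmetric $A$ with eigendecomposition $A = U\Lambda U^T$, the best rank-$k$ PSD approximation $A_{k,+}$ keeps the top $k$ positive eigenvalues while $A - A_{k,+}$ is supported on the complementary eigendirections. Hence $\inner{A_{k,+}}{A - A_{k,+}} = 0$ and $\norm{A}{F}^2 = \norm{A_{k,+}}{F}^2 + \norm{A - A_{k,+}}{F}^2$. Consequently, if we can compute a $(1+\eps)$-relative approximation $C$ of the residual cost $\norm{A - A_{k,+}}{F}^2$ together with an approximation $F$ of $\norm{A}{F}^2$ to within a multiplicative $(1\pm\eps)$ factor, then $F - C$ is an estimate of $\norm{A_{k,+}}{F}^2$ with additive error $O(\eps)\norm{A}{F}^2$, which after rescaling $\eps$ by a constant yields the claimed guarantee.

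To produce $C$, first invoke Lemma~\ref{lem:clarkson_woodruff_psd_result} with $m = O(k/\eps)$ to obtain a Gaussian $R \in \R^{d\times m}$ such that $\min_{Y\text{ rank-}k\text{ PSD}}\norm{(AR)Y(AR)^T - A}{F}^2$ sandwiches $\norm{A - A_{k,+}}{F}^2$ and its $(1+\eps)$-multiple. Then compose with two independent affine embeddings $S$ and $T$, each of sketching dimension $O(m/\eps^2) = O(k/\eps^3)$, to obtain the surrogate problem $\min_Y \norm{S(AR)Y(AR)^T T^T - SAT^T}{F}^2$; since $AR$ has rank at most $m$, two successive applications of the affine embedding guarantee (one in $S$ conditional on the right-hand factors, then one in $T$) preserve the optimum to relative error $1+O(\eps)$. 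The three small matrices that we actually need are $S(AR)$, $(AR)^T T^T$ and $SAT^T$, of dimensions $O(k/\eps^3) \times O(k/\eps)$, $O(k/\eps) \times O(k/\eps^3)$, and $O(k/\eps^3) \times O(k/\eps^3)$ respectively; each entry is one vmv query of the form $s_i^T A R_j$, $R_j^T A t_i$ or $s_i^T A t_j$, and the dominating cost is $SAT^T$ at $O(k^2/\eps^6)$ queries. Once these are in hand, the minimization over rank-$k$ PSD $Y$ is solved offline with no additional access to $A$, and $\norm{A}{F}^2$ is estimated to $(1\pm\eps)$ accuracy from the already-computed sketches (or, for an extra $O(1/\eps^2)$ queries, via a Hutchinson-style estimator). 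Boosting the constant success probability to $1-\delta$ by running $O(\log(1/\delta))$ independent trials and taking the median gives the stated $O((k^2/\eps^6)\log(1/\delta))$ query complexity.

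The main obstacle I expect is verifying that the two-sided affine embedding preserves the optimum when the feasible set is restricted to rank-$k$ PSD matrices: the standard affine embedding bound gives a uniform $(1\pm\eps)$ distortion of $\norm{(AR)Y(AR)^T - A}{F}^2$ pointwise in $Y$, and since the feasible set does not depend on $S, T$ the min-of-distorted and distorted-of-min are within $(1+O(\eps))$ of each other, so composing the two distortions preserves the overall cost to $(1+O(\eps))$ as needed.
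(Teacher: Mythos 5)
Your proposal matches the paper's proof essentially step for step: apply Lemma~\ref{lem:clarkson_woodruff_psd_result} to get the sketch $AR$, reduce the PSD rank-$k$ regression via two affine embeddings of dimension $O(k/\eps^3)$ so that the three sketched matrices cost $O(k^2/\eps^4)+O(k^2/\eps^4)+O(k^2/\eps^6)$ queries, solve the small optimization offline, separately estimate $\norm{A}{F}^2$ to $(1\pm\eps)$ relative error, subtract using the Pythagorean identity $\norm{A_{k,+}}{F}^2=\norm{A}{F}^2-\norm{A-A_{k,+}}{F}^2$, and take a median of $O(\log(1/\delta))$ independent trials. The only point worth tightening is the chaining of the two affine embeddings: as in the paper, one should apply the guarantee first in $S$ (with $AR$, of rank $O(k/\eps)$, playing the role of the fixed left factor), then transpose and apply it in $T$, which is exactly what you sketch informally.
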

\begin{proof}

We run two subroutines in parallel.
\\
\\
\textbf{Subroutine 1. Approximate $\norm{A_{k,+} - A}{F}^2$ up to $O(\eps)$ multiplicative error.}

Our algorithm first draws affine embedding matrices $S_1$ and $S_2$ for $r=k/\eps$, and with $\eps$ distortion, each with $O(\frac{k}{\eps^3})$ rows.  We also draw a matrix $R$ as in Lemma~\ref{lem:clarkson_woodruff_psd_result} with $m = O(\frac{k}{\eps})$ columns.

We then compute $S_1 A R$ and $S_2 A R$, each requiring $\frac{k^2}{\eps^4}$ vector-matrix-vector queries, and compute $S_1 A S_2^T$ requiring $\frac{k^2}{\eps^6}$ queries. 

Let $Y_k$ be arbitrary with the appropriate dimensions (later we will optimize $Y_k$ over rank $k$ PSD matrices). By using the affine embedding property along with the fact that $R$ has rank at most $\frac{k}{\eps}$, we have
\begin{align*}
    \norm{(S_1 A R) Y_k (S_2 A R)^T + S_1 A S_2^T}{F}^2 &= (1\pm \eps)\norm{ARY_k (S_2 A R)^T + A S_2^T}{F}^2\\
    &= (1\pm \eps)\norm{S_2 A R Y_k R^T A + S_2 A}{F}^2\\
    &= (1\pm 3\eps)\norm{ARY_k R^T A + A}{F}^2.
\end{align*}
As a consequence of this, and the property held by $R$, we have
\begin{align}
    \min_{\rank(Y_k) \leq k, Y_k\,\text{PSD}}\norm{(S_1 A R) Y_k (S_2 A R)^T + S_1 A S_2^T}{F}^2 &=
    (1\pm 3\eps)\min_{Y_k}\norm{AR Y_k R^T A + A}{F}^2\\
    &= (1 \pm 7\eps) \norm{A_{k,+} - A}{F}^2.
\end{align}
Thus by computing the quantity in the left-hand-side above, our algorithm computes an $O(\eps)$ multiplicative approximation using $O(k^2/\eps^6)$ vector-matrix-vector queries.
\\
\\
\textbf{Subroutine 2. Approximate $\norm{A}{F}^2$ up to $O(\eps)$ multiplicative error.}

We simply apply Theorem 2.2. of \cite{meister2019tight}, set $q=2,$ and note that the entries of the sketch correspond to vector-matrix-vector products.  By their bound we require $O(\eps^{-2}\log(1/\eps))$ vector-matrix-vector queries.
\\
\\
Since $\norm{A_{k,+}}{F}^2 = \norm{A}{F}^2 - \norm{A_{k,+} - A}{F}^2$, we obtain an additive $O(\eps)\norm{A}{F}^2$ approximation to $\norm{A_{k,+}}{F}^2$ by running the two subroutines above and subtracting their results.

Finally, by repeating the above procedure $O(\log\frac{1}{\delta})$ times in parallel and taking the median of the trails, we obtain a failure probability of at most $\delta.$




\end{proof}

The matrices $S_1 A R$ and $S_2 A R$ in Subroutine 1 each have rank $k/\eps$ whereas the dimensions of $S_1 A S_2^T$ are $k/\eps^3.$  The matrix $S_1 A S_2^T$ therefore contains a large amount of data that will not play a role when optimizing over $Y_k$.  If $S_1 A R$ and $S_2 A R$ were known ahead of time, then we could choose to compute only the portion of $S_1 A S_2^T$ that is relevant to the optimization step, and simply estimate the Frobenius error incurred by the rest.  This allows us to construct a slightly more efficient two-pass protocol.

\begin{prop}
\label{prop:adaptive_improvement_spectral_estimation}
By using a single round adaptivity, the guarantee of Lemma~\ref{lem:spectral_est_lem} may be achieved using $O(\frac{k^2}{\eps^4} \log\frac{1}{\delta})$ vector-matrix-vector queries. 
\end{prop}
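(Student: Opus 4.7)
The plan is to execute the algorithm of Lemma~\ref{lem:spectral_est_lem} in two adaptive rounds, using the column-space structure of $S_1AR$ and $S_2AR$ to avoid ever querying the full $O(k/\eps^3)\times O(k/\eps^3)$ matrix $S_1AS_2^{\top}$. In round one, which is non-adaptive, sample $R, S_1, S_2$ exactly as before and compute $S_1 A R$ and $S_2 A R$, each of dimension $O(k/\eps^3)\times O(k/\eps)$, using $O(k^2/\eps^4)$ vector-matrix-vector queries apiece. In parallel, run the $O(\eps^{-2}\log(1/\delta))$-query Frobenius-norm estimator of \cite{meister2019tight} both on $A$ (to approximate $\norm{A}{F}^2$) and on $S_1 A S_2^{\top}$ (to approximate $\norm{S_1 A S_2^{\top}}{F}^2$); each query to the latter is implemented as a single vector-matrix-vector query to $A$ via $u^{\top}(S_1 A S_2^{\top})v = (S_1^{\top}u)^{\top}A(S_2^{\top}v)$.

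In round two, extract orthonormal bases $U_1, U_2\in\R^{O(k/\eps^3)\times O(k/\eps)}$ for the column spaces of $S_1AR$ and $S_2AR$, and then adaptively compute the $O(k/\eps)\times O(k/\eps)$ matrix $M := U_1^{\top}(S_1 A S_2^{\top})U_2$ entrywise, using a further $O(k^2/\eps^2)$ vector-matrix-vector queries. Setting $P_i=U_iU_i^{\top}$ and $B_{11}:=P_1(S_1 A S_2^{\top})P_2 = U_1 M U_2^{\top}$, observe that for every rank-$k$ PSD matrix $Y_k$ the summand $(S_1AR)Y_k(S_2AR)^{\top}$ lies in $\mathrm{range}(P_1)\otimes\mathrm{range}(P_2)$, so $B_{11}$ is the Frobenius-best approximation of $S_1 A S_2^{\top}$ in that subspace. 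By Frobenius orthogonality,
\[
\norm{(S_1 A R) Y_k (S_2 A R)^{\top} + S_1 A S_2^{\top}}{F}^2 = \norm{(S_1 A R) Y_k (S_2 A R)^{\top} + B_{11}}{F}^2 + \norm{S_1 A S_2^{\top}}{F}^2 - \norm{B_{11}}{F}^2.
\]
Minimizing over $Y_k$ therefore splits into a small convex problem involving only $S_1AR$, $S_2AR$, and $M$ (solvable with no further queries to $A$), plus the additive constant $\norm{S_1 A S_2^{\top}}{F}^2-\norm{B_{11}}{F}^2$, for which $\norm{B_{11}}{F}^2=\norm{M}{F}^2$ is known exactly while $\norm{S_1 A S_2^{\top}}{F}^2$ was already approximated in round one.

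Combining these pieces yields a value that approximates $\min_{Y_k}\norm{(S_1AR)Y_k(S_2AR)^{\top}+S_1AS_2^{\top}}{F}^2$, which by Lemma~\ref{lem:clarkson_woodruff_psd_result} and the affine-embedding guarantee equals $(1\pm O(\eps))\norm{A_{k,+}-A}{F}^2$; subtracting from the round-one estimate of $\norm{A}{F}^2$ gives the required additive $O(\eps)\norm{A}{F}^2$ approximation to $\norm{A_{k,+}}{F}^2$, matching Lemma~\ref{lem:spectral_est_lem}. The total query count is $O(k^2/\eps^4)$ from round one, $O(k^2/\eps^2)$ from round two, and $O(\eps^{-2})$ from the two Frobenius-norm estimators, i.e.\ $O(k^2/\eps^4)$ overall; median-of-means boosting across $O(\log(1/\delta))$ independent trials yields the claimed $O((k^2/\eps^4)\log(1/\delta))$ bound.

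The main technical obstacle is verifying that the multiplicative $\eps$-approximation to $\norm{S_1 A S_2^{\top}}{F}^2$ from \cite{meister2019tight} translates to an additive $O(\eps)\norm{A}{F}^2$ error; this should follow from $\E\norm{S_1 A S_2^{\top}}{F}^2=\norm{A}{F}^2$ under the standard Gaussian normalization of affine embeddings, together with a Hanson--Wright-type concentration bound yielding $\norm{S_1 A S_2^{\top}}{F}^2=O(\norm{A}{F}^2)$ with high probability (any constant number of rows in $S_1,S_2$ suffices, and $S_1,S_2$ have many more).
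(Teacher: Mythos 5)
Your proposal is correct and uses essentially the same mechanism as the paper: first compute $M_1 = S_1 A R$ and $M_2 = S_2 A R$ non-adaptively, then use the column-space projections $\Pi_i$ to decompose the objective so that only the $O(k/\eps)\times O(k/\eps)$ projected block $\Pi_1 Q \Pi_2$ (your $B_{11}$) needs to be queried exactly in the second round, with the residual estimated via a cheap Frobenius-norm estimator. The only small difference is cosmetic: the paper expands the residual into the three orthogonal pieces $\norm{\Pi_1^\perp Q \Pi_2}{F}^2 + \norm{\Pi_1 Q \Pi_2^\perp}{F}^2 + \norm{\Pi_1^\perp Q \Pi_2^\perp}{F}^2$ and estimates each one separately (each requires vector-matrix-vector queries that involve $U_1,U_2$ and hence live in round two), while you estimate $\norm{Q}{F}^2$ in round one and subtract the exactly-known $\norm{B_{11}}{F}^2 = \norm{M}{F}^2$; these are algebraically equal by Pythagoras. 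Your final point of concern — that multiplicative error on $\norm{Q}{F}^2$ yields additive $O(\eps)\norm{A}{F}^2$ error — is correct and follows exactly as you suggest from $\norm{Q}{F}^2 = O(\norm{A}{F}^2)$ with high probability; note that this yields an additive (not multiplicative) approximation to $\min_{Y_k}\norm{M_1 Y_k M_2^T + Q}{F}^2$, but that still suffices for the additive-$\eps\norm{A}{F}^2$ guarantee on $\norm{A_{k,+}}{F}^2$ after subtracting from the $\norm{A}{F}^2$ estimate.
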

\begin{proof}

As described above, we modify Subroutine 1.  Write $M_i$ for $S_i A R$ and $Q$ for $S_1 A S_2^T.$ Instead of computing $M_1$, $M_2,$ and $Q$ at once, we instead compute $M_1$ and $M_2$ first using $k^2/\eps^4$ vector-matrix-vector queries.

We wish to estimate $\min_{Y_k} \norm{M_1 Y_k M_2^T - Q}{F}^2$, where the minimum is over PSD matrices $Y_k$ of rank at most $k$.  Let $\Pi_i$ denote orthogonal projection onto the image of $M_i$, and set $\Pi_i^{\perp} = I - \Pi_i.$  Then for fixed $Y$, we use the Pythagorean theorem to write
\begin{align}
    \norm{M_1 Y M_2 - Q}{F}^2 &= \norm{\Pi_1M_1 Y M_2\Pi_2 - Q}{F}^2\\
    &= \norm{\Pi_1(M_1 Y M_2^T - Q)\Pi_2 + \Pi_1^{\perp}Q\Pi_2 + \Pi_1 Q \Pi_2^{\perp} + \Pi_1^{\perp}Q\Pi_2^{\perp}}{F}^2\\
    &= \norm{\Pi_1(M_1 Y M_2^T - Q)\Pi_2}{F}^2 + \norm{\Pi_1^{\perp}Q\Pi_2}{F}^2 + 
    \norm{\Pi_1 Q \Pi_2^{\perp}}{F}^2 + \norm{\Pi_1^{\perp}Q\Pi_2^{\perp}}{F}^2\\
    &= \norm{M_1 Y M_2^T - \Pi_1Q\Pi_2}{F}^2 + \norm{\Pi_1^{\perp}Q\Pi_2}{F}^2 + 
    \norm{\Pi_1 Q \Pi_2^{\perp}}{F}^2 + \norm{\Pi_1^{\perp}Q\Pi_2^{\perp}}{F}^2.
\end{align}
Note that each of the last three terms can be estimated to within $O(\eps)$ multiplicative error using Subroutine 2, since a vector-matrix-vector query to one of these matrices may be simulated with a single query to $A$.  Also since each $M_i$ has rank $O(k/\eps)$, the $\Pi_i$'s are projections onto $O(k/\eps)$ dimensional subspaces. Since the $\Pi_i$'s are known to the algorithm, we may compute $\Pi_1 Q \Pi_2$ explicitly using $k^2/\eps^2$ vector-matrix-vector queries, as it suffices to query $\Pi_1 Q \Pi_2$ over the Cartesian product of bases for the images of $\Pi_1$ and $\Pi_2.$   By optimizing the first term over $Y$, we thus obtain an $O(\eps)$ multiplicative approximation to $\min_{Y_k} \norm{M_1 Y_k M_2^T - Q}{F}^2$ as desired. This gives a version of Subroutine 1 that makes $O(k^2/\eps^4)$ queries.
\end{proof}

We note that we immediately obtain a $\poly(1/\eps)$ query \ptester{2} by applying Lemma~\ref{lem:spectral_est_lem} to approximate $A_{1,-}$.  However this yields a worse $\eps$ dependence than Theorem~\ref{thm:bilinear_sketch_guarantee}. Perhaps more interestingly, these techniques also give a way to approximate the top $k$ (in magnitude) eigenvalues of $A$ while preserving their signs. We note a minor caveat. If $\lambda_k$ and $\lambda_{k+1}$ are very close in magnitude, but have opposite signs, then we cannot guarantee that we approximate $\lambda_k$.  Therefore in the statement below, we only promise to approximate eigenvalues with magnitude at least $|\lambda_k| + 2\eps$.

\td{Check the factors of 2.}
\begin{thm}
\label{thm:spectral_approximation}
Let $\lambda_1, \lambda_2,\ldots$ be the (signed) eigenvalues of $A$ sorted in decreasing order of magnitude. 

There is an algorithm that makes $O(\frac{k^2}{\eps^{12}}\log k)$ non-adaptive vector-matrix-vector queries to $A$, and with probability at least $0.9$, outputs $\tilde{\lambda}_1, \ldots, \tilde{\lambda}_k$ such that

(i)  There exists a permutation $\sigma$ on $[k]$ so that for all $i$ with $|\lambda_i| \geq |\lambda_k| + 2\eps$, $|\tilde{\lambda}_{\sigma(i)} - \lambda_i| \leq \eps \norm{A}{F}$

(ii) For all $i$, there exists $j$ with $|\lambda_j|\geq |\lambda_k| - \eps $ and $|\tilde{\lambda}_{i} - \lambda_j| \leq \eps \norm{A}{F}$

\noindent With one additional round of adaptivity the number of measurements can be reduced to $O(\frac{k^2}{\eps^{8}}\log k).$

\end{thm}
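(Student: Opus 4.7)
The plan is to reduce spectrum estimation to repeated applications of Lemma~\ref{lem:spectral_est_lem}. Since $A_{j,+}$ is the best rank-$j$ PSD approximation, we have $\norm{A_{j,+}}{F}^2 - \norm{A_{j-1,+}}{F}^2 = (\lambda_j^+)^2$, where $\lambda_1^+ \geq \lambda_2^+ \geq \ldots$ are the positive eigenvalues of $A$ in decreasing order. Apply Lemma~\ref{lem:spectral_est_lem} to both $A$ and $-A$ with accuracy parameter $\eps_0 = \Theta(\eps^2)$, obtaining estimates $P_j$ of $\norm{A_{j,+}}{F}^2$ and $N_j$ of $\norm{A_{j,-}}{F}^2$ for each $j = 1, \ldots, k$, each accurate to additive error $O(\eps^2)\norm{A}{F}^2$. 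Then set $\tilde{\sigma}_j^+ = \sqrt{\max(0, P_j - P_{j-1})}$, which approximates $\lambda_j^+$ to additive error $\eps\norm{A}{F}$ via the inequality $|\sqrt{a} - \sqrt{b}| \leq \sqrt{|a-b|}$ (together with clipping at zero when an eigenvalue is small compared to the error). Define $\tilde{\sigma}_j^-$ analogously for $|\lambda_j^-|$, and output the top $k$ of the $2k$ signed estimates $\{\tilde{\sigma}_j^+, -\tilde{\sigma}_j^-\}_{j \leq k}$ by magnitude.

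To achieve the claimed query complexity, the key observation is that the sketching matrices $R, S_1, S_2$ in Lemma~\ref{lem:spectral_est_lem} depend only on $k$, not on the particular rank $j$: a sketch that is good for rank-$k$ PSD approximation also suffices for any smaller rank, since restricting the rank of $Y$ only shrinks the feasible region of the optimization. Thus we compute $S_1 A R$, $S_2 A R$, and $S_1 A S_2^T$ just once, and obtain all $P_j$'s simultaneously by solving the same sketched optimization with different rank constraints on $Y$. This requires $O(k^2/\eps_0^6) = O(k^2/\eps^{12})$ non-adaptive queries; the $\log k$ factor in the final bound arises from a median-of-means boost ensuring that the sketch and all $2k$ estimates succeed simultaneously. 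The argument for $-A$ is symmetric, absorbed into the constant. For the adaptive improvement, we substitute Proposition~\ref{prop:adaptive_improvement_spectral_estimation} in place of Lemma~\ref{lem:spectral_est_lem}, which also shares sketches across rank constraints, yielding $O(k^2/\eps^8 \log k)$ queries.

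For part (i), any true eigenvalue $\lambda$ with $|\lambda| \geq |\lambda_k| + 2\eps\norm{A}{F}$ must be one of $\lambda_i^{\pm}$ for $i \leq k$, since by definition at most $k$ eigenvalues have magnitude larger than $|\lambda_k|$. The corresponding signed estimate has magnitude at least $|\lambda| - \eps\norm{A}{F} \geq |\lambda_k| + \eps\norm{A}{F}$, while every estimate whose true eigenvalue has magnitude at most $|\lambda_k|$ has estimated magnitude at most $|\lambda_k| + \eps\norm{A}{F}$. Sorting by magnitude thus places all the large true eigenvalues among the top $k$ outputs, and the permutation $\sigma$ is the natural matching. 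Part (ii) follows dually: each output is an estimate of some $\lambda_j^{\pm}$ with $j \leq k$, and since its magnitude is among the top $k$ of the candidate set, one checks that $|\lambda_j| \geq |\lambda_k| - \eps\norm{A}{F}$ using the fact that $\lambda_k$ itself has an approximation in the candidate set. The main obstacle is precisely this bookkeeping: propagating the additive errors from squared eigenvalues to the eigenvalues themselves through the square-root step (being careful at small magnitudes where we clip), and tracking constants in the sorting argument to land on the specific $2\eps$ and $\eps$ gaps claimed. A secondary subtlety is verifying that Lemma~\ref{lem:clarkson_woodruff_psd_result} is genuinely monotone in the rank parameter, so that a single sketch suffices for all ranks $j \leq k$.
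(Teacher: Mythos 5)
Your proposal takes essentially the same approach as the paper's proof: apply Lemma~\ref{lem:spectral_est_lem} with accuracy parameter $\Theta(\eps^2)$ and shared sketching matrices to estimate all of $\norm{A_{j,+}}{F}^2$ and $\norm{A_{j,-}}{F}^2$ for $j \leq k$, take a union bound (giving the $\log k$ factor), difference consecutive estimates and take square roots to recover signed eigenvalue approximations, and output the $k$ largest in magnitude; the adaptive improvement likewise substitutes Proposition~\ref{prop:adaptive_improvement_spectral_estimation}. Your explicit invocation of $|\sqrt{a}-\sqrt{b}| \leq \sqrt{|a-b|}$ and the monotonicity-in-rank observation are correct justifications of steps the paper leaves implicit.
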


\begin{proof}
We set $\delta=\frac{1}{20k}$ in Lemma~\ref{lem:spectral_est_lem} and use it to approximate $\norm{A_{1,+}}{F}^2,\ldots, \norm{A_{k,+}}{F}^2$, along with $\norm{A_{1,-}}{F}^2,\ldots, \norm{A_{k,-}}{F}^2$, each to within $(\eps^2/2) \norm{A}{F}^2$ additive error.  Note that we may use the same sketching matrices for each of these $2k$ tasks, and then take a union bound to obtain a failure probability of at most $0.1$.  Thus we require only $O(\frac{k^2}{\eps^{12}}\log k)$ queries in total. With an additional round of adaptivity, Proposition~\ref{prop:adaptive_improvement_spectral_estimation} reduces this bound to $O(\frac{k^2}{\eps^{8}}\log k).$

Let $\lambda_{i,+}$ be the $i^{\text{th}}$ largest positive eigenvalue of $A$ if it exists, and $0$ otherwise. Define $\lambda_{i,-}$ similarly.  Note that $\lambda_{i,+}^2 = \norm{A_{i,+}}{F}^2 - \norm{A_{i-1,+}}{F}^2$ for $i\geq 2$, and that $\lambda_{1,+}^2 = \norm{A_{1,+}}{F}^2 $.  This allows us to compute approximations $\tilde{\lambda}_{i,+} \geq 0$ such that  $|\tilde{\lambda}_{i,+}^2 - \lambda_{i,+}^2| \leq \eps^2 \norm{A}{F}^2$, and similarly for the $\lambda_{i,-}$'s with $\tilde{\lambda}_{i,-} \leq 0$.  Note that this bound implies $|\tilde{\lambda}_{i,+} - \lambda_{i,+}| \leq \eps\norm{A}{F}.$

Our algorithm then simply returns the $k$ largest magnitude elements of $\{\tilde{\lambda}_{1,+}, \ldots, \tilde{\lambda}_{k,+},\tilde{\lambda}_{1,-},\ldots, \tilde{\lambda}_{k,-}\}.$
\end{proof}

\section{Non-adaptive testers}
\subsection{Non-adaptive vector-matrix-vector queries}

We gave a lower bound for one-sided testers earlier in Theorem~\ref{thm:nonadaptive_one_sided_vmv_lower}.  Here we observe that the sketch of Andoni and Nguyen \cite{andoni2013eigenvalues} provides a matching upper bound.

\begin{prop}
\label{andoni_nguyen_corollary}
There is a one-sided non-adaptive \ptester{1} that makes $O(1/\epsilon^2)$ non-adaptive vector matrix-vector queries to $A$.
\end{prop}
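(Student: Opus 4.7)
The plan is to invoke the Andoni--Nguyen sketch directly, as already packaged in Proposition~\ref{prop:andoni_sketching_result}. Take $G \in \R^{d \times k}$ with i.i.d.\ $\mathcal{N}(0, 1/d)$ entries and $k = O(1/\eps)$. Form the sketch $G^T A G$ non-adaptively by issuing the $k^2 = O(1/\eps^2)$ vector-matrix-vector queries $g_i^T A g_j$ for $i,j \in [k]$ (where $g_1,\ldots,g_k$ are the columns of $G$). Then output \True\ if the $k \times k$ matrix $G^T A G$ is PSD, and \False\ otherwise. This is a post-processing step requiring no further queries; moreover, when $G^T A G$ has a negative eigenvector $v$, the vector $Gv$ serves as an explicit witness, since $(Gv)^T A (Gv) = v^T (G^T A G) v < 0$.

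For one-sided correctness on PSD inputs, observe that if $A \succeq 0$ then $(Gx)^T A (Gx) \geq 0$ for every $x$, so $G^T A G$ is PSD deterministically regardless of the draw of $G$. Hence the tester never rejects a PSD matrix, which is what is required of one-sided testers (this property even holds against an adaptive adversary with access to the tester's randomness, since the conclusion $G^T A G \succeq 0$ is deterministic given $A \succeq 0$).

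For completeness when $\lambda_{\min}(A) \leq -\eps \norm{A}{1}$, set $\alpha = \eps$ and apply Proposition~\ref{prop:andoni_sketching_result}: with $k = O(1/\eps)$ we obtain $\lambda_{\min}(G^T A G) < -\eps/2 < 0$ with constant probability, so the tester outputs \False\ with probability at least $2/3$ (standard amplification via independent repetition can push this above any desired constant at the cost of a constant factor in queries, still $O(1/\eps^2)$).

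There is no real obstacle: the only ingredient beyond counting queries is the spectral-preservation guarantee of the Gaussian sketch, which is already proved as Proposition~\ref{prop:andoni_sketching_result} (and ultimately attributable to Andoni--Nguyen). The query count matches the lower bound from Theorem~\ref{thm:nonadaptive_one_sided_vmv_lower} at $p=1$, so the tester is tight.
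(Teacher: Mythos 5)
Your proposal is correct and follows the paper's own route: apply Proposition~\ref{prop:andoni_sketching_result} with $k = O(1/\eps)$ Gaussian columns, form the $k\times k$ sketch $G^TAG$ with $k^2$ non-adaptive vector-matrix-vector queries, and reject iff the sketch is not PSD. You spell out the one-sidedness (deterministic since $A\succeq 0 \Rightarrow G^TAG \succeq 0$) and the witness $Gv$ more explicitly than the paper does, but the argument is the same.
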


\begin{proof}
We simply apply Proposition~\ref{prop:andoni_sketching_result} Note that the sketch is of the form $G^TAG$, where $G\in \R^{m\times d}$ with $m = O(1/\eps)$ in our case.  Each entry of $G^TAG$ of which there are $m^2$ can be computed with a single vector-matrix-vector query.
\end{proof}

\begin{corollary}
\label{cor:andoni_nguyen_corollary_p}
There is a one-sided non-adaptive \ptester{p} that makes $O(\frac{1}{\epsilon^2} d^{2 - 2/p})$ non-adaptive vector matrix-vector queries to $A$.
\end{corollary}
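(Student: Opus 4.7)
The plan is to reduce the \ptester{p} problem to the \ptester{1} problem and invoke Proposition~\ref{andoni_nguyen_corollary} with an appropriately rescaled parameter. This mirrors exactly the technique used earlier for Corollary~\ref{cor:adaptive_vmv_upper_p}.

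The key observation is the standard Schatten-norm comparison $\norm{A}{p} \geq d^{1/p - 1}\norm{A}{1}$, which holds for any $d\times d$ matrix and any $p\geq 1$ (by Hölder applied to the vector of singular values). Equivalently, $\norm{A}{1} \leq d^{1 - 1/p}\norm{A}{p}$. Consequently, if $A$ is $\epsilon$-far from PSD in the Schatten-$p$ sense, i.e.\ $\lambda_{\min}(A) \leq -\epsilon\norm{A}{p}$, then
\begin{equation*}
\lambda_{\min}(A) \leq -\epsilon\norm{A}{p} \leq -\frac{\epsilon}{d^{1-1/p}}\norm{A}{1},
\end{equation*}
so $A$ is $\epsilon'$-far from PSD in the Schatten-$1$ sense with $\epsilon' := \epsilon\,d^{1/p - 1}$.

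Given this reduction, I would simply run the non-adaptive one-sided \ptester{1} of Proposition~\ref{andoni_nguyen_corollary} with parameter $\epsilon'$ in place of $\epsilon$. Its query complexity is $O(1/(\epsilon')^2) = O(d^{2-2/p}/\epsilon^2)$, which is exactly the desired bound. On PSD inputs the \ptester{1} outputs \True\ with probability $1$ (since it is one-sided), so the reduction preserves the one-sided error guarantee. On inputs that are $\epsilon$-far in Schatten-$p$ distance from the PSD cone, they are in particular $\epsilon'$-far in Schatten-$1$ distance, so correctness in the \False\ case transfers directly. Non-adaptivity is preserved since we use the identical fixed bilinear sketch $G^T A G$ from Proposition~\ref{prop:andoni_sketching_result}; only the threshold parameter for analysis changes.

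There is no real obstacle here: the entire argument is a one-line norm comparison followed by invocation of the already-established Proposition~\ref{andoni_nguyen_corollary}. The only thing worth flagging is that the reduction is oblivious to $\norm{A}{p}$ itself — we never need to estimate it, because the \ptester{1} of Proposition~\ref{andoni_nguyen_corollary} is defined with respect to $\norm{A}{1}$, and the comparison inequality guarantees that $\epsilon$-farness in Schatten-$p$ implies $\epsilon'$-farness in Schatten-$1$ regardless of the specific scale of $A$.
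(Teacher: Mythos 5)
Your proof is correct and uses exactly the same reduction the paper does: apply the Schatten-norm comparison $\norm{A}{p} \geq d^{1/p-1}\norm{A}{1}$ to reduce to the $\ell_1$-tester of Proposition~\ref{andoni_nguyen_corollary} with parameter $\epsilon' = \epsilon d^{1/p-1}$. You simply spell out the intermediate steps (that $\epsilon$-farness in Schatten-$p$ implies $\epsilon'$-farness in Schatten-$1$, and that one-sidedness and non-adaptivity are preserved) that the paper leaves implicit.
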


\begin{proof}
Apply the previous proposition along with the bound $\norm{A}{p} \geq d^{1/p - 1} \norm{A}{1}.$
\end{proof}




\subsection{Non-adaptive matrix-vector queries}
As a simple corollary of the algorithm given by Corollary~\ref{cor:andoni_nguyen_corollary_p} we have the following.
\begin{prop}
\label{prop:nonadaptive_one_sided_mv_upper}
There exists a one-sided non-adaptive tester making $O(\frac1\eps d^{1 - 1/p})$ matrix-vector queries.
\end{prop}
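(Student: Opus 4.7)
The plan is to simulate the non-adaptive vector-matrix-vector tester from Corollary~\ref{cor:andoni_nguyen_corollary_p} inside the matrix-vector model, observing that all of its queries are organized into a bilinear form $G^T A G$ and that the action of $A$ on each column of $G$ can be harvested with a single matrix-vector query.

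First, I would recall the construction underlying Corollary~\ref{cor:andoni_nguyen_corollary_p}: by Proposition~\ref{prop:andoni_sketching_result} applied with $\alpha = \eps\, d^{1/p-1}$, if we draw $G \in \R^{d \times m}$ with i.i.d.\ $\mathcal{N}(0, 1/d)$ entries and $m = O\bigl(\frac{1}{\eps} d^{1-1/p}\bigr)$, then the sketched matrix $G^T A G$ is PSD whenever $A$ is, and has an eigenvalue at most $-\alpha/2$ (with large enough constant probability) whenever $\lambda_{\min}(A) \leq -\eps \norm{A}{p}$. The one-sided tester simply outputs \True\,iff $G^T A G$ is PSD (which is verifiable from its entries with no further queries).

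Second, I would observe that all the information needed to form $G^T A G$ is contained in the matrix $AG \in \R^{d \times m}$: the entry $(G^T A G)_{ij}$ is just $\langle v_i, A v_j\rangle$ where $v_j$ denotes the $j$-th column of $G$. Since the columns $v_1, \ldots, v_m$ are fixed in advance (they are our random Gaussian columns), we may issue the $m$ non-adaptive matrix-vector queries $v_1, \ldots, v_m$ to receive $A v_1, \ldots, A v_m$, assemble them into $AG$, and then compute $G^T A G = G^T (AG)$ with no additional queries. The total query cost is $m = O\bigl(\frac{1}{\eps} d^{1-1/p}\bigr)$ matrix-vector queries, and the testing decision and its correctness guarantees are inherited verbatim from Corollary~\ref{cor:andoni_nguyen_corollary_p}; in particular, one-sidedness is preserved because $G^T A G \succeq 0$ whenever $A \succeq 0$.

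There is no real obstacle here: the proposition is essentially a repackaging observation, since a matrix-vector query strictly dominates a batch of vector-matrix-vector queries sharing a common right-hand side. The only thing worth double-checking is that the column count $m$ in the underlying Andoni--Nguyen sketch indeed matches $O(\frac{1}{\eps} d^{1-1/p})$ when we scale from the $\ell_1$ guarantee ($\norm{A}{1}$-gap $\alpha$) to the $\ell_p$ guarantee via $\norm{A}{p} \geq d^{1/p-1}\norm{A}{1}$, exactly as in the derivation of Corollary~\ref{cor:andoni_nguyen_corollary_p}.
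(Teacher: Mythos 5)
Your proposal is correct and follows exactly the same approach as the paper's (very terse) proof: the paper simply notes that a $k\times k$ bilinear sketch may be simulated with $k$ matrix-vector queries, and you spell out the same observation in detail, including the correct scaling $m = O(\frac{1}{\eps} d^{1-1/p})$ from Proposition~\ref{prop:andoni_sketching_result} via $\norm{A}{p} \geq d^{1/p-1}\norm{A}{1}$.
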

\begin{proof}
Simply note that a $k\times k$ bilinear sketch may be simulated with $k$ matrix-vector queries.
\end{proof}

We next show that this bound is tight. While we consider the case where the tester queries the standard basis vectors, this is done essentially without loss of generality as any non-adaptive tester may be implemented by querying on an orthonormal set.

\begin{prop}
\label{prop:matrix_vector_witness}
Suppose that a one sided matrix-vector tester queries on the standard basis vectors $e_1,\ldots, e_k$ and outputs \False.  Let $U$ be the top $k\times k$ submatrix of $[Ae_1, \ldots , Ae_k].$ Then if $U$ is non-singular, there must exist a ``witness vector" $v\in \spn(x_1,\ldots, x_k)$ such that $v^T A v < 0$.
\end{prop}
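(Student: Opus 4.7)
The plan is to prove the contrapositive: assuming no witness vector exists in $\spn(e_1,\ldots,e_k)$, I would construct a PSD matrix $A'$ that is indistinguishable from $A$ under the queries $e_1,\ldots,e_k$, which would force a one-sided non-adaptive tester to output $\True$ on $A$.

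First I would translate the ``no witness'' hypothesis into a statement about $U$. Using symmetry of $A$, for any $v = \sum_i \alpha_i e_i$ in the queried span we have $v^T A v = \alpha^T U \alpha$ where $U_{ij} = e_i^T A e_j$ is exactly the top $k\times k$ submatrix. Hence the absence of a witness says $U \succeq 0$, and combined with the non-singularity hypothesis this upgrades to $U \succ 0$.

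Next I would build a PSD completion $A'$ that matches the queries. Decompose $\R^d$ as $\spn(e_1,\ldots,e_k) \oplus \spn(e_1,\ldots,e_k)^\perp$, and write
\[
    A = \begin{pmatrix} U & W^T \\ W & C \end{pmatrix},
\]
where $W \in \R^{(d-k)\times k}$ collects the bottom $d-k$ coordinates of the observed vectors $Ae_1,\ldots,Ae_k$, and $C$ is the unseen block. Define
\[
    A' := \begin{pmatrix} U & W^T \\ W & WU^{-1}W^T + I \end{pmatrix}.
\]
The Schur complement criterion applies since $U \succ 0$: the complement in $A'$ is $(WU^{-1}W^T + I) - WU^{-1}W^T = I \succ 0$, so $A' \succeq 0$.

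Finally, I would use the fact that by construction $A'e_i = Ae_i$ for every $i \leq k$, so $A$ and $A'$ produce identical responses to the queries. Since the tester is non-adaptive, its output on a given coin setting depends only on the random coins and the observed responses, hence agrees on $A$ and $A'$. The one-sided guarantee (which holds against an adversary seeing the coins) forces the tester to output $\True$ on the PSD matrix $A'$, and therefore also on $A$, contradicting the assumption that it outputs $\False$. The only nontrivial step is the PSD completion, which is a direct Schur complement computation once the non-singularity hypothesis promotes $U \succeq 0$ to $U \succ 0$.
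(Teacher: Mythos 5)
Your proof is correct and follows the same overall strategy as the paper: assume no witness exists in the query span, show this forces $U \succ 0$, and then exhibit a PSD completion of the observed data so that the one-sided guarantee forces a $\True$ output. The difference is in the completion itself. The paper uses $\begin{pmatrix} U & W^T \\ W & \lambda I \end{pmatrix}$ with a sufficiently large $\lambda$, and proves positive definiteness by bounding the quadratic form below by $\|v\|^2\sigma_{\min}(U) + \lambda\|w\|^2 - 2\|v\|\|w\|\sigma_{\max}(W)$ and choosing $\lambda$ large enough. Your choice $\begin{pmatrix} U & W^T \\ W & WU^{-1}W^T + I \end{pmatrix}$ is the canonical Schur-complement completion, and the verification $(WU^{-1}W^T + I) - WU^{-1}W^T = I \succ 0$ is immediate. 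Both are valid; yours is arguably cleaner since it avoids the auxiliary choice of $\lambda$ and the accompanying singular-value estimate, while the paper's version has the mild advantage of not requiring $U^{-1}$ explicitly (though both proofs rely on $U \succ 0$, so this is cosmetic).
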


\begin{proof}
Let $Q$ be the matrix with columns $Ae_i,$ and decompose it as 
\begin{equation}
    Q = \begin{pmatrix}
        U\\
        B
    \end{pmatrix}^T
\end{equation} where $U\in \R^{k\times k}$ and $B\in \R^{d\times (d-k)}.$ Suppose that there does not exist a $v$ as in the statement of the proposition. Note that this implies that $U$ is PSD, and in fact positive definite by the assumption that $U$ was non-singular. Now consider the block matrix 
\begin{equation}
\widetilde{A}s = \begin{pmatrix}
U & B^T \\
B & \lambda I
\end{pmatrix}
\end{equation}
for some choice of $\lambda>0.$ For arbitrary $v$ and $w$ of the appropriate dimensions, we have

\begin{align}
\begin{pmatrix}
v & w
\end{pmatrix}
\begin{pmatrix}
U & B^T \\
B & \lambda I
\end{pmatrix}
\begin{pmatrix}
v \\
w
\end{pmatrix}
&= v^T U v + 2v^T B w + \lambda w^T w\\
&\geq \norm{v}{}^2 \sigma_{\min}(U) + \lambda \norm{w}{}^2 - 2\norm{v}{}\norm{w}{}\sigma_{\max}(B).
\end{align}
Since $\sigma_{\min}(U)\neq 0$ this expression viewed as a quadratic form in $\norm{v}{}$ and $\norm{w}{}$ is positive definite for large enough $\lambda.$  This implies that $\widetilde{A}$ is positive definite as well.  Since $\widetilde{A} e_i = A e_i$ by construction, this shows that the queries are consistent with a PSD matrix.  So a one-sided tester that cannot produce a witness vector in this case must not output \False.
\end{proof}

\begin{thm}
\label{thm:non_adaptive_mv_general_lower}
Set $D = \diag(-\lambda, 1, \ldots, 1),$ let $S$ be a random orthogonal matrix, and take $A = S^T D S$. In the matrix-vector model, a one-sided non-adaptive tester must make at least $\frac{1}{2}\frac{d}{1+\lambda}$ queries to be correct on this distribution with $2/3$ probability.
\end{thm}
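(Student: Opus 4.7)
The plan is to combine Proposition~\ref{prop:matrix_vector_witness} with a Markov inequality for the squared norm of the projection of a uniformly random unit vector onto a fixed subspace. First I would reduce to the case of standard-basis queries: a non-adaptive tester's $k$ queries may be orthogonalized (applying Gram--Schmidt to the queries and correspondingly transforming the responses is a reversible linear operation), and then by rotation invariance of the Haar distribution of $S$ (hence of the distribution of $A = S^T D S$), one may assume without loss of generality that the queries are $e_1, \ldots, e_k$. For a randomized tester, this is handled by conditioning on its random coins and averaging at the end. Since $S$ is Haar-distributed, the top-left $k \times k$ submatrix $U$ of $[Ae_1, \ldots, Ae_k]$ is almost surely non-singular, so Proposition~\ref{prop:matrix_vector_witness} applies: whenever the tester outputs \False, there must exist a witness $v \in V := \spn(e_1, \ldots, e_k)$ with $v^T A v < 0$.

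Next I would characterize when such a witness can exist. Writing $w = S^T e_1$ for the unit eigenvector of $A$ with eigenvalue $-\lambda$, one has $D = I - (1+\lambda) e_1 e_1^T$, so $A = I - (1+\lambda) w w^T$ and
\[
v^T A v = \|v\|^2 - (1+\lambda)(v^T w)^2.
\]
Thus $v^T A v < 0$ if and only if $(v^T w)^2 / \|v\|^2 > 1/(1+\lambda)$. Maximizing $(v^T w)^2/\|v\|^2$ over $v \in V$ gives $\|\Pi_V w\|^2$, where $\Pi_V$ is orthogonal projection onto $V$; hence a witness exists in $V$ precisely when $\|\Pi_V w\|^2 > 1/(1+\lambda)$.

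Finally, since $S$ is Haar-random on $O(d)$, the vector $w = S^T e_1$ is uniformly distributed on $S^{d-1}$ and, after conditioning on the tester's coins, is independent of $V$, so $\E \|\Pi_V w\|^2 = k/d$. Markov's inequality then yields
\[
\Pr\!\left(\|\Pi_V w\|^2 > \frac{1}{1+\lambda}\right) \leq \frac{k(1+\lambda)}{d}.
\]
Averaging over the tester's randomness preserves this upper bound on the probability of outputting \False. If $k < \frac{1}{2} \cdot \frac{d}{1+\lambda}$, this probability is strictly less than $1/2 < 2/3$, contradicting the hypothesis that the tester outputs \False\ with probability at least $2/3$ on non-PSD inputs. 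The main obstacle is the reduction to standard-basis queries via rotation invariance (together with verifying that $U$ is almost surely non-singular so that Proposition~\ref{prop:matrix_vector_witness} truly applies); once that is done the remaining expectation and Markov computations are immediate.
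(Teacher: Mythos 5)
Your proof is correct and follows essentially the same approach as the paper: reduce to standard-basis queries via rotation invariance, invoke Proposition~\ref{prop:matrix_vector_witness} (after noting $U$ is almost surely non-singular), observe that a witness exists iff $\|\Pi_V w\|^2 > 1/(1+\lambda)$, and apply Markov's inequality using $\E\|\Pi_V w\|^2 = k/d$. Your version is slightly more explicit about the reduction and the algebraic characterization of the witness condition, but the argument and the resulting bound are identical.
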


\begin{proof}
Given this distribution we may assume without loss of generality that the tester queries on $e_1,\ldots e_k,$ whose span we call $E_k.$ Let $u$ denote the $-\lambda$ eigen-direction of $A,$ which is distributed uniformly over $S^{d-1}.$ For unit vectors $x$, the quadratic form associated to $A$ is negative exactly when $\inner{x}{u}^2 \geq \frac{1}{1+\lambda}.$  Also the $U$ as in Proposition~\ref{prop:matrix_vector_witness} is non-singular with probability $1.$ In this case, by Proposition~\ref{prop:matrix_vector_witness} the tester can only succeed if $\norm{\Pi_{E_k}u}{}^2 \geq \frac{1}{1+\lambda}.$ On the other hand $\E \norm{\Pi_{E_k}u}{}^2 = k/d$, so by Markov, $\norm{\Pi_{E_k}u}{}^2 \leq 2 k/d$ with probability at least $1/2.$ Therefore a tester that succeeds with $2/3$ probability must have $2k/d \geq 1/(1+\lambda)$.
\end{proof}

\begin{corollary}
\label{cor:mv_one_sided_non_adaptive_lower}
In the matrix-vector model, a one-sided non-adaptive \ptester{p} must make at least $\Omega(\frac1\eps d^{1-1/p})$ queries.
\end{corollary}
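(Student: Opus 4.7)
The plan is to derive the corollary directly from Theorem~\ref{thm:non_adaptive_mv_general_lower} by picking the parameter $\lambda$ so that the hard distribution on matrices with spectrum $(-\lambda, 1, \ldots, 1)$ becomes $\eps$-far from PSD in Schatten $p$-norm.

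Concretely, for such a matrix $A$ we have
\[
\|A\|_p^p = \lambda^p + (d-1).
\]
To fall into the ``far from PSD'' regime of an \ptester{p} we need $\lambda \geq \eps \|A\|_p$, i.e.\ $\lambda^p \geq \eps^p (\lambda^p + d-1)$. Choosing $\lambda = c\,\eps\, d^{1/p}$ for a suitable absolute constant $c$ and restricting to $d \geq \eps^{-p}$ (so that the $d-1$ term dominates inside the norm), we get $\|A\|_p = \Theta(d^{1/p})$ and hence $\lambda \geq \eps \|A\|_p$, as required. So any one-sided non-adaptive \ptester{p} must, with probability $2/3$, output \False\ on the distribution of Theorem~\ref{thm:non_adaptive_mv_general_lower} for this value of $\lambda$.

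Applying that theorem then yields a lower bound of
\[
\frac{1}{2} \cdot \frac{d}{1+\lambda} = \Omega\!\left(\frac{d}{1 + \eps\, d^{1/p}}\right).
\]
In the parameter range $d \geq \eps^{-p}$ we have $\eps\, d^{1/p} \geq 1$, so the denominator is $\Theta(\eps\, d^{1/p})$ and the bound simplifies to $\Omega(d^{1-1/p}/\eps)$, which is exactly the claim. (For the very small-dimension regime $d < \eps^{-p}$ the claimed bound $d^{1-1/p}/\eps$ is at most $O(1)$ and so is trivial.)

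There is no real obstacle here; the only sanity check worth recording is that with $c$ chosen small, the Schatten-$p$ norm of $A$ is indeed dominated by the $d-1$ ones rather than by $\lambda^p$, so the ``far from PSD'' condition $\lambda \geq \eps \|A\|_p$ reduces to the clean choice $\lambda = \Theta(\eps d^{1/p})$ used above, and then the arithmetic in Theorem~\ref{thm:non_adaptive_mv_general_lower} yields the stated $\Omega(d^{1-1/p}/\eps)$ bound.
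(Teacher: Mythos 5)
Your proposal is correct and takes exactly the same route as the paper's proof: apply Theorem~\ref{thm:non_adaptive_mv_general_lower} with $\lambda = \Theta(\eps d^{1/p})$ and simplify. Two small remarks on your side comments: the constant $c$ in $\lambda = c\eps d^{1/p}$ must be taken moderately large (e.g.\ $c = 2$, so that $c^p(1-\eps^p) \geq 1 - 1/d$ and hence $\lambda \geq \eps\|A\|_p$), not small; and in the regime $d < \eps^{-p}$ the quantity $d^{1-1/p}/\eps$ actually \emph{exceeds} $d$ rather than being $O(1)$, which is the regime where the stated lower bound is unattainable (any matrix-vector algorithm makes at most $d$ queries), so the corollary is implicitly restricted to $d \gtrsim \eps^{-p}$ --- precisely the regime your argument covers.
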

\begin{proof}
Apply Theorem~\ref{thm:non_adaptive_mv_general_lower} with $\lambda = \eps d^{1/p}.$
\end{proof}

\section{Conclusion and Open Problems}

\td{Modify these}

We gave a series of tight bounds for PSD-testing in both the matrix-vector and vector-matrix-vector models. We provided tight bounds as well as a separation between one and two-sided testers in the latter model.   There are a number of additional questions that may yield interesting future work.

\begin{itemize}
\item Our adaptive vector-matrix-vector algorithm for $p=1$ uses $\Omega(1/\eps)$ rounds of adaptivity, but this may not always be desirable in practice, since the queries cannot be run in parallel.  Are there good algorithms that use less adaptivity?  What is the optimal trade-off between query complexity and the number of rounds of adaptivity?


\item One could modify our testing model and consider testers which should output $\False$\,whenever the $\ell_p$ norm of the negative eigenvalues is at least an $\epsilon$ fraction of the $\ell_p$ norm of positive eigenvalues.  Is it possible to give tight bounds for this problem in the models that we considered?

\item Is it possible to use the ideas behind our two-sided bilinear sketch to give better bounds for spectral estimation with additive Frobenius error? 
\end{itemize}

\section{Acknowledgements}
The authors would like to thank Sushant Sachdeva for pointing out that our analysis for matrix-vector queries was similar to the application of \cite{axelsson1986rate} in \cite{spielman2009note}.
\printbibliography
\end{document}